\newtheorem{theorem} {Theorem}
\newtheorem{lemma}{Lemma}
\newtheorem{claim}[lemma]{Claim}
\newtheorem{corollary}{Corollary}
\newtheorem{obs}[lemma]{Observation}
\newtheorem{fact}[lemma]{Fact}
\numberwithin{lemma}{section}
\newcommand{\E}{{\mathbb{E}}}
\newcommand{\R}{{\mathbb{R}}}
\newcommand{\eps}{\varepsilon}
\let\epsilon\varepsilon
\newcommand{\rank}{\operatorname{R}}
\newcommand{\V}{\operatorname{Var}}
\newcommand{\calB}{\mathcal{B}}
\newcommand{\calC}{\mathcal{C}}
\newcommand{\calU}{\mathcal{U}}
\newcommand{\Sstar}{S^{\ast}}
\DeclareMathOperator{\err}{Err}
\newcommand{\schedulestate}{C}
\newcommand{\mycase}[1]{\smallskip\noindent{\bf Case #1:}}
\begin{document}
	\title{Relative Error Streaming Quantiles\footnote{The research is performed in close collaboration with DataSeketches \url{https://datasketches.apache.org/}, the Apache open source project for streaming data analytics.}}
	
	\author[1]{Graham Cormode\thanks{Supported by European Research Council grant ERC-2014-CoG 647557.}}
	\author[2]{Zohar Karnin}
	\author[3]{Edo Liberty}
	\author[4]{Justin Thaler\thanks{Supported by NSF SPX award CCF-1918989, and NSF CAREER award CCF-1845125.}}
	\author[5]{Pavel Vesel{\'{y}}\thanks{Part of the work done while the author was at the University of Warwick
			and supported by European Research Council grant ERC-2014-CoG 647557.
			Partially supported by the project 19-27871X of GA ČR and by Center for Foundations of Modern Computer Science (Charles University project UNCE/SCI/004).}}
	
	\affil[1]{University of Warwick, \texttt{G.Cormode@warwick.ac.uk}}
	\affil[2]{Amazon, \texttt{zkarnin@gmail.com}}
	\affil[3]{Pinecone, \texttt{edo@edoliberty.com}}
	\affil[4]{Georgetown University, \texttt{justin.thaler@georgetown.edu}}
	\affil[5]{Charles University, \texttt{vesely@iuuk.mff.cuni.cz}}
	
	\date{}
	
	\maketitle
	
\begin{abstract}
	Estimating ranks, quantiles, and distributions 
	over streaming data is a central task 
	in data analysis and monitoring.
	Given a stream of $n$ items from a data universe
	equipped with a total order, the task is to compute a sketch (data structure) of size
	polylogarithmic in $n$.
	Given the sketch and a query item $y$, one should be able to approximate its rank in the stream, i.e., the number of stream elements smaller than or equal to $y$.
	
	Most works to date focused on additive $\eps n$ error approximation, culminating in the KLL sketch that achieved optimal asymptotic behavior. 
	This paper investigates \emph{multiplicative} $(1\pm\eps)$-error approximations to the rank. 
	Practical motivation for multiplicative error stems from demands to understand the tails of distributions,
	and hence for sketches to be more accurate near extreme values. 
	
	The most space-efficient algorithms due to prior work store either $O(\log(\eps^2 n)/\eps^2)$ or
	$O(\log^3(\eps n)/\eps)$ universe items. 
	We present a randomized sketch storing $O(\log^{1.5}(\eps n)/\eps)$ items 
	that can $(1\pm\eps)$-approximate the rank of each universe item with high constant probability;
	this space bound is within an $O(\sqrt{\log(\eps n)})$ factor of optimal. Our algorithm does not require 
	 prior knowledge of the stream length and is fully mergeable, rendering it suitable for parallel and distributed computing environments. 
	
\end{abstract}

\sloppy 
\section{Introduction}
Understanding the distribution of data is a fundamental task in data monitoring and analysis.
In many settings, we want to understand the cumulative distribution function (CDF)
of a large number of observations, for instance, to identify anomalies.
In other words, we would like to track the median, percentiles, and more generally quantiles of a massive input in a small space, without storing all the observations.
Although memory constraints make an exact computation of such order statistics
impossible~\cite{munro1980selection}, most applications can be satisfied with approximating the quantiles, which also yields a compact function with a bounded distance from the true CDF.

The problem of streaming quantile approximation captures this task in the context
of massive or distributed data\-sets.
Let $\sigma = (x_1,\ldots,x_n)$ be a stream of items, all drawn from a data universe $\mathcal{U}$
equipped with a total order.
For any $y \in \mathcal{U}$, let $\rank(y; \sigma) = \big|\big\{i\in \{1,\dots,n\} \ | \ x_i \leq y\big\}\big|$ be the rank of $y$ in the stream.  
When $\sigma$ is clear from context, we write $\rank(y)$.
The objective is to process the stream in one pass while storing a small number of universe items and $O(\log n)$-bit variables (e.g., counters), and then use those to approximate $\rank(y)$ for any $y \in \mathcal{U}$. 
A guarantee for an approximation $\hat{\rank}(y)$ is \emph{additive} if $|\hat{\rank}(y) - \rank(y)| \le \eps n$, 
and \emph{multiplicative} or \emph{relative} if $|\hat{\rank}(y) - \rank(y)| \le \eps \rank(y)$.

If the algorithm is randomized, the desired error guarantee holds for each item $y$ with high probability that can be specified upfront and that affects the space bound.
(On the other hand, the space bounds for all known algorithms hold in the worst case over the inputs and random bits.)
We remark that estimating ranks immediately yields approximate quantiles, and vice versa,
with a similar error guarantee.
More precisely, for $\phi\in [0, 1]$, we define a $\phi$-quantile as the $\lfloor\phi n\rfloor$-th smallest item in $\sigma$.
On quantile query $\phi$, the algorithm should return a $\phi'$-quantile such that $|\phi'-\phi|\le \epsilon$
for the additive error and $|\phi'-\phi|\le \epsilon\cdot \phi$ for the multiplicative error.

We stress that we do not assume any particular data distribution or that the stream is randomly-ordered, that is, we focus on worst-case inputs. However, we assume the input is independent of the random bits used by the algorithm, i.e., we do not aim to achieve adversarial robustness of randomized algorithms; cf.~\cite{Ben-EliezerJWY22}.

A long line of work has focused on achieving additive error guarantees~\cite{pohl1969minimum, agrawal1995one, manku1998approximate, poosala1999approximate, greenwald2001space, arasu2004approximate, ganguly2007nearly, agarwal2013mergeable, felber2015randomized, karnin2016optimal}.
However, additive error is not appropriate for many applications. 
Indeed, often the primary purpose of computing quantiles is to understand the tails of the data distribution.
When $\rank(y) \ll n$, a multiplicative guarantee is much more accurate and thus harder to obtain.
As pointed out by Cormode et al.~\cite{Cormode:2005:ECB:1053724.1054027}, a solution to this problem would also yield high accuracy when $n - \rank(y) \ll n$,
by running the same algorithm with the reversed total ordering (simply negating the comparator).

A quintessential application that demands relative error is monitoring network latencies. In practice, one often tracks response time percentiles $50$, $90$, $99$, $99.9$, etc. 
This is because latencies are heavily long-tailed. For example, Masson et al.~\cite{DBLP:journals/pvldb/MassonRL19} report that for web response times,
the 98.5th percentile can be as small as 2 seconds while the 99.5th percentile can be as large as 20 seconds. 
These unusually long response times affect network dynamics~\cite{Cormode:2005:ECB:1053724.1054027} and are problematic for users.
Furthermore, as argued by Tene in his talk about measuring latency~\cite{tene2015_latency_talk},
one needs to look at extreme percentiles such as $99.995$ to determine the latency 
such that only about $1\%$ of users experience a larger latency during a web session with several page loads. 
Hence, highly accurate rank approximations are required for items $y$ whose rank is very
large ($n - \rank(y) \ll n$); 
this is precisely the requirement captured by the multiplicative error
guarantee.

Achieving multiplicative guarantees is known to be \emph{strictly} harder than additive ones. 
There are randomized comparison-based additive error algorithms that store just $\Theta(\eps^{-1})$ items for constant failure probability~\cite{karnin2016optimal}, which is optimal. In particular, to achieve additive error, the number of items stored may be independent of the stream length $n$.
In contrast, any algorithm achieving multiplicative error must store $\Omega(\eps^{-1}\cdot \log(\eps n))$ items (see \cite[Theorem 2]{Cormode:2005:ECB:1053724.1054027} and Appendix~\ref{app:lower}).%
\footnote{Intuitively, the reason additive-error sketches can achieve space independent of the stream length
is because they can take a subsample of the stream of size about $\Theta(\eps^{-2})$ and then
sketch the subsample. For any fixed item, the additive error to its rank introduced by sampling is at most $\eps n$ 
with high probability. When multiplicative error is required, one cannot subsample the input: for low-ranked items, the multiplicative
error introduced by sampling will, with high probability, not be bounded by any constant.}

The study of the relative-error (rank) guarantee was initiated by Gupta and Zane~\cite{Gupta:2003:CIL:644108.644150},
and the best known algorithms achieving this guarantee are as follows. Zhang et al. \cite{zhang2006space} give a randomized algorithm storing
$O(\eps^{-2}\cdot \log( \eps^2 n))$ universe items. This is essentially a $\eps^{-1}$ factor away from the aforementioned lower bound.
There is also a deterministic algorithm of Cormode et al. \cite{cormode2006space} that
stores $O( \eps^{-1} \cdot \log(\eps n) \cdot \log |\mathcal{U}|)$ items.
However, this algorithm requires prior knowledge of the data universe $\mathcal{U}$
(since it builds a binary tree over $\mathcal{U}$), and is inapplicable when
$\mathcal{U}$ is huge or even unbounded (e.g., if the data can take arbitrary real values).
Finally, Zhang and Wang \cite{zhangwang} designed a deterministic algorithm requiring
$O(\eps^{-1}\cdot \log^3(\eps n))$ space. Recent work of Cormode and Vesel{\'y} \cite{cormode2019tight}
proves an $\Omega(\eps^{-1}\cdot \log^2(\eps n))$ lower bound for deterministic comparison-based
algorithms, which is within a $\log(\eps n)$ factor of Zhang and Wang's upper bound.

Despite both the practical and theoretical importance of multiplicative error (which is arguably an even more natural goal than additive error),
there has been no progress on upper bounds, i.e., no new algorithms, since 2007.

\paragraph{Our streaming result.}
In this work, we give a randomized algorithm that maintains the optimal linear dependence on $1/\eps$ achieved by Zhang and Wang,
with a significantly improved dependence on the stream length.
Namely, we design a one-pass streaming algorithm that given $\eps > 0$,
computes a sketch consisting of
$O\left( \eps^{-1}\cdot \log^{1.5} (\eps n)\right)$
universe items, from which one can derive rank or quantile estimates satisfying the relative error
guarantee with constant probability (see Theorem~\ref{thm:mergeability-full-intro} for a more precise statement).
Ours is the first algorithm to be strictly more space efficient
than \emph{any} deterministic comparison-based algorithm (owing to the $\Omega(\eps^{-1} \log^2(\eps n))$ lower bound in \cite{cormode2019tight}) and is within an $O(\sqrt{\log(\eps n)})$ factor of the known 
lower bound for randomized algorithms achieving multiplicative error.
Furthermore, it only accesses items through comparisons, i.e., is comparison-based,
rendering it suitable, e.g., for floating-point numbers or strings ordered lexicographically.
Finally, our algorithm processes the input stream efficiently,
namely, its amortized update time is a logarithm of the space bound, i.e.,
$O\left(\log (\eps^{-1}) + \log\log (n) \right)$;
see Section~\ref{s:updatetime} for details.

\paragraph{Mergeability.} 
The ability to merge sketches of different streams to get an accurate sketch for the concatenation of the streams is highly significant both in theory~\cite{agarwal2013mergeable} and in practice~\cite{datasketches}.
Such mergeable summaries
enable trivial, automatic parallelization and distribution of processing massive data sets, 
by splitting the data up into 
pieces, summarizing each piece separately, and then merging the results in an arbitrary way. 
We say that a sketch is \emph{fully mergeable} if building it using any sequence of merge operations (executed on singleton items) leads
to the same guarantees as
if the entire data set had been processed as a single stream (i.e.,
always merging the sketch with one item).
We show that our sketch satisfies this strong definition of mergeability.

The following theorem is the main result of this paper.
We stress that our algorithm, which we call ReqSketch, does \emph{not} require any advance knowledge about $n$, the total size of the
input, which indeed may not be available in many applications.\footnote{A proof-of-concept Python implementation of our algorithm is available at GitHub:
\url{https://github.com/edoliberty/streaming-quantiles/blob/master/relativeErrorSketch.py}.
A production-quality implementation of ReqSketch is available in the Apache DataSketches library at \url{https://datasketches.apache.org/}.
}

\begin{restatable}{theorem}{thmintro}
\label{thm:mergeability-full-intro}
For any parameters $0 < \delta \le 0.5$ and $0 < \eps \le 1$,
there is a randomized, comparison-based, one-pass streaming algorithm that,
when processing a data stream consisting
of $n$ items from a totally-ordered universe $\mathcal{U}$,  produces a summary $S$ satisfying the following property. 
Given $S$, for any $y \in \mathcal{U}$ one can derive an estimate
$\hat{\rank}(y)$ of $\rank(y)$ such that
$$ \Pr\bigg[ |\hat{\rank}(y) - \rank(y)| > \eps \rank(y) \bigg] < \delta\,,$$
where the probability is over the internal randomness of the streaming algorithm.
The size of $S$ in memory words\footnote{
	A memory word can store any universe item or an integer with $O(\log(n + |\mathcal{U}|))$ bits. We express all the space bounds in memory words.
} is
$$O\left(\eps^{-1}\cdot \log^{1.5}(\eps n)\cdot \sqrt{\log\frac1\delta}\right)\,.$$
Moreover,
the summary produced is fully mergeable.
\end{restatable}

\paragraph{All-quantiles approximation.}
As a straightforward corollary of Theorem \ref{thm:mergeability-full-intro}, we obtain a space-efficient algorithm whose estimates
are simultaneously accurate for \emph{all} $y \in \mathcal{U}$ with high probability.
The proof is a standard use of the union bound combined
with an epsilon-net argument; we include the proof in Appendix \ref{app:corollaryintro}. 

\begin{restatable}[All-Quantiles Approximation]{corollary}{corintro}
\label{corollaryintro}
The error bound from Theorem \ref{thm:mergeability-full-intro} holds for all $y \in \mathcal{U}$ simultaneously
with probability $1-\delta$ when the size of the sketch in memory words is
$$O\left( \eps^{-1}\cdot \log^{1.5} (\eps n) \cdot \sqrt{\log\left(\frac{\log(\eps n)}{\eps \delta}\right)}\right)\,.$$
\end{restatable}

\paragraph{Technical overview.}
A starting point of the design of our algorithm is the KLL sketch~\cite{karnin2016optimal} that achieves 
optimal accuracy-space trade-off (of randomized algorithms) for the additive error guarantee. 
The basic building block of the algorithm is a buffer, called a \emph{compactor}, that receives an input stream of $n$ items and outputs a stream of at most $n/2$ items, meant to ``approximate'' the input stream. The buffer simply stores items and once it is full, we sort the buffer, output all items stored at either odd or even indexes (with odd vs.\ even selected via an unbiased coin flip), and clear the contents of the buffer---this is called the \emph{compaction operation}.
Note that a randomly chosen half of items in the buffer is simply discarded, whereas the other half of items in the buffer is ``output'' by the compaction
operation. 

The overall KLL sketch is built as a sequence of at most $\log_2(n)$ such compactors, such that the output stream of a compactor is treated as the input stream of the next compactor. We thus think of the compactors as arranged into \emph{levels}, with the first one at level~0.
Similar compactors were already used, e.g.,\ in~\cite{manku1998approximate,manku1999random,agarwal2013mergeable,luo16_quantiles_experimental}, and additional ideas are needed to get the optimal space bound for additive error, of $O(1/\eps)$ items stored across all compactors~\cite{karnin2016optimal}.

The compactor building block is not directly applicable to our setting for the following reasons. A first observation is that to achieve the relative error guarantee,
we need to always store the $1/\epsilon$ smallest items. 
This is because the relative error guarantee demands that estimated ranks for the $1/\eps$ lowest-ranked items in the data stream
are \emph{exact}. If even a single one of these items is deleted from the summary, then these 
estimates will not be exact.
Similarly, among the next $2/\epsilon$ smallest items, the summary must store essentially every other item to achieve multiplicative error.
Among the next $4/\epsilon$ smallest items in the order, the sketch must store roughly every fourth item, and so on. 

The following simple modification of the compactor from the KLL sketch indeed achieves the above. Each buffer of size $B$ ``protects'' the $B/2$
smallest items stored inside, meaning that these items are not involved in any compaction (i.e., the compaction operation only removes the $B/2$ largest items from the buffer).
Unfortunately, it turns out that this simple approach requires space $\Theta(\eps^{-2}\cdot \log(\eps^2 n))$, which merely matches
the space bound achieved in~\cite{zhang2006space}, and in particular has a (quadratically) suboptimal
dependence on $1/\eps$.

The key technical contribution of our work is to enhance this simple approach with a more sophisticated rule for selecting the number 
of protected items in each compaction. 
One solution that yields our upper bound
is to  choose this number in each compaction at random from an appropriate exponential distribution.
However, to get a cleaner analysis and a better dependency on the failure probability $\delta$, we in fact derandomize this distribution.

While the resulting algorithm is relatively simple,
analyzing the error behavior brings new challenges that do not arise in the additive error setting. 
Roughly speaking, 
when analyzing the accuracy of the estimate for $\rank(y)$ for any fixed item $y$, all error can be ``attributed'' to compaction
operations. In the additive error setting, one may suppose that \emph{every} compaction operation contributes to the error
and still obtain a tight error analysis \cite{karnin2016optimal}. Unfortunately,
this is not at all the case for relative error: as already indicated, to obtain our accuracy bounds it is essential to show that the estimate for 
any low-ranked item $y$ is affected by very few compaction operations. 

Thus, the first step of our analysis is to carefully bound the number of compactions on each level
that affect the error for $y$, using a charging argument that relies on the derandomized exponential distribution
to choose the number of protected items.
To get a suitable bound on the variance of the error, we also need to show that the rank of $y$
in the input stream to each compactor falls by about a factor of two at every level of the sketch.
While this is intuitively true (since each compaction operation outputs a randomly
chosen half of ``unprotected'' items stored in the compactor), it only holds with high probability and hence requires a 
careful treatment in the analysis.
Finally, we observe that the error in the estimate for $y$ is a zero-mean sub-Gaussian variable 
with variance bounded as above, and thus applying a standard Chernoff tail bound yields our
final accuracy guarantees for the estimated rank of $y$. 

There are substantial additional technical difficulties to analyze the algorithm under an arbitrary sequence of merge operations,
especially with no foreknowledge of the total size of the input.
Most notably, when the input size is not known in advance, the parameters of the sketch
must change as more inputs are processed. This
makes obtaining a tight bound on the variance of the resulting estimates highly involved.
In particular, as a sketch processes more and more inputs, it protects more and more items,
which means that items appearing early in the stream may \emph{not} be protected by the sketch,
even though they \emph{would have been protected} if they appeared later in the stream. 
Addressing this issue is reasonably simple in the streaming setting,
because every time the sketch parameters need to change, one can afford to allocate an
entirely new sketch with the updated parameters, without discarding the previous sketch(es);
see Section \ref{s:unknownlength} for details. 
Unfortunately, this simple approach does not work for a general sequence of merge operations,
and we provide a much more intricate analysis to give a fully mergeable summary.

A second challenge when designing and analyzing merge operations arises from working with our \emph{derandomized} exponential distribution,
since this requires each compactor to maintain a ``state'' variable determining
the current number of protected items, and these variables need to be ``merged'' appropriately.
It turns out that the correct way to merge state variables is to take a bitwise $\mathsf{OR}$
of their binary representations.
With this technique for merging state variables in hand, we extend the charging argument bounding the number of compactions affecting the error 
in any given estimate so as to handle an arbitrary sequence of merge operations.

\paragraph{Analysis with extremely small probability of failure.}
We close by giving an alternative analysis of our algorithm that achieves a space bound with an exponentially better (double logarithmic) dependence on $1/\delta$, compared to Theorem~\ref{thm:mergeability-full-intro}. However,
this improved dependence on $1/\delta$ comes at the expense of the exponent of $\log(\eps n)$ increasing from $1.5$ to $2$. 
Formally, we prove the following theorem in Section~\ref{s:extremeDelta},
where we also show that it directly implies a deterministic space bound of $O(\eps^{-1}\cdot \log^3(\eps n))$,
matching the state-of-the-art result in~\cite{zhangwang}.
For simplicity, we only prove the theorem in the streaming setting, although we conjecture that an appropriately
modified proof of Theorem~\ref{thm:mergeability-full-intro} would yield the same result even when 
the sketch is built using merge operations.

\begin{theorem} \label{thm:extremeDelta intro}
For any $0 < \delta \le 0.5$ and $0 < \eps \le 1$,
there is a ran\-do\-mi\-zed, comparison-based, one-pass streaming algorithm that computes a sketch consisting of
$O\left(\eps^{-1}\cdot \log^2(\eps n)\cdot \log\log(1/\delta)\right)$ universe items, and from which an estimate $\hat{\rank}(y)$ of $\rank(y)$ can be derived for every $y \in \mathcal{U}$. 
For any fixed $y \in \mathcal{U}$, with probability at least $1-\delta$,
the returned estimate satisfies the multiplicative error guarantee $|\hat{\rank}(y)-\rank(y) | \le \eps\rank(y)$.
\end{theorem}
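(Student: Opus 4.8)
The plan is to re-analyze the streaming version of the very same sketch used for Theorem~\ref{thm:mergeability-full-intro}, with a different choice of its buffer-size parameter and a new tail bound. Instead of the $\Theta\!\big(\eps^{-1}\sqrt{\log(\eps n)\,\log(1/\delta)}\big)$ buffers used there, I would take buffers of size $\Theta\!\big(\eps^{-1}\log(\eps n)\,\log\log(1/\delta)\big)$. Since the sketch still consists of $O(\log(\eps n))$ compactors, the claimed space bound is immediate, and the unknown-stream-length handling from Section~\ref{s:unknownlength} carries over unchanged. What must be reproved is a bound on the error $\err(y):=\hat{\rank}(y)-\rank(y)$ for a fixed query $y$, with $R:=\rank(y)$. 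From the analysis of Theorem~\ref{thm:mergeability-full-intro} I would import, unchanged: (i) $\err(y)=\sum_h X_h$ is a zero-mean sum of independent per-level errors, with $X_h$ equal to $2^h$ times a sum of $m_h$ i.i.d.\ symmetric signs, $m_h$ being the number of level-$h$ compactions that affect $y$; (ii) the charging argument built on the derandomized exponential distribution yields a high-probability bound $m_h\le M_h$ with $\sum_h M_h\,2^{2h}=O\!\big(\eps^2 R^2/\log(\eps n)\big)$ at the enlarged buffer sizes; and (iii) the ``rank halves per level'' property, which together with the protection rule forces $m_h=0$ above some level $H_y$, so that every compaction affecting $y$ has weight $2^h\le w^\star$, with $w^\star$ smaller than $\eps R$ by a $\polylog(\eps n)$ factor.

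The heart of the proof is to replace the single sub-Gaussian tail bound used for Theorem~\ref{thm:mergeability-full-intro} (which only gives failure probability $\exp(-\Theta(\log(\eps n)))$, hence forces the $\sqrt{\log(1/\delta)}$ blow-up) by a Bernstein/Freedman-type bound that exploits the smallness of $w^\star$. Concretely: condition on $\bigcap_h\{m_h\le M_h\}$, an event of probability $\ge 1-\delta/2$; reveal the signs of the contributing compactions one at a time to obtain a martingale with increments at most $w^\star$ and predictable quadratic variation at most $\sum_h M_h 2^{2h}$; and apply Freedman's inequality. Since both the variance proxy and $w^\star$ are a $\polylog(\eps n)$ factor below $\eps^2R^2$ and $\eps R$ respectively, this already gives failure probability $\exp(-\Theta(\polylog(\eps n)))$, independently of $\delta$. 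To push it down to $\delta/2$ I would iterate: partition the levels affecting $y$ into $\Theta(\log\log(1/\delta))$ blocks of geometrically increasing depth measured downward from $H_y$, so that in the $j$-th block the largest weight is smaller than $\eps R$ by a factor doubly-exponential in $j$; the per-block Freedman exponent then grows like $2^{\Theta(j)}$, and assigning the blocks failure budgets that decay geometrically makes the budgets sum to $\delta/2$ once there are $\Theta(\log\log(1/\delta))$ of them. This is exactly where the extra $\sqrt{\log(\eps n)}$ slack in the buffers (hence the exponent of $\log(\eps n)$ rising from $1.5$ to $2$) and the $\log\log(1/\delta)$ factor are spent.

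For the deterministic consequence, I would invoke the randomized bound with $\delta$ set inverse-exponential in the number of order types of a length-$n$ input together with a query, i.e.\ $\delta:=2^{-\Theta(n\log n)}$; then $\log\log(1/\delta)=O(\log n)=O(\log(\eps n))$ in the regime where the bound is meaningful, so the space is $O(\eps^{-1}\log^3(\eps n))$. Because the algorithm is comparison-based, it behaves identically on order-isomorphic inputs once its coins are fixed, so a union bound over all order types shows that some fixing of the internal randomness is simultaneously correct on every input and every query; discarding the randomness yields a deterministic sketch of size $O(\eps^{-1}\log^3(\eps n))$, matching~\cite{zhangwang}.

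I expect the main obstacle to be the second step: choosing the block boundaries and per-block failure budgets so that the Freedman exponents genuinely grow doubly-exponentially while the budgets still sum to~$\delta$, and --- more delicately --- verifying that the structural inputs imported from the Theorem~\ref{thm:mergeability-full-intro} analysis are quantitatively strong enough at the enlarged (but still only $\polylog$-larger) buffers: in particular, that the protection rule really does leave $y$ untouched on enough of the top levels it would otherwise reach, and that the per-level counts $M_h$ and the variance sum degrade gracefully block by block. Reconciling this with the parameter-changing machinery for an unknown stream length adds bookkeeping but, I expect, no genuinely new difficulty, which is presumably why the theorem is stated only in the streaming setting.
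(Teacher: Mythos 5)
The heart of your plan --- Freedman-plus-blocks with geometrically allocated failure budgets --- has a gap in the topmost blocks. Take the block adjacent to $H_y$, whose increments are of order $w^\star\approx R/B$. Even with your enlarged buffers, $w^\star\approx \eps R/\bigl(\log(\eps n)\,\log\log(1/\delta)\bigr)$, so the Freedman exponent for that block is at best of order $a_0/w^\star$, which, for any error budget $a_0\le \eps R$, is $O\bigl(\log(\eps n)\,\log\log(1/\delta)\bigr)$. But the failure budget $\delta_0$ you assign to that block must satisfy $\sum_j\delta_j\le\delta/2$, and however you skew a geometric allocation over $\Theta(\log\log(1/\delta))$ blocks, you cannot make $\delta_0$ larger than roughly $\delta$; hence you need a Freedman exponent $\gtrsim\log(1/\delta)$ there. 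When $\log(1/\delta)\gg\log(\eps n)\,\log\log(1/\delta)$, Freedman simply does not deliver this, and the union bound over blocks breaks.

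The missing ingredient is that the top $\lceil\log_2\ln(1/\delta)\rceil$ levels below $H(y)$ do not need a concentration inequality at all: the number of error-affecting compactions at level $h$ is at most $\rank_h(y)/k$ \emph{deterministically} (Lemma~\ref{lem:single layer}), and combined with the (high-probability) rank-halving bound this gives, for those levels, $2^h\,|\err_h(y)|\lesssim R/k$, so their total contribution is at most $\lceil\log_2\ln(1/\delta)\rceil\cdot R/k\lesssim \eps R/2$ with probability~$1$. This is exactly the idea the paper attributes to Karnin et al.~\cite{karnin2016optimal} and is the crux of Appendix~\ref{s:extremeDelta}: it defines $H'(y)=\max\bigl(0,H(y)-\lceil\log_2\ln(1/\delta)\rceil\bigr)$, bounds levels $[H'(y),H(y))$ by the triangle inequality, and then the remaining levels contribute a zero-mean sub-Gaussian variable whose variance is suppressed by an extra factor $2^{H'(y)-H(y)}\approx 1/\ln(1/\delta)$; a plain Chernoff tail then yields failure probability $\delta$ without any Freedman-type machinery or block decomposition. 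Once you incorporate the deterministic bound for the top levels, the rest of your argument collapses to the paper's proof, and Freedman becomes unnecessary. Your derivation of the deterministic $O(\eps^{-1}\log^3(\eps n))$ corollary via a union bound over order types with $\delta=2^{-\Theta(n\log n)}$ is one of the two routes the paper mentions, but it is slightly weaker ($O(\eps^{-1}\log^2(\eps n)\log n)$); the paper's sharper route is to observe that once $\lceil\log_2\ln(1/\delta)\rceil\ge H$ we have $H'(y)=0$ and the entire error bound holds for every outcome of the coin flips.
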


We remark that this alternative analysis builds on an idea from~\cite{karnin2016optimal} to analyze the top few levels of compactors deterministically
rather than obtaining probabilistic guarantees on the errors introduced to estimates by these levels.

\paragraph{Organization of the paper.}
Since the proof of full mergeability in Theorem~\ref{thm:mergeability-full-intro} is quite involved, we proceed in several steps of increasing complexity.
We describe our sketch in the streaming setting in Section~\ref{s:description}, where we also give a detailed but informal outline of the analysis.
We then formally analyze the sketch in the streaming setting in Sections~\ref{s:relativecompactor} and~\ref{s:fullanalysis},
also assuming that a polynomial upper bound on the stream length $n$ is known in advance.
The space usage of the algorithm grows polynomially with the logarithm of this upper bound, so if this upper bound is at most $n^c$ for some constant $c \ge 1$, then 
the space usage remains as stated in Theorem \ref{thm:mergeability-full-intro}, with only the hidden constant factor changing. 
Then, in Section \ref{s:unknownlength}, we explain how to remove the assumption of a foreknowledge of $n$ in the streaming setting, yielding
an algorithm that works
without \emph{any} information about the final stream length. 

Finally, we fully describe the merge procedure and analyze the accuracy of our sketch under an arbitrary sequence of merge operations in Section~\ref{s:mergeability} (for didactic purposes, we outline a simplified merge operation in Section~\ref{s:mergeOperation-simplified}).
As mentioned above, Section~\ref{s:extremeDelta} contains an alternative analysis that yields better space bounds for
extremely small failure probabilities $\delta$.

\label{s:intromerge}

 \subsection{Detailed Comparison to Prior Work}
 \label{s:priorwork}
 Some prior works on streaming quantiles consider queries to be \emph{ranks} $r \in \{1, \dots, n\}$, 
and the algorithm must identify an item $y \in \mathcal{U}$ such that $\rank(y)$ is close to $r$; this is called the \emph{quantile query}. 
In this work, we focus on the dual problem of \emph{rank queries}, where we consider queries to be universe items $y \in \mathcal{U}$ and the algorithm must yield an accurate estimate for $\rank(y)$. 
Unless specified otherwise, algorithms described in this section directly solve both formulations (this holds for our algorithm as well).
Algorithms are randomized unless stated otherwise.
For simplicity, randomized algorithms are assumed to have constant failure probability $\delta$.
All reported space costs refer to the number of universe items stored.
(Apart from storing items, the algorithms may store, for example, bounds on ranks of stored items
or some counters, but the number of such $O(\log n)$-bit variables is proportional to the number of items stored or even smaller.)
 
 \paragraph{Additive Error.} 
 Manku, Rajagopalan, and Lindsay~\cite{manku1998approximate,manku1999random} built on the work of Munro and Paterson~\cite{munro1980selection} and gave a deterministic solution that stores at most $O(\eps^{-1}\cdot \log^2(\eps n))$ items,
 assuming prior knowledge of $n$. Greenwald and Khanna~\cite{greenwald2001space} created an intricate deterministic streaming algorithm that stores $O(\eps^{-1}\cdot \log(\eps n))$ items.
This is the best known deterministic algorithm for this problem, with a matching lower bound for comparison-based streaming algorithms \cite{cormode2019tight}. 
Agarwal et al.~\cite{agarwal2013mergeable} provided a mergeable sketch of size $O(\eps^{-1}\cdot \log^{1.5}(1/ \eps))$.
This paper contains many ideas and observations that were used in later work. Felber and Ostrovsky \cite{felber2015randomized} managed to reduce the space complexity to $O(\eps^{-1}\cdot\log(1/\eps))$ items by combining sampling with the Greenwald-Khanna sketches in non-trivial ways.
Finally, Karnin, Lang, and Liberty \cite{karnin2016optimal} resolved the problem by providing an $O(1/\eps)$-space solution, which is optimal. For general (non-constant)
failure probabilities $\delta$, the space upper bound becomes $O( \eps^{-1}\cdot \log\log(1/\delta))$, and they also prove a matching lower bound for 
comparison-based randomized algorithms, assuming $\delta \le 1 / n!$ (i.e., $\delta$ is exponentially small in $n$).

\paragraph{Multiplicative Error.}
A large number of works sought to provide more accurate quantile estimates for low or high ranks.
Only a handful offer solutions to the relative error quantiles problem considered in this work (sometimes also called the biased quantiles problem).
 Gupta and Zane \cite{Gupta:2003:CIL:644108.644150} gave
an algorithm for relative error quantiles that stores $O(\eps^{-3}\cdot \log^2(\eps n))$ items, and used this to approximately count
the number of inversions in a list; their algorithm requires prior knowledge of the stream length $n$. As previously mentioned, Zhang et al.~\cite{zhang2006space} presented an algorithm storing
$O(\eps^{-2}\cdot \log( \eps^2 n))$ universe items. Cormode et al.~\cite{cormode2006space} designed a deterministic sketch
storing $O( \eps^{-1}\cdot \log(\eps n) \cdot \log |\mathcal{U}|)$ items, which requires prior knowledge of the data universe $\mathcal{U}$. 
Their algorithm is inspired by the work of Shrivastava et al. \cite{shrivastava2004medians} in the additive error setting and
it is also mergeable (see \cite[Section 3]{agarwal2013mergeable}).
Zhang and Wang \cite{zhangwang} gave a deterministic merge-and-prune algorithm storing $O(\eps^{-1}\cdot \log^3(\eps n))$ items,
which can handle arbitrary merges with an upper bound on $n$, and streaming updates for unknown $n$.
However, it does not tackle the most general case of merging without a prior bound on $n$.
Cormode and Vesel{\'y} \cite{cormode2019tight}  recently showed a space lower bound of $\Omega(\eps^{-1}\cdot \log^2(\eps n))$ items for any deterministic comparison-based algorithm.

Other related works that do not fully solve the relative error  quantiles problem are as follows. Manku, Rajagopalan, and Lindsay \cite{manku1999random} designed 
an algorithm that, for a specified number $\phi \in [0,1]$, stores $O(\eps^{-1}\cdot \log(1/\delta))$ items and can return an item $y$ with $\rank(y)/n \in [(1-\eps)\phi, (1+\eps)\phi]$
(their algorithm requires prior knowledge of $n$).
Cormode et al. \cite{Cormode:2005:ECB:1053724.1054027} gave a deterministic algorithm that is meant to achieve
error properties ``in between'' additive and relative error guarantees.
That is, their algorithm aims to provide multiplicative guarantees only up to some minimum rank $k$;
for items of rank below $k$, their solution only provides additive guarantees. Their algorithm does not solve the relative error  quantiles problem: \cite{zhang2006space} observed that for adversarial item ordering, the algorithm of \cite{Cormode:2005:ECB:1053724.1054027} requires linear space to achieve relative error for all ranks.  

Dunning and Ertl \cite{DBLP:journals/corr/abs-1902-04023,dunning21} describe a 
heuristic algorithm called $t$-digest that is intended to achieve
relative error, but they provide no formal accuracy analysis
(note that $t$-digest is deterministic but \emph{not} comparison-based).
Indeed, Cormode et al.~\cite{cormode21_theory_meets_practice} show that the error of $t$-digest
may be arbitrarily large on adversarially generated inputs.
This latter paper also compares $t$-digest and ReqSketch (i.e., the algorithm of Theorem~\ref{thm:mergeability-full-intro}) on randomly generated inputs and proposes
implementation improvements for ReqSketch that make it process an input stream faster than $t$-digest.

Most recently, Masson, Rim, and Lee
\cite{DBLP:journals/pvldb/MassonRL19} considered a notion of relative \emph{value} error
for quantile sketches, which is distinct from the notion of relative \emph{rank} error considered in this work.
They require that for a query percentile $\phi \in [0, 1]$, if $y$ denotes the item in the data stream satisfying $\rank(y)=\phi n$, 
 then the algorithm should return an item $\hat{y} \in \mathcal{U}$ such that $|y - \hat{y}| \leq \eps \cdot |y|$. 
This definition only makes sense for data universes with a notion of magnitude and distance (e.g., numerical data), and
the definition is not invariant to natural data transformations, such as incrementing every data item $y$ by a large constant. 
It is also trivially achieved by maintaining a (mergeable) histogram with buckets
 $((1+\epsilon)^i, (1+\epsilon)^{i+1}]$.
In contrast, the standard notion of relative error considered in this
work does not refer to the data items themselves, only to their ranks,
and is arguably of more general applicability. 

\section{Description of the Algorithm}

\label{s:description}

\subsection{The Relative-Compactor Object}
\label{s:relative-compactor-def}
The crux of our algorithm is a building block that we call the relative-compactor. Roughly speaking, this object processes a stream of $n$ items and outputs a stream of at most $n/2$ items (each ``up-weighted'' by a factor of 2), meant
to ``approximate'' the input stream. It does so by maintaining a buffer of limited capacity.

Our complete sketch, described in Section \ref{s:fullsketch} below, is composed of a sequence of relative-compactors, where the input of the $(h+1)$-th relative-compactor is the output of the $h$-th. With at most $\log_2(\eps n)$ such relative-compactors of size $\Omega(\eps^{-1})$ (where $n$ is the length of the input stream), the output of the last relative-compactor is of size $O(1/\eps)$, and hence can be stored in memory.

\paragraph{Compaction Operations.} The basic subroutine used by our relative-compactor is a compaction operation. The input to a compaction operation is a list $X$ of $2m$ items $x_1 \leq x_2 \leq \ldots \leq x_{2m}$, and the output is a sequence $Z$ of $m$ items. This output is chosen to be one of the following two sequences, uniformly at random: Either $Z=\{x_{2i-1}\}_{i=1}^{m}$ or $Z=\{x_{2i}\}_{i=1}^{m}$. That is, the output sequence $Z$ equals either the even or odd indexed items in the sorted order of $X$, with both outcomes equally probable.

Consider an item $y \in {\calU}$ and recall that $\rank(y; X) = |\{ x\in X \ |\  x\le y\}|$ is the number of items $x\in X$ satisfying $x\le y$ (we remark that both $X$ and $\{ x\in X \ |\ x\le y\}$ are multisets of universe items).
The following is a trivial observation regarding the error of the rank estimate of $y$ with respect to the input $X$ of a compaction operation when using $Z$.
We view the output $Z$ of a compaction operation, with all items up-weighted by a factor of 2, as an approximation to the input $X$; for any $y$, its weighted rank in $Z$ should be close to its rank in $X$. Observation \ref{obs:even y} below states
 that this approximation incurs \emph{zero error} on items that have an even rank in $X$. Moreover,
for items $y$ that have an odd rank in $X$, the error for $y \in \mathcal{U}$ introduced by the compaction operation is $+1$ or $-1$ with equal probability. 
Note that the ranks are only w.r.t.\ to the input $X$ of the operation,
	which will contain a number of the largest items in a relative-compactor.
However, the observation holds for any universe item that may not be present in $X$.

\begin{obs} \label{obs:even y}
A universe item $y \in {\calU}$ is said to be even (odd) w.r.t.\ a compaction operation if $\rank(y; X)$ is even (odd), where $X$ is the input sequence to the operation. If $y$ is even w.r.t.\ the compaction, then $\rank(y; X) - 2\rank(y; Z) = 0$, where $Z$ is the output sequence of the operation. Otherwise, $\rank(y; X) - 2\rank(y; Z)$ is a variable taking a value from $\{-1,1\}$ uniformly at random.
\end{obs}

The observation that items of even rank (and in particular items of rank zero) suffer no error from a compaction operation plays an especially important role in the error analysis of our full sketch.

\paragraph{Full Description of the Relative-Compactor Object.}

\begin{figure}
 \begin{center}
  \includegraphics[width=0.7\textwidth]{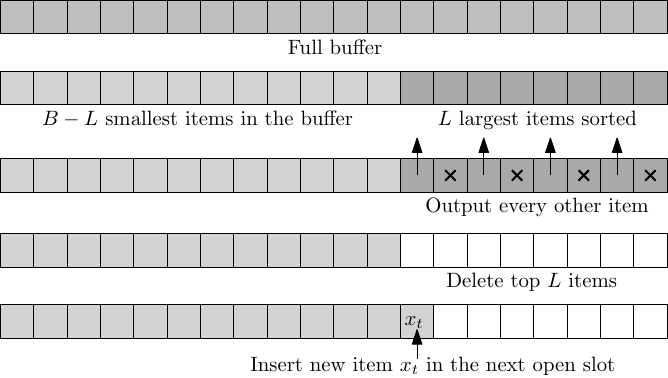}
 \end{center}
\caption{Illustration of the execution of a relative-compactor when inserting a new item $x_t$ into a buffer that is full at time $t$.
See lines \ref{line:9}-\ref{line:last} of Algorithm \ref{code:single}.}
\label{fig:buffer}
\end{figure}

The complete description of the relative-compactor object is given in Algorithm~\ref{code:single}. The high-level idea is as follows. The relative-compactor maintains a buffer of size $B = 2\cdot k\cdot \lceil\log_2(n/k)\rceil$ where
$k$ is an even integer parameter controlling the error and
$n$ is the upper bound on the stream length. (For now, we assume that such an upper bound is available;
we remove this assumption in Section~\ref{s:unknownlength}.)
The incoming items are stored in the buffer until it is full. At this point, we perform a compaction operation, as described above.

The input to the compaction operation is not all items in the buffer, but rather the largest $L$ items in the buffer
for a parameter $L \leq B/2$ such that $L$ is even.
These $L$ largest items are then removed from the buffer, and the output of the compaction operation is sent to the output stream of the buffer. This intuitively lets low-ranked items stay in the buffer longer than high-ranked ones. Indeed, by design the lowest-ranked half of items in the buffer are \emph{never} removed. We show later that this facilitates the multiplicative error guarantee.

The crucial part in the design of Algorithm \ref{code:single} is to select the parameter $L$ correctly, as $L$
controls the number of items compacted each time the buffer is full.
If we were to set $L = B/2$ for all compaction operations, then analyzing the worst-case behavior reveals that we need $B\approx 1/\eps^2$,
resulting in a sketch with a quadratic dependency on $1/\eps$.
To achieve the linear dependency on $1/\eps$,
we choose the parameter $L$ via a \emph{derandomized} exponential distribution
subject to the constraint that $L \leq B/2$.\footnote{A prior version of this manuscript used an
actual exponential distribution; see \url{https://arxiv.org/abs/2004.01668v1}. The algorithm presented here uses randomness only to select
which items to place in the output stream, not how many items to compact. This leads to a cleaner analysis and isolates the one component
of the algorithm for which randomness is essential. \label{footnotederandom}}

\begin{figure*}
	\begin{center}
		\includegraphics[width=\textwidth]{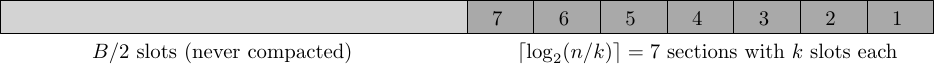}
	\end{center}
	\caption{Illustration of a relative-compactor and its sections, together with the indexes of the sections.}
	\label{fig:buffer-sections}
\end{figure*}

\begin{algorithm}[t]
	\caption{Relative-Compactor}\label{code:single}
\begin{algorithmic}[1]
	\Require Parameters $k \in 2\mathbb{N}^+$ and $n\in \mathbb{N}^+$, and a stream of items $x_1,x_2,\ldots$ of length at most $n$
	\State Set $B=2\cdot k\cdot \lceil\log_2(n/k)\rceil$ \label{line:setB}
	\State Initialize an empty buffer $\calB$ of size $B$, indexed from 1 
	\State Set $\schedulestate = 0$ \Comment{State of the compaction schedule} 
	\For{$t = 1\ldots $}
		\label{line:B} 
		\If{$\calB$ is full \label{line:9}} \Comment{Compaction operation}

			\State Compute $z({\schedulestate}) = $ the number of trailing
                        ones in the binary representation of $\schedulestate$ \label{line:S1}
			\State Set $L_\schedulestate = (z({\schedulestate})+1)\cdot k$ and $S_\schedulestate = B - L_\schedulestate + 1$  \label{line:S2}
			\State Pivot $\calB$ s.t.\ the largest $L_\schedulestate$ items occupy $\calB[S_\schedulestate:B]$
			\State	\Comment{$\calB[S_\schedulestate:B]$ are the last $L_\schedulestate$ slots of $\calB$}
			\State Sort $\calB[S_\schedulestate:B]$ in non-descending order 
			\State Output either even or odd indexed items
                        in the range $\calB[S_\schedulestate:B]$ with equal probability
			\State Mark slots $\calB[S_\schedulestate:B]$ in the buffer as clear
			\State Increase $\schedulestate$ by $1$ \label{line:compaction_end}
		\EndIf
		\State Store $x_t$ to the next available slot in the buffer $\calB$. \label{line:last}
	\EndFor
	
\end{algorithmic}
\end{algorithm}

In more detail, one can think of Algorithm \ref{code:single} as choosing $L$ as follows. During each compaction operation, the second half of the buffer (with $B/2$ largest items)
is split into $\lceil\log_2(n/k)\rceil$ sections, each of size $k$ and numbered from the right so that 
the first section contains the $k$ largest items, the second one next $k$ largest items, and so on;
see Figure~\ref{fig:buffer-sections}. \label{s:sections}
The idea is that the first section is involved in every compaction (i.e., we always have $L\ge k$),
the second section in every other compaction (i.e., $L\ge 2k$ every other time),
the third section in every fourth compaction, and so on.
This can be described concisely as follows: Let $\schedulestate$ be the number of compactions performed so far.
During the next (i.e., the $\schedulestate+1$-st) compaction of the relative-compactor, 
we set $L_\schedulestate = (z(\schedulestate) + 1)\cdot k$, where $z(\schedulestate)$ is the number of trailing ones in the binary representation of $\schedulestate$
(that is, if $\schedulestate$ viewed as a bitstring can be written as $(\mathbf{x}, 0, \mathbf{1}^j)$, for some $\mathbf{x}$, then $z(\schedulestate) = j$).
We call the variable $\schedulestate$ the \emph{state} of this ``compaction schedule''  (i.e., a particular way of choosing $L$).
See lines~\ref{line:S1}-\ref{line:S2} of Algorithm~\ref{code:single},
where we also define $S_\schedulestate = B - L_\schedulestate + 1$ as the first index in the compacted part of the buffer.

Observe that $L_\schedulestate \leq B/2$ always holds in Algorithm \ref{code:single}. Indeed, there are at most $n/k$ compaction operations (as each discards at least $k$ items),
so the binary representation of $\schedulestate$ never has more than $\lceil\log_2(n/k)\rceil$ bits, not even after the last compaction.
Thus, $z(\schedulestate)$, the number of trailing ones in the binary representation of $\schedulestate$, is always less than $\lceil\log_2(n/k)\rceil$
and hence, $L_\schedulestate \le \lceil\log_2(n/k)\rceil\cdot k = B/2$.
It also follows that there is at most one compaction operation that compacts all $\lceil\log_2(n/k)\rceil$ sections at once. 
Our deterministic compaction schedule has the following crucial property: 

\begin{obs} \label{keyfact} Between any two compaction operations that involve
exactly $j$ sections (i.e., both have $L = j\cdot k$), there is at least one compaction operation
that involves more than $j$ sections.
\end{obs}

\begin{proof}
Let $\schedulestate < \schedulestate'$ denote the states of the compaction schedule in two steps $t < t'$ with a compaction operation
involving exactly $j$ sections. Then we
can express the binary representations of $\schedulestate$ and $\schedulestate'$ as $(\mathbf{x}, 0, \mathbf{1}^{j-1})$
and $(\mathbf{x'}, 0, \mathbf{1}^{j-1})$, respectively, where $\mathbf{1}^{j-1}$ denotes the all-1s vector 
of length $j-1$, and $\mathbf{x}$ and $\mathbf{x'}$ are respectively the binary representations of two numbers $y$ and $z$ with $y < z$. 
Consider the binary vector $(\mathbf{x}, \mathbf{1}^{j})$. This is the binary representation
of a number $\hat{\schedulestate} \in (\schedulestate, \schedulestate')$ with strictly more trailing ones than the binary representations of $\schedulestate$ and $\schedulestate'$. 
The claim follows as there must be a step $\hat{t}\in (t, t')$ when the state equals $\hat{\schedulestate}$ and a compaction
operation is performed.
\end{proof}

\begin{algorithm}[t]
	\caption{ReqSketch (Relative-Error Quantiles sketch)}\label{code:full}
	\begin{algorithmic}[1]
		\Require Parameters $k \in 2\mathbb{N}^+$ and $n\in \mathbb{N}^+$, and a stream of items $x_1,x_2,\ldots$ of length at most $n$
		\Ensure A sketch answering rank queries
		\State Let RelCompactors be a list of relative-compactors
		\State Set $H=0$, initialize relative-compactor at RelCompactors[$0$], with parameters $k$ and $n$
		\For{$t = 1\ldots $}
		\State \Call{Insert}{$x_t, 0$}
		\EndFor
		
		\Function{Insert}{$x$,$h$}
		\If{$H<h$}
		\State Set $H=h$ 
		\State Initialize relative-compactor at RelCompactors[$h$], with parameters $k$ and $n$ 
		\EndIf
		\State Insert item $x$ into RelCompactors[$h$]
		\For{$z$ in output stream of RelCompactors[$h$]}
		\State \Call{Insert}{$z, h+1$}
		\EndFor
		\EndFunction
		
		\Function{Estimate-Rank}{$y$}
		\State Set $\hat{\rank}(y)=0$
		\For{$h=0$ to $H$}
		\For{each item $y' \leq y$ stored in RelCompactors[$h$]}
		\State Increment $\hat{\rank}(y)$ by $2^h$
		\EndFor
		\EndFor
		\Return $\hat{\rank}(y)$
		\EndFunction
	\end{algorithmic}
\end{algorithm}

\subsection{The Full Sketch}
\label{s:fullsketch}
Following prior work \cite{manku1998approximate, agarwal2013mergeable, karnin2016optimal}, the full sketch uses a sequence of relative-compactors. At the very start of the stream, it consists of a single relative-compactor (at level~0) and opens a new one (at level~1) once items are fed to the output stream of the first relative-compactor (i.e., after the first compaction operation, which occurs on the first stream update during which the buffer is full). In general, when the newest relative-compactor is at level $h$,
the first time the buffer at level $h$ performs a compaction operation (feeding items into its output stream for the first time), we open a new relative-compactor at level $h+1$ and feed it these items. Algorithm~\ref{code:full} describes the logic of this sketch.

To answer rank queries, we use the items in the buffers of the relative-compactors as a weighted coreset. That is, the union of these items is a weighted set $\calC$ of items, where the weight of items in relative-compactor at level $h$ is $2^h$ (recall that $h$ starts from 0), and the approximate rank of $y$, denoted $\hat{\rank}(y)$, is the sum of weights of items in $\calC$ smaller than or equal to $y$. 
Similarly, ReqSketch can answer quantile queries, i.e., for a given rank $r\in \{1, \dots, n\}$,
return an item $y\in \calU$ with $\rank(y)$ close to $r$; the algorithm just
returns an item $y$ stored in one of the relative-compactors with $\hat{\rank}(y)$
closest to the query rank $r$ among all items in the sketch.

The construction of layered exponentially-weighted compactors and the
subsequent rank estimation is virtually identical to that explained in
prior works \cite{manku1998approximate, agarwal2013mergeable,
  karnin2016optimal}. 
Our essential departure from prior work is in the definition of the compaction operation, not in how compactors are plumbed together to form a complete sketch. 

\subsection{Merge Operation}
\label{s:mergeOperation-simplified}

We describe a merge operation that takes as input two sketches $S'$ and $S''$ which have processed two separate streams $\sigma'$ and $\sigma''$, and that outputs a sketch $S$ summarizing
the concatenated stream $\sigma = \sigma'\circ \sigma''$ (the order of $\sigma'$ and $\sigma''$ does not matter here).
For simplicity, we assume w.l.o.g.\ that sketch $S'$ has at least as many levels as sketch $S''$.
Then, the resulting sketch $S$ inherits parameters $k$ and $n$ from sketch $S'$,
and in fact, we will merge sketch $S''$ into $S'$.
We further assume that both $S'$ and $S''$ have the same value of $k$ (otherwise, it would not be meaningful to analyze the error of $S$)
and that $n$ is still an upper bound on the combined input size.
Later, in Section~\ref{s:mergeability}, we show how to remove the
latter assumption and provide a tight analysis of the sketch created by an arbitrary sequence of merge operations
without any advance knowledge about the total input size, thus proving Theorem~\ref{thm:mergeability-full-intro}.

\begin{algorithm}[t]
	\caption{Merge operation}\label{code:merge-simple}
	\begin{algorithmic}[1]
		\Require Sketches $S'$ and $S''$ to be merged such that $S'.H \ge S''.H$ and $S'.k = S''.k$
		\Ensure A sketch answering rank queries for the combined inputs of $S'$ and $S''$
		\For{$h = 0, \ldots, S''.H$} \Comment{Merge $S''$ into $S'$}
		\State $S'$.RelCompactors[$h$].$\schedulestate$ = $S'$.RelCompactors[$h$].$\schedulestate$ \textsf{OR} $S''$.RelCompactors[$h$].$\schedulestate$ \label{algLine:merge-simple:combine_states}
		\State Insert all items in $S''$.RelCompactors[$h$] into $S'$.RelCompactors[$h$]
		\EndFor
		\For{$h = 0, \ldots, S'.H $}
		\If{buffer $S'$.RelCompactors[$h$] exceeds its capacity}
		\State Perform compaction operation as in lines~\ref{line:S1}-\ref{line:compaction_end}
		of Algorithm~\ref{code:single} and insert output items into $S'$.RelCompactors[$h+1$]
		\EndIf
		\EndFor
		\State \Return $S'$
	\end{algorithmic}
\end{algorithm}

The basic idea of the merge operation is straightforward: At each level,
concatenate the buffers and if that causes the capacity of the compactor to be exceeded, perform the compaction operation,
as in Algorithm~\ref{code:single}. However, there is a crucial subtlety: We need to combine the states $\schedulestate$ of the compaction schedule at each level in a manner 
that ensures that relative-error guarantees are satisfied for the merged sketch.
Consider a level $h$
and let $\schedulestate'$ and $\schedulestate''$ be the states of the compaction schedule at level $h$ in $S'$ and $S''$, respectively.
The new state $\schedulestate$ at level $h$ will be the bitwise OR of $\schedulestate'$ and $\schedulestate''$.
We explain the intuition behind using the bitwise OR in Section~\ref{s:mergeability}.
Note that while in the streaming setting, the state corresponds to the number of compaction operations already performed,
after a merge operation this may not hold anymore.
Still, if the state is zero, this indicates that the buffer has not
yet been subject to any compactions. 
Algorithm~\ref{code:merge-simple} provides a pseudocode of the merge operation,
where we use $S.H$ for the index of the highest level of sketch $S$ and similarly,
$S.k$ for the parameter $k$ of $S$.

\subsection{Informal Outline of the Analysis}

To analyze the error of the full sketch, we focus on the error in the
estimated rank of an arbitrary item $y \in {\calU}$.
For clarity in this informal overview,
we consider the failure probability $\delta$ to be constant, and 
we assume that $\eps^{-1} > \sqrt{\log_2(\eps n)}$, or equivalently, $n < \eps^{-1}\cdot 2^{\eps^{-2}}$.
Recall that in our algorithm, all buffers have size $B=\Theta(k \log(n/k))$;
we ultimately will set $k=\Theta\left(\eps^{-1}/\sqrt{\log(\eps n)}\right)$, in which case $B=O\left(\eps^{-1} \sqrt{\log(\eps n)}\right)$. 
 
Let $\rank(y)$ be the rank of item $y$ in the input stream, and $\err(y) = \hat{\rank}(y) - \rank(y)$ the error of the estimated rank for $y$.
Our analysis of $\err(y)$ relies on just two properties.
\begin{enumerate}
\item
The level-$h$ compactor only does at most $\rank(y)/(k\cdot 2^h)$ compactions that affect the error of $y$ (up to a constant factor). 

Roughly speaking, this holds by the following reasoning. 
First, recall from Observation~\ref{obs:even y} that $y$ needs to be odd w.r.t.\ any compaction affecting the error of $y$,
which implies that at least one item $x\le y$ must be removed during that compaction.
We show that as we move up one level at a time, $y$'s rank with respect to
the input stream fed to that level falls by about half (this is formally established in Lemma \ref{lem:halving}). This is 
the source of the $2^h$ factor in the denominator. Second,
we show that each compaction operation that affects $\err(y)$
can be ``attributed'' to $k$ items smaller than or equal to $y$
inserted into the buffer, which relies on using our particular compaction schedule (see Lemma \ref{lem:single layer}). 
This is the source of the $k$ factor in the denominator.

\item  Let $H_y$ be the smallest positive integer such that $2^{H_y}\gtrsim \rank(y)/B$ (the approximate inequality $\gtrsim$ hides a universal constant). 
Then no compactions occurring at levels above $H_y$ affect $\err(y)$, because $y$'s rank relative to the input stream of any such buffer is less than $B/2$ and no relative-compactor ever compacts the lowest-ranked $B/2$ items that it stores.

Again, this holds because, as we move up one level at a time, $y$'s rank w.r.t.\
each level falls by about half  (see Lemma \ref{lem:halving}).
\end{enumerate}
Together, this means that the variance of the estimate for $y$ is at most (up to constant factors):
\begin{align} \sum_{h=1}^{H_y} \frac{\rank(y)}{k\cdot 2^h}  \cdot 2^{2 h} =
 \sum_{h=1}^{H_y} \frac{\rank(y)}{k}  \cdot 2^h\,, \label{die}
\end{align}
where in the LHS, $ \rank(y)/(k 2^h) $ bounds the number of level-$h$ compaction operations affecting the error (this exploits Property 1 above), and $2^{2h}$ is the variance contributed by each such compaction, due to Observation~\ref{obs:even y} and because each item processed by the relative-compactor at level $h$ represents $2^h$ items in the original stream.

The RHS of Equation \eqref{die} is dominated by the term for $h=H_y$, and the term for that value of $h$ is
at most (up to constant factors)
\begin{equation} \label{keyeq}
\frac{\rank(y)}{k}  \cdot 2^{H_y} \lesssim  \frac{\rank(y)}{k}\cdot \frac{\rank(y)}{B} = \frac{\rank(y)^2}{k\cdot B}
\simeq \frac{\rank(y)^2\cdot \log(\eps n)}{B^2}\,.
\end{equation}
The first inequality in Equation \eqref{keyeq} exploits Property~2 above, while the last equality exploits the fact that $B = O(k \cdot\log(\eps n))$.\footnote{In the derivations within Equation \eqref{keyeq}, there is a couple of important subtleties. The first is that when we replace $2^{H_y}$ with $\Theta(\rank(y)/B)$, that substitution is only valid if $\rank(y)/B \geq \Omega(1)$. However, we can assume w.l.o.g.\ that $\rank(y)\geq B/2$, as otherwise the algorithm will make no error on $y$ by virtue of storing the lowest-ranked $B/2$ items deterministically. The second subtlety is that the algorithm
is only well-defined if $k\geq 2$, so when we replace $k$ with $\Theta(B/\log(\eps n))$, that is a valid substitution only if $B \geq \Omega( \log(\eps n))$, which holds by the assumption that $\eps^{-1} > \sqrt{\log_2(\eps n)}$.} We obtain the desired accuracy guarantees so long as this variance is at most $\eps^2 \rank(y)^2$,  as this will imply that the standard deviation is at most $\eps \rank(y)$. This hoped-for variance bound holds so long as $B\gtrsim \eps^{-1} \cdot \sqrt{\log_2(\eps n)}$, or
equivalently $k \gtrsim \eps^{-1}/\sqrt{\log_2(\eps n)}$.

\subsection{Roadmap for the Formal Analysis}
Section \ref{s:relativecompactor} establishes the necessary properties of a single relative-compactor (Algorithm \ref{code:single}), namely that, roughly speaking, each compaction operation that affects a designated item $y$
can be charged to $k$ items smaller than or equal to $y$ added to the buffer.
Section \ref{s:fullanalysis} then analyzes the full sketch (Algorithm \ref{code:full}),
completing the proof of our result in the streaming setting when a polynomial upper bound on $n$ is known in advance.
In Section~\ref{s:unknownlength}, we provide a simple argument that the assumption of having such an upper bound on $n$ is not needed in the streaming setting.

For the most general analysis under an arbitrary sequence of merge operations (i.e., for the proof of full mergeability) and without assuming a foreknowledge of $n$, we refer to Section~\ref{s:mergeability}.

\section{Analysis of the Relative-Compactor in the Streaming Setting}
\label{s:relativecompactor}

To analyze our algorithm, we keep track of the error associated with an arbitrary fixed item $y$. 
Throughout this section, we restrict our attention to any single relative-compactor at some level $h$ (Algorithm~\ref{code:single}) maintained by our sketching algorithm (Algorithm~\ref{code:full}),
and we use ``time $t$'' to refer to the $t$-th insertion operation to this particular relative-compactor.

We analyze the error introduced by the relative-compactor for an item $y$. 
Specifically, at time $t$, let $X^t=(x_1,\ldots,x_t)$ be the input stream
to the relative-compactor, $Z^t$ be the output stream, and $\calB^t$ be the items in the buffer after inserting item $x_t$.
The error for the relative-compactor at time $t$ with respect to item $y$ is defined as 
\begin{equation} \label{eq:err def}
\err^t_h(y) = \rank(y; X^t) - 2\rank(y; Z^t) - \rank(y; \calB^t).
\end{equation}

Conceptually, $\err^t_h(y)$ tracks the difference between $y$'s rank in the input stream $X^t$ at time $t$ versus its rank as estimated by the combination
of the output stream
and the remaining items in the buffer at time $t$ (output items are upweighted by a factor of $2$ while items remaining in the buffer 
are not). 
The overall error of the relative-compactor is $\err^n_h(y)$, where $n$ is the length of its input stream. To bound $\err^n_h(y)$, we keep track of the error associated with $y$ over time, and define the increment or decrement of it as
$$ \Delta^t_h(y) = \err^t_h(y) - \err^{t-1}_h(y),$$
where $\err^0_h(y) = 0$.

Clearly, if the algorithm performs no compaction operation in a time step $t$, then  $\Delta^t_h(y)=0$.
(Recall that a compaction is an execution of lines~\ref{line:S1}-\ref{line:compaction_end} of
Algorithm \ref{code:single}.)
Let us consider what happens in a step $t$ in which a compaction operation occurs. Recall from Observation~\ref{obs:even y} that if $y$ is even with respect to the compaction (i.e., $y$ has even rank w.r.t.\ the $L$ largest items in the relative-compactor),
then $y$ suffers no error, meaning that $\Delta^t_h(y)=0$. Otherwise, $\Delta^t_h(y)$ is uniform in $\{-1,1\}$.

Our aim is to bound the number of steps $t$ with $\Delta^t_h(y) \neq
0$, equal to $\sum_{t=1}^n |\Delta^t_h(y)|$, and use this in turn to
help us bound $\err^n_h(y)$.
We call a step $t$ with $\Delta^t_h(y) \neq 0$ \emph{important}.
Likewise, call an item $x$ with $x\le y$ \emph{important}.
Let $\rank_h(y)$ be the rank of $y$ in the input stream to level $h$; so there
are $\rank_h(y)$ important items inserted to the buffer at level $h$
(in the notation above, we have $\rank_h(y) = \rank(y; X^n)$).
Recall that $k$ denotes the parameter in Algorithm \ref{code:single} controlling
the size of the buffer sections of each relative-compactor and that $B$ denotes the buffer's capacity.

Our main analytic result regarding relative-compactors is
that there are at most $\rank_h(y)/k$ important steps.
Its proof explains the intuition behind our compaction schedule,
i.e., why we set $L$ as described in Algorithm~\ref{code:single}.

\begin{lemma} \label{lem:single layer}
Consider the relative-compactor at level $h$, fed an input stream of length at most $n$.
For any fixed item $y \in \mathcal{U}$ with rank $\rank_h(y)$ in the input stream to level $h$,
there are at most $\rank_h(y) / k$ important steps.
In particular,
$$ \sum_{t=1}^n |\Delta^t_h(y)| \leq \rank_h(y) / k \quad \text{and} \quad \left|\err^n_h(y)\right| \leq \rank_h(y) / k\,.$$ 
\end{lemma}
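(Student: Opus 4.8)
The plan is to show that every important step (i.e., a compaction in which $y$ is odd) can be charged to a distinct block of $k$ important items that were inserted into the buffer, so that the total number of important steps is at most $\rank_h(y)/k$. The starting observation is that if a compaction at time $t$ is important, then $y$ is odd with respect to the input list $X$ of that compaction, and in particular $\rank(y;X)\ge 1$, so at least one important item sits among the $L_\schedulestate$ largest items being compacted; more usefully, I want to track \emph{how many} important items are compacted, since each important compaction ``consumes'' at least one important item from the buffer that cannot be reused. The difficulty is that an important item surviving in the protected (low) part of the buffer is \emph{not} consumed, so a naive ``one item per important step'' charging fails: the same important item could be relevant to many compactions if it keeps getting protected. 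This is exactly where Fact~\ref{keyfact} (the structure of the derandomized schedule) must be used.

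\textbf{Key steps.} First, I would set up the bookkeeping: for a compaction operation involving exactly $j$ sections, the input list consists of the $L_\schedulestate = jk$ largest buffer items, occupying sections $1,\dots,j$ (section $i$ holding the $i$-th block of $k$ largest items). If $y$ is odd for this compaction, then $\rank(y;X)$ is odd, hence $\ge 1$, so at least one important item lies in the compacted range; say the important items occupy sections $1,\dots,i_0$ fully (below section $i_0$'s threshold) plus possibly part of section $i_0$ — the precise statement I need is that the number of important items compacted is at least $(i_0-1)k$ for the largest full section index $i_0$ reached, or more simply: if the compaction is important and involves $j$ sections, the number of important items among the compacted $jk$ items is at least... here I need to be careful. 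Actually the clean route: partition the timeline by the sequence of compactions, and for each important compaction identify the \emph{topmost} section that contains an important item; call its index $j^\star\le j$. Since $y$ is odd, section $1$ (the $k$ largest currently-buffered items) cannot be entirely important unless... — let me instead argue that across two consecutive important compactions that both have the same $j^\star$, Fact~\ref{keyfact} forces an intervening compaction with strictly more sections, hence one that also removes important items from sections $\ge j^\star$; a counting/amortization over these then shows each ``new'' important compaction is paid for by $k$ fresh important items entering the buffer since the previous one. I would make this rigorous by a potential-function or by directly exhibiting an injection from important steps to disjoint groups of $k$ important insertions.

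\textbf{Concluding the two displayed bounds.} Once the bound ``at most $\rank_h(y)/k$ important steps'' is established, the first inequality $\sum_{t=1}^n |\Delta^t_h(y)| \le \rank_h(y)/k$ is immediate since $|\Delta^t_h(y)|\le 1$ always and $|\Delta^t_h(y)|=1$ only on important steps (Observation~\ref{obs:even y}). The second inequality, $|\err^n_h(y)| \le \rank_h(y)/k$, follows by the triangle inequality from $\err^n_h(y) = \sum_{t=1}^n \Delta^t_h(y)$ together with $\err^0_h(y)=0$, using that $\rank(y;\calB^n)$ and the output/input discrepancy telescope into the sum of the $\Delta^t_h(y)$.

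\textbf{Main obstacle.} The crux — and the step I expect to take the most care — is the charging argument: showing that the derandomized schedule prevents the same important items from being charged by multiple important compactions. Concretely, I need to argue that if a compaction involves only the first $j$ sections and is important, then enough important items get permanently removed that, combined with Fact~\ref{keyfact} forcing a deeper compaction before the next important one at the same ``depth,'' the important-item budget $\rank_h(y)$ is spent at a rate of at least $k$ per important step. Getting the section indices and the ``at least $k$ fresh items'' accounting exactly right, including boundary cases where $y$ lies inside a section rather than at a section boundary, is the delicate part.
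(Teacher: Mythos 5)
Your high-level plan is aligned with the paper's (a charging argument where each important compaction is paid for by $k$ important items, with Fact~\ref{keyfact} preventing double-charging), and your handling of the two displayed inequalities is fine. However, there is a genuine gap in the core of the argument: you have not identified the \emph{right} set of $k$ items to charge to, and the quantity $j^\star$ you introduce is not the one to which Fact~\ref{keyfact} applies. Fact~\ref{keyfact} is a statement about the number $j$ of sections \emph{compacted} (i.e.\ about $L_\schedulestate/k$), not about the ``topmost section containing an important item''; so an argument about ``two consecutive important compactions with the same $j^\star$'' cannot lean on Fact~\ref{keyfact} without an extra step linking $j^\star$ back to $j$. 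There is also a sign error in your bookkeeping: since sections are numbered from the right (section~$1$ holds the $k$ \emph{largest} buffered items) and important items are the \emph{small} ones ($\le y$), important items sit in the high-indexed sections and the protected half, not in sections $1,\dots,i_0$.

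The missing insight is the following. If a compaction involving exactly $j$ sections is important, then $y$ is odd w.r.t.\ the compacted list, so at least one important item lies among the compacted items, and because important items are the smallest ones, section~$j$ in particular contains an important item. Consequently section~$j+1$ --- whose items are all smaller --- consists of $k$ important items, \emph{all of which are protected during this compaction}. The paper's charge is therefore not to items removed by the current compaction, but to these $k$ items currently held in section~$j+1$: one distinguishes whether there is a later compaction reaching section $j+1$ (charge to the $k$ items removed from section $j+1$ at the first such later step) or not (charge to the $k$ items left in section $j+1$ at the end of execution, where ``section'' $\lceil\log_2(n/k)\rceil+1$ is taken to be the last $k$ protected slots). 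Fact~\ref{keyfact} then guarantees that between two important $j$-steps some step reaches section $j+1$, which is exactly what makes the charge an injection. Your proposal circles around this but, as written, charges ``important items that are consumed'' by the current step and invokes Fact~\ref{keyfact} on the wrong index; filling that in requires precisely the section-$(j+1)$-is-all-important observation, so the proposal is not yet a proof.
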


\begin{proof}
We focus on steps $t$ in which the algorithm performs a level-$h$ compaction operation
(possibly not important), 
and call a step $t$ a $j$-step for $j\ge 1$ if the compaction operation in step $t$ (if any) involves exactly $j$ sections
(i.e., $L_\schedulestate = j\cdot k$ in line~\ref{line:S2} of Algorithm \ref{code:single}).
Recall from Section \ref{s:sections} that sections are numbered from the right, so that 
the first section contains the $k$ largest items in the buffer, the second section contains the next $k$ largest items, and so on.
Note that we think of the buffer as being sorted at all times.

For any $j\ge 1$, let $s_j$ be the number of important $j$-steps.
Further, let $\rank_{h,j}(y)$ be the number of important items that are either removed from 
the $j$-th section during a compaction, or remain in the $j$-th section 
at the end of execution, i.e., after the relative-compactor has processed its entire input stream. 
We also define $\rank_{h,j}(y)$ for $j = \lceil\log_2(n/k)\rceil + 1$; for this $j$, we define the $j$-th section 
to be the last $k$ slots in the first half of the buffer (which contains $B/2$ smallest items).
This special section is never involved in any compaction.

Observe that $\sum_{j\ge 1} s_j$ is the number of important steps and that $\sum_{j\ge 1} \rank_{h,j}(y) \le \rank_{h}(y)$.
We will show 
\begin{equation}\label{eq:singleLayer-chargingIneq}
s_j\cdot k \le \rank_{h,j+1}(y)\,.
\end{equation}
Intuitively, our aim is to ``charge'' each important $j$-step to $k$ important items that are either removed from section $j+1$, or remain in section $j+1$ at the end 
of execution, so that each such item is charged at most once.

Equation~\ref{eq:singleLayer-chargingIneq} implies the lemma as the number of important steps is
$$\sum_{t=1}^n |\Delta^t(y)|
= \sum_{j = 1}^{\lceil\log_2(n/k)\rceil} s_j
\le \sum_{j = 1}^{\lceil\log_2(n/k)\rceil} \frac{\rank_{h,j+1}(y)}{k}
\le \frac{\rank_{h}(y)}{k}\,.$$

To show the lower bound on $\rank_{h,j+1}(y)$ in~\eqref{eq:singleLayer-chargingIneq}, consider an important $j$-step $t$.
Since the algorithm compacts exactly $j$ sections and $\Delta^t_h(y) \neq 0$,
there is at least one important item in section $j$ by Observation~\ref{obs:even y}.
As section $j+1$ contains smaller-ranked (or equal-ranked) items than section $j$,
section $j+1$ contains important items only.
We have two cases for charging the important $j$-step $t$:

\mycase{A} There is a compaction operation after step $t$ that involves at least $j+1$ buffer sections,
i.e., a $j'$-step for $j'\ge j+1$.
Let $t'$ be the first such step. Note that just before the compaction in step $t'$, the $(j+1)$-st section 
contains important items only as it contains important items only immediately after step $t$.
We charge the important step $t$ to the $k$ important items that are in the $(j+1)$-st section just before
step $t'$. Thus, all of these charged items are removed from level $h$ in step $t'$.

\mycase{B} Otherwise, there is no compaction operation after step $t$ that involves at least $j+1$ buffer sections.
Then, we charge step $t$ to the $k$ important items that are in the $(j+1)$-st section 
at the end of execution. 

\smallskip

It remains to observe that each important item $x$ accounted for in $\rank_{h,j+1}(y)$ is charged at most once.
(Note that different compactions may be charged to the items which are
consumed during the same later compaction, but our charging will
ensure that these are assigned to different sections.
For example, consider a sequence of three consecutive important steps (there is no compaction in other steps in between them) such that in the first one the algorithm compacts 2
sections, then 1 section, and 3 sections in the third important step.
The first compaction will be charged to section~3 of the last
compaction, and the second compaction is charged to section 2 of the
last compaction.)

Formally, suppose that $x$ is removed from section $j+1$ during some
compaction operation in a step $t'$.
Item $x$ may only be charged by some number of important $j$-steps
before step $t'$ (satisfying the condition of Case~A).
To show there is at most one such important step, we use the crucial property of
our compaction schedule (Observation~\ref{keyfact}) that between every two compaction operations involving exactly $j$ sections,
there is at least one compaction that involves more than $j$ sections.
Since any important $j$-step is charged to the first subsequent compaction
that involves more than $j$ sections, item $x$ is charged at most once.

Otherwise, $x$ remains in section $j+1$ of the level-$h$ buffer at the
end of processing.
The proof in this case is similar to the previous case.
Item $x$ may only be charged by some number of important $j$-steps (that fall into Case~B) such that there
are no subsequent compaction operations involving at least $j+1$ buffer sections, 
and there is at most one such important step by Observation~\ref{keyfact}.
This shows~\eqref{eq:singleLayer-chargingIneq}, which implies the lemma as noted above. 
\end{proof}

\section{Analysis of the Full Sketch in the Streaming Setting}
\label{s:fullanalysis}
We denote by $\err_h(y)$ the error for item $y$ at the end of the stream when comparing the input stream to the compactor of level $h$ and its output stream and buffer. That is, letting $\calB_h$ be the items in the buffer of the level-$h$ relative-compactor after 
Algorithm \ref{code:full} has processed the input stream,
\begin{equation} \label{eq:layer err}
\err_h(y) = \rank_h(y) - 2\rank_{h+1}(y) - \rank(y; \calB_h).
\end{equation}

For the analysis, we first set the value of parameter $k$ of Algorithm \ref{code:full}.
Namely, given (an upper bound on) the stream length $n$,
the desired accuracy $0 < \eps \le 1$, and the desired upper bound $0 < \delta \le 0.5$ on failure probability, we let
\begin{equation}\label{eq:setk}
k = 2\cdot \left\lceil \frac{4}{\eps}\cdot \sqrt{\frac{\ln\frac{1}{\delta}}{\log_2(\eps n)}}\right\rceil\,.
\end{equation}
In the rest of this section, we suppose that parameters $\eps$ and $\delta$ satisfy $\delta > 1/\exp(\eps n/64)$
(note that this a very weak assumption as for $\delta \le 1/\exp(\eps n/64)$, the accuracy guarantees hold nearly deterministically,
the space cost of $\sqrt{\ln(1 / \delta)}$ becomes $\Omega(\sqrt{\eps n})$,
and furthermore, the analyses in Sections~\ref{s:mergeability} and~\ref{s:extremeDelta} do not require such an assumption).
We start by showing a lower bound on $k\cdot B$.

\begin{claim}\label{clm:kBlowerBound}
If parameter $k$ is set according to Equation~\eqref{eq:setk} and $B$ is set as in Algorithm~\ref{code:single} (line \ref{line:setB}), then
the following inequality holds: 
\begin{equation}\label{eqn:boundOn_kB}
k\cdot B \ge 2^6\cdot \frac{1}{\eps^2}\cdot \ln\frac{1}{\delta}\,.
\end{equation}
\end{claim}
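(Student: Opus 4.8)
The plan is to substitute the definitions of $k$ and $B$ and reduce the target inequality~\eqref{eqn:boundOn_kB} to the single estimate $\lceil\log_2(n/k)\rceil \ge \tfrac{1}{2}\log_2(\eps n)$. Dropping the outer ceiling in~\eqref{eq:setk} gives $k \ge \tfrac{8}{\eps}\sqrt{\ln(1/\delta)/\log_2(\eps n)}$, hence $k^2 \ge \tfrac{64}{\eps^2}\cdot\tfrac{\ln(1/\delta)}{\log_2(\eps n)}$; and since $B = 2k\lceil\log_2(n/k)\rceil$ (line~\ref{line:setB}), we have $kB = 2k^2\lceil\log_2(n/k)\rceil$. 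So if $\lceil\log_2(n/k)\rceil \ge \tfrac{1}{2}\log_2(\eps n)$, then
\[ kB \ \ge\ 2\cdot\frac{64}{\eps^2}\cdot\frac{\ln(1/\delta)}{\log_2(\eps n)}\cdot\frac{\log_2(\eps n)}{2} \ =\ 2^6\cdot\frac{1}{\eps^2}\ln\frac{1}{\delta}, \]
which is~\eqref{eqn:boundOn_kB}. In turn, $\lceil\log_2(n/k)\rceil \ge \log_2(n/k) \ge \tfrac{1}{2}\log_2(\eps n)$ will follow once we show $k \le \sqrt{n/\eps}$, since then $n/k \ge \sqrt{\eps n}$ and $\log_2(\sqrt{\eps n}) = \tfrac{1}{2}\log_2(\eps n)$.

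It remains to prove $k \le \sqrt{n/\eps}$. First I would note that the standing assumptions $\delta \le 1/2$ and $\delta > 1/\exp(\eps n/64)$ together force $\eps n > 64\ln 2 > 44$, which gives $\log_2(\eps n) > 5$ and, since $\eps \le 1$, also $n/\eps \ge \eps n > 44$, so $\sqrt{n/\eps} > 6$. Using $\lceil x\rceil \le x+1$ in~\eqref{eq:setk} and then $\ln(1/\delta) < \eps n/64$,
\[ k \ \le\ \frac{8}{\eps}\sqrt{\frac{\ln(1/\delta)}{\log_2(\eps n)}} + 2 \ <\ \frac{1}{\eps}\sqrt{\frac{\eps n}{\log_2(\eps n)}} + 2 \ =\ \sqrt{\frac{n}{\eps\log_2(\eps n)}} + 2. \]
Since $\log_2(\eps n) > 5$, the first term is below $\tfrac{1}{\sqrt{5}}\sqrt{n/\eps}$, and since $\sqrt{n/\eps} > 6$ the additive constant $2$ is below $\tfrac{1}{3}\sqrt{n/\eps}$; adding these, $k < \bigl(\tfrac{1}{\sqrt{5}}+\tfrac{1}{3}\bigr)\sqrt{n/\eps} < \sqrt{n/\eps}$, as needed.

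The only step requiring genuine care is absorbing the additive $+2$ coming from the ceiling in the definition of $k$ into the bound $k \le \sqrt{n/\eps}$; this is exactly where the section's assumption $\delta > 1/\exp(\eps n/64)$ is used (together with $\delta \le 1/2$), as it lower-bounds $\eps n$ and hence $n/\eps$ — without such a lower bound the additive term could dominate the main term. Everything else is elementary manipulation of~\eqref{eq:setk} and of the definition of $B$.
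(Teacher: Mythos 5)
Your proof is correct and follows essentially the same route as the paper's: both proofs reduce the claim to showing $\lceil\log_2(n/k)\rceil \ge \tfrac{1}{2}\log_2(\eps n)$, which each establishes by deriving $k \le \sqrt{n/\eps}$ from the hypothesis $\delta > 1/\exp(\eps n/64)$ (so $\ln(1/\delta) < \eps n/64$), then plugging this and the trivial lower bound $k^2 \ge \tfrac{64}{\eps^2}\ln(1/\delta)/\log_2(\eps n)$ into $kB = 2k^2\lceil\log_2(n/k)\rceil$. Your treatment is in fact somewhat more careful than the paper's: the paper simply asserts $k \le 8\eps^{-1}\sqrt{\ln(1/\delta)}$ without explicitly accounting for the ceiling in the definition of $k$, whereas you use $\lceil x\rceil \le x+1$ and then show the additive $+2$ is absorbed using $\eps n > 64\ln 2$ (from $\delta\le 1/2$) and $\log_2(\eps n) > 5$.
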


\begin{proof}
We first need to relate
$\log_2(n/k)$ (used to define $B$, see line~\ref{line:setB} of Algorithm~\ref{code:single}) and $\log_2(\eps n)$ (that appears in the definition of $k$, see Equation \eqref{eq:setk}).
Using the assumption $\delta > 1/\exp(\eps n/64)$,
we have
$k \le 8\eps^{-1}\cdot \sqrt{\ln(1/\delta)} \le 8\eps^{-1}\cdot \sqrt{\eps n/64} = \eps^{-1}\cdot \sqrt{\eps n}$,
which gives us $$\log_2 \left( \frac{n}{k} \right) \ge \log_2 \left(\frac{\eps n}{\sqrt{\eps n}}\right) = \frac{\log_2 (\eps n)}{2}\,.$$
Using this and the definition of $k$, we bound $k\cdot B$ as follows:
$$
k\cdot B = 2\cdot k^2\cdot \left\lceil\log_2 \frac{n}{k}\right\rceil 
\ge 2\cdot 2^6\cdot \frac{1}{\eps^2}\cdot \frac{\ln\frac{1}{\delta}}{\log_2(\eps n)}\cdot \frac{\log_2 (\eps n)}{2}
= 2^6\cdot \frac{1}{\eps^2}\cdot \ln\frac{1}{\delta}\,.
$$
\end{proof}

We now provide bounds on the rank of $y$ on each level,
starting with a simple one that will be useful for bounding the maximum
level $h$ with $\rank_h(y) > 0$.

\begin{obs} \label{obs:decay rank 1}
$\rank_{h+1}(y) \leq \max\{0, \rank_h(y) - B/2\}$ for any $h\ge 0$.
\end{obs}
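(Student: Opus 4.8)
The plan is to analyze a single level-$h$ relative-compactor run by Algorithm~\ref{code:full} and track how many important items (items $x \le y$) are fed into level $h+1$. Recall that $\rank_{h+1}(y) = \rank(y; Z^n)$, where $Z^n$ is the entire output stream of the level-$h$ compactor, and that every item placed in $Z$ is upweighted, but for counting purposes what matters is simply the number of important items appearing in $Z^n$; each compaction operation that outputs $m$ items consumes $2m$ items from the top $L_\schedulestate$ slots of the buffer and outputs exactly the odd- or even-indexed half, so the number of important items sent to the output is at most the number of important items consumed.

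First I would dispose of the trivial case: if $\rank_h(y) \le B/2$, then since the level-$h$ compactor never compacts the lowest-ranked $B/2$ items it stores (the compaction operation only touches $\calB[S_\schedulestate:B]$ with $S_\schedulestate = B - L_\schedulestate + 1 \ge B/2 + 1$, because $L_\schedulestate \le B/2$ always, as established right after Fact~\ref{keyfact}), no important item is ever removed from the buffer. Hence no important item is ever output, so $\rank(y; Z^n) = 0$ and therefore $\rank_{h+1}(y) = 0 = \max\{0, \rank_h(y) - B/2\}$.

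For the main case $\rank_h(y) > B/2$, I would argue that at most $\rank_h(y) - B/2$ important items are ever consumed by compactions at level $h$, and hence at most that many appear in $Z^n$. The key point is that at every moment, the buffer retains its $B/2$ lowest-ranked stored items untouched by compactions; in particular, at the end of processing the whole input of length $\le n$, the buffer $\calB_h$ still contains items, and among all the $\rank_h(y)$ important items that were ever inserted, consider the last time each was in the buffer. An important item leaves the buffer only by being consumed in a compaction (in which case at most half of the consumed important items are output). Since a compaction only removes items from positions $\ge B/2+1$ in sorted order, whenever an important item is removed there are at least $B/2$ items with rank at most its rank still sitting in the buffer's protected lower half — but a cleaner bookkeeping is: the number of important items that have ever been removed from the level-$h$ buffer is at most $\rank_h(y) - (\text{number of important items currently in } \calB_h)$, and I claim $\calB_h$ always holds $\ge \min\{\rank_h(y), B/2\}$ important items once that many have been inserted, because the protected lower half is filled with the smallest items seen. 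Combining, the number of important items removed, hence the number output to $Z^n$, is at most $\rank_h(y) - B/2$, giving $\rank_{h+1}(y) \le \rank_h(y) - B/2$.

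The main obstacle I anticipate is making the claim ``$\calB_h$ always retains at least $\min\{\rank_h(y), B/2\}$ important items'' fully rigorous, since the buffer's contents change over time and items are inserted in arbitrary (stream) order rather than sorted order — one must argue that after any compaction, the surviving $S_\schedulestate - 1 \ge B/2$ buffer entries are exactly the $S_\schedulestate-1$ smallest among the pre-compaction contents, and that subsequent insertions only add items that either join the protected region or push the boundary, so an important item can be displaced from the bottom half only if a smaller or equal important item takes its place. This is a monotonicity-of-the-protected-prefix argument; it is intuitively clear from the pivot-and-keep-the-bottom structure of Algorithm~\ref{code:single}, and I would phrase it as: the multiset of the $B/2$ smallest items ever residing in the buffer is non-decreasing in the majorization order over time, so once $\ge B/2$ important items have been inserted, the protected lower half consists entirely of important items and is never emptied of them.
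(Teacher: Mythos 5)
Your proposal is correct and takes essentially the same approach as the paper: the paper's one-sentence proof simply observes that the lowest-ranked $B/2$ items in the input to level $h$ remain in the buffer forever (since every compaction removes only the largest $L_\schedulestate \le B/2$ items), so at most $\max\{0,\rank_h(y)-B/2\}$ important items can ever reach the output stream. You reach the same conclusion via a somewhat more elaborate bookkeeping of important items in the protected prefix, but the underlying invariant --- that the smallest $B/2$ items seen so far are always retained --- is the same one the paper relies on.
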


\begin{proof}
Since the lowest-ranked $B/2$ items in the input stream to the level-$h$ relative-compactor are stored in the buffer $\calB_h$ and never given to the output stream of the relative-compactor,
it follows immediately that $\rank_{h+1}(y) \leq \max\{0, \rank_h(y) - B/2\}$. 
\end{proof}

Next, we prove that $\rank_h(y)$ roughly halves with every level.
This is easy to see in expectation and we show that it is true with high probability up to a certain crucial level $H(y)$.
Here, we define $H(y)$ to be the minimal $h$ for which $2^{2-h} \rank(y) \leq B/2$.
For $h = H(y) -1$ (assuming $H(y) > 0$), we particularly have $2^{3-H(y)} \rank(y) \geq B/2$,
or equivalently \begin{equation} 2^{H(y)} \leq 2^4\cdot \rank(y) / B. \label{eq:2to4}\end{equation}
Below, in Lemma~\ref{lem:decay rank}, we show that no important item (i.e., one smaller than or equal to $y$) can ever reach level $H(y)$ with high probability, unless $H(y) = 0$.
(For $H(y) = 0$, all important items fit into the level-$0$ buffer, so the estimated rank $\hat{\rank}(y)$ equals $\rank(y)$.)
Recall that a zero-mean random variable $X$ with variance $\sigma^2$ is sub-Gaussian
if $\E[\exp(sX)] \le \exp(-\frac12\cdot s^2\cdot \sigma^2)$ for any $s\in \R$; note that 
a (weighted) sum of independent zero-mean sub-Gaussian variables is a zero-mean sub-Gaussian random variable as well.
We will use the following standard (Chernoff) tail bound for sub-Gaussian variables~(see, e.g., Lemma 1.3 in~\cite{subgaussian_chernoff}):

\begin{fact}\label{fact:tailBound-subGaussian}
Let $X$ be a zero-mean sub-Gaussian variable with variance at most $\sigma^2$.
Then for any $a > 0$, it holds that
$$\Pr[X > a] \le \exp\left(-\frac{a^2}{2\sigma^2}\right)\quad\text{and}\quad\Pr[X < -a] \le \exp\left(-\frac{a^2}{2\sigma^2}\right)\,.$$
\end{fact}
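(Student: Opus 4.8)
The plan is to prove Fact~\ref{fact:tailBound-subGaussian} by the standard Chernoff (exponential moment) method; no new ideas beyond the definition of sub-Gaussianity are needed. First I would treat the upper tail $\Pr[X > a]$. For any $s > 0$, apply Markov's inequality to the nonnegative random variable $e^{sX}$ to get
\[
\Pr[X > a] = \Pr[e^{sX} > e^{sa}] \le e^{-sa}\,\E\!\left[e^{sX}\right].
\]
Then invoke the sub-Gaussian hypothesis, $\E[e^{sX}] \le \exp(\tfrac12 s^2 \sigma^2)$. Here one small remark is in order: the definition is phrased for a variable of variance exactly $\sigma^2$, but since $t \mapsto \exp(\tfrac12 s^2 t)$ is increasing, a variable with variance \emph{at most} $\sigma^2$ satisfies the same bound, which is the form used in the statement. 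Combining the two displays gives $\Pr[X > a] \le \exp\!\big(\tfrac12 \sigma^2 s^2 - a s\big)$ for every $s > 0$.

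Next I would optimize the free parameter $s$. The exponent $g(s) = \tfrac12 \sigma^2 s^2 - a s$ is a convex quadratic minimized at $s^\star = a/\sigma^2 > 0$, where $g(s^\star) = -a^2/(2\sigma^2)$. Substituting $s = s^\star$ yields $\Pr[X > a] \le \exp\!\big(-a^2/(2\sigma^2)\big)$, which is the first claimed inequality. For the lower tail, observe that $-X$ is again zero-mean, and for every $s \in \R$ we have $\E[e^{s(-X)}] = \E[e^{(-s)X}] \le \exp(\tfrac12 s^2 \sigma^2)$ by applying the (sign-symmetric) sub-Gaussian hypothesis at $-s$; hence $-X$ is sub-Gaussian with the same variance proxy. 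Applying the upper-tail bound just established to $-X$ gives $\Pr[X < -a] = \Pr[-X > a] \le \exp\!\big(-a^2/(2\sigma^2)\big)$, completing the proof.

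There is no genuine obstacle here — this is a textbook computation, which is why the paper also offers a citation. The only two points that warrant a line of care are (i) the monotonicity observation that lets one pass from "variance exactly $\sigma^2$" in the definition to "variance at most $\sigma^2$" in the statement, and (ii) checking that the sub-Gaussian condition, being stated for all $s \in \R$, transfers verbatim to $-X$ so that a single argument covers both tails; the quadratic optimization over $s$ is a one-line calculus step.
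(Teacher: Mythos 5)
Your proof is the standard Chernoff/moment-generating-function argument, and it is correct; the paper itself gives no proof of this fact, deferring instead to a textbook citation, so your argument is exactly the intended one. One small point worth flagging: the paper's stated definition of sub-Gaussian has a sign typo ($\E[\exp(sX)] \le \exp(-\tfrac12 s^2\sigma^2)$, which is impossible by Jensen's inequality), and you correctly used the intended condition $\E[\exp(sX)] \le \exp(\tfrac12 s^2\sigma^2)$, without which the optimization over $s$ would not yield the claimed bound.
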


\begin{lemma} \label{lem:decay rank} \label{lem:halving}
Assuming $H(y) > 0$,
with probability at least $1 - \delta$ it holds that $\rank_h(y) \leq 2^{-h+1}\rank(y)$ for any $h < H(y)$.
\end{lemma}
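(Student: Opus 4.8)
The plan is to set up a single random variable that tracks, for a fixed level $h < H(y)$, the rank $\rank_h(y)$ of $y$ in the stream feeding level $h$, and to show it concentrates at $2^{-h}\rank(y)$ via the sub-Gaussian tail bound in Fact~\ref{fact:tailBound-subGaussian}, after which a union bound over $h < H(y)$ gives the claim. The key identity is that $\rank_{h}(y)$ equals $2\rank_{h+1}(y)$ plus the error term: unrolling \eqref{eq:layer err}, for each level $g$ we have $\rank_g(y) = 2\rank_{g+1}(y) + \err_g(y) + \rank(y;\calB_g)$, and summing the telescoping contributions from levels $0$ through $h-1$ (weighted by powers of two) yields
\begin{equation*}
\rank(y) = 2^{h}\rank_h(y) + \sum_{g=0}^{h-1} 2^{g}\bigl(\err_g(y) + \rank(y;\calB_g)\bigr)\,.
\end{equation*}
Dropping the nonnegative buffer terms shows $2^h \rank_h(y) \le \rank(y) - \sum_{g<h} 2^g \err_g(y)$, so to conclude $\rank_h(y) \le 2^{-h+1}\rank(y)$ it suffices that $-\sum_{g<h} 2^g \err_g(y) \le \rank(y)$, i.e. that the weighted sum of errors below level $h$ is not too negative.

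The heart of the argument is bounding this weighted error sum. First I would argue by (strong) induction on $h$: assuming the halving bound $\rank_g(y)\le 2^{-g+1}\rank(y)$ holds for all $g < h$, Lemma~\ref{lem:single layer} gives $|\err_g(y)| \le \rank_g(y)/k \le 2^{-g+1}\rank(y)/k$, and more importantly $\err_g(y)$ is a sum of at most $\rank_g(y)/k$ independent $\pm 1$ steps, each step (at level $g$) contributing variance $1$ at the compactor's own scale. Since level-$g$ items carry weight $2^g$, the contribution of level $g$ to $\sum_{g<h}2^g\err_g(y)$ is a zero-mean sub-Gaussian variable with variance at most $2^{2g}\cdot \rank_g(y)/k \le 2^{2g}\cdot 2^{-g+1}\rank(y)/k = 2^{g+1}\rank(y)/k$. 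Summing the independent contributions over $g=0,\dots,h-1$, the total variance is at most $\sum_{g<h} 2^{g+1}\rank(y)/k \le 2^{h+1}\rank(y)/k$; and since $h < H(y)$, inequality \eqref{eq:2to4}-type reasoning (more precisely the definition $2^{2-h}\rank(y) > B/2$ for $h < H(y)$, so $2^h < 2^3\rank(y)/B$) bounds this by roughly $2^4 \rank(y)^2/(kB)$. Now Claim~\ref{clm:kBlowerBound} gives $kB \ge 2^6 \eps^{-2}\ln(1/\delta) \ge 2^6 \ln(1/\delta)$ (using $\eps\le 1$), so the variance $\sigma^2$ of the weighted error sum is at most $\rank(y)^2/(4\ln(1/\delta))$. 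Applying Fact~\ref{fact:tailBound-subGaussian} with $a = \rank(y)$ gives $\Pr[\sum_{g<h}2^g\err_g(y) < -\rank(y)] \le \exp(-\rank(y)^2/(2\sigma^2)) \le \exp(-2\ln(1/\delta)) = \delta^2$, which (being $\le \delta / H(y)$ after the union bound, since $H(y) = O(\log(\eps n))$ and $\delta \le 1/2$) is more than strong enough.

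The main obstacle — and the place where care is genuinely needed — is the circularity between the induction hypothesis and the probabilistic statement: Lemma~\ref{lem:single layer}'s bound on the number of important steps at level $g$ is in terms of $\rank_g(y)$, which is exactly the quantity we are trying to control, and the ``each step contributes independent $\pm1$'' structure must be justified conditionally on the (adversarially determined) number of compactions, which is itself a random quantity depending on earlier coin flips. I would handle this by defining the event $\mathcal{E}_{h}$ that the halving bound holds for all levels below $h$, conditioning on the prefix of randomness that determines compaction counts and which items sit in which buffer through level $h-1$, and observing that conditioned on any such prefix consistent with $\mathcal{E}_h$, the signs chosen in the relevant compactions are still independent fair coins (Observation~\ref{obs:even y}), so the martingale/sub-Gaussian bound applies with the deterministic variance bound $2^{h+1}\rank(y)/k$ that $\mathcal{E}_h$ guarantees. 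Then one propagates: $\Pr[\neg\mathcal{E}_{h+1}] \le \Pr[\neg\mathcal{E}_h] + \Pr[\text{level-}h\text{ halving fails}\mid \mathcal{E}_h]$, and the second term is at most $\delta^2$ by the computation above, so a union bound over the at most $H(y)$ levels yields overall failure probability at most $H(y)\cdot\delta^2 \le \delta$ (using $H(y) \le 1/\delta$, which follows from $H(y) = O(\log(\eps n))$, $\delta > 1/\exp(\eps n/64)$, and $\delta\le 1/2$). A minor additional point to check is the base case $h=0$, where $\rank_0(y) = \rank(y)$ trivially satisfies $\rank_0(y)\le 2\rank(y)$, and the degenerate case $\rank(y) < B/2$ where $H(y) = 0$ and the statement is vacuous.
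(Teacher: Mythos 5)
Your telescoping identity computes exactly the same quantity the paper bounds: unrolling \eqref{eq:layer err}, your $\sum_{g<h}2^g\err_g(y)$ equals $-2^h Z_h$ in the paper's notation (the paper writes $\err_\ell(y)=-\mathrm{Binomial}(m_\ell)$ in disguise), and the ``dominating process'' $Y_h$ the paper introduces is the same thing as your step of dropping the nonnegative $\rank(y;\calB_g)$ terms. The conditioning/martingale structure you describe is also what the paper uses implicitly. So the skeleton is correct and is essentially the paper's argument in a different wrapper. Your acknowledgement of the circularity and how to handle it is also accurate.

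The one genuine gap is the final union bound. After worst-casing the variance over $h < H(y)$ to $\sigma^2\le \rank(y)^2/(4\ln(1/\delta))$, you obtain a uniform failure probability $\delta^2$ per level and then assert $H(y)\cdot\delta^2\le\delta$ via the claim $H(y)\le 1/\delta$, justified by $H(y)=O(\log(\eps n))$ and $\delta>1/\exp(\eps n/64)$. This implication is false for a nontrivial range of parameters: take, say, $\eps n\approx 64$ and $\delta$ close to $0.5$ (so $1/\delta<e$), while $H(y)$ can be around $5$ or $6$; then $H(y)>1/\delta$. The paper avoids this by not worst-casing the variance: for level $h$, your own derivation gives $\sigma_h^2\le 2^{h+1}\rank(y)/k$, and pushing this through the tail bound (using $2^{H(y)}\le 2^4\rank(y)/B$ and $kB\ge 2^6\ln(1/\delta)$) yields a failure probability of at most $\delta^{2^{H(y)-h}}$ for level $h$, which decays doubly-exponentially as $h$ decreases. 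The union bound is then $\sum_{h<H(y)}\delta^{2^{H(y)-h}} = \sum_{j\ge 1}\delta^{2^j} \le \sum_{j\ge 1}\delta^{j+1} = \delta^2/(1-\delta)\le \delta$ for any $\delta\le 1/2$, independent of $H(y)$. So the gap is real but easily closed: simply do not discard the $h$-dependence in your variance bound before invoking Fact~\ref{fact:tailBound-subGaussian}.
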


\begin{proof}
We prove by induction on $0\le h < H(y)$ that, conditioned on $\rank_{\ell}(y) \leq 2^{-\ell+1}\rank(y)$ for any $\ell < h$,
with probability at least $1 - \delta\cdot 2^{h - H(y)}$ it holds that $\rank_h(y) \leq 2^{-h+1}\rank(y)$.
Taking the union bound over all $0\le h < H(y)$ implies the claim.
As $\rank_0(y) = \rank(y)$, the base case follows immediately.

Next, consider $h > 0$ and condition on $\rank_{\ell}(y) \leq 2^{-\ell+1}\rank(y)$ for any $\ell < h$.
Observe that any compaction operation at any level $\ell$ that involves $a$ important items
inserts $\frac12 a$ such items to the input stream at level $\ell+1$ in expectation, no matter whether $a$ is odd or even.
Indeed, if $a$ is odd, then the number of important items promoted is $\frac12 (a + X)$,
where $X$ is a zero-mean random variable uniform on $\{-1, 1\}$.
For an even $a$,
the number of important items that are promoted is $\frac{1}{2}a$ with probability 1.

Thus, random variable $\rank_{\ell}(y)$ for any level $\ell > 0$ is generated by the following random process:
To get $\rank_{\ell}(y)$, start with $\rank_{\ell - 1}(y)$ important items and remove those stored
in the level-$(\ell-1)$ relative-compactor $\calB_{\ell-1}$ at the end of execution;
there are $\rank(y; \calB_{\ell-1}) \le B$ important items in $\calB_{\ell-1}$.
Then, as described above, each compaction operation at level $\ell-1$ involving $a > 0$ important items promotes to level $\ell$
either $\frac12 a$ important items if $a$ is even, or $\frac12 (a + X)$ important items if $a$ is odd, i.e., the compaction is important.
In total, $\rank_{\ell - 1}(y) - \rank(y; \calB_{\ell-1})$ important items are involved in compaction operations at level $\ell-1$.
Summarizing and letting $m_{\ell-1}$ be the number of important compaction operations at level $\ell-1$, we have
\begin{equation}\label{eqn:rank_process}
\rank_\ell(y) = \frac12\cdot \left(\rank_{\ell - 1}(y) - \rank(y; \calB_{\ell-1}) + \mathrm{Binomial}(m_{\ell-1})\right) \,,
\end{equation}
where $\mathrm{Binomial}(n)$ represents the sum of $n$ zero-mean i.i.d.\ random variables uniform on $\{-1, 1\}$.

To simplify~\eqref{eqn:rank_process},
consider the following sequence of random variables $Y_0, \dots, Y_h$: Start with $Y_0 = \rank(y)$ and for $0 < \ell < h$ let
\begin{equation}\label{eqn:rank_process_Y}
Y_\ell = \frac12\cdot \left(Y_{\ell - 1} + \mathrm{Binomial}(m_{\ell-1}) \right)\,.
\end{equation}
Note that $\E[Y_\ell] = 2^{-\ell} \rank(y)$. 
Since variables $Y_\ell$ differ from $\rank_\ell(y)$ only by not subtracting $\rank(y; \calB_{\ell-1})$ at every level $\ell > 0$,
variable $Y_h$ stochastically dominates variable $\rank_h(y)$, so in particular,
\begin{equation}\label{eqn:rank_stoch_dominance}
\Pr[\rank_h(y) > 2^{-h+1}\rank(y)] \le \Pr[Y_h > 2^{-h+1}\rank(y)]\,,
\end{equation}
which implies that it is sufficient to bound $\Pr[Y_h > 2^{-h+1}\rank(y)]$.
Unrolling the definition of $Y_h$ in~\eqref{eqn:rank_process_Y}, we obtain
\begin{equation}\label{eqn:rank_process_Y2}
Y_h = 2^{-h}\cdot \rank(y) + \sum_{\ell = 0}^{h - 1} 2^{-h + \ell}\cdot \mathrm{Binomial}(m_\ell) \,.
\end{equation}
Observe that $Y_h$ equals a fixed amount ($2^{-h}\cdot \rank(y)$) plus a zero-mean sub-Gaussian variable
\begin{equation}\label{eqn:rank_process_Z}
Z_h = \sum_{\ell = 0}^{h - 1} 2^{-h + \ell}\cdot \mathrm{Binomial}(m_\ell)\,,
\end{equation}
since $\mathrm{Binomial}(n)$ is a sum of $n$ independent zero-mean sub-Gaussian variables (with variance 1).

To bound the variance of $Z_h$, first note that for any $\ell < h$, we have
$m_\ell\le \rank_\ell(y) / k \le 2^{-\ell+1}\rank(y) / k$ by Lemma~\ref{lem:single layer} and
by conditioning on $\rank_{\ell}(y) \leq 2^{-\ell+1}\rank(y)$.
As $\V[\mathrm{Binomial}(n)] = n$, the variance of $Z_h$ is
\begin{align*}
\V[Z_h] \le \sum_{\ell = 0}^{h - 1} 2^{-2h + 2\ell}\cdot m_\ell
\le \sum_{\ell = 0}^{h - 1} 2^{-2h + 2\ell}\cdot \frac{2^{-\ell+1}\rank(y)}{k}
= \sum_{\ell = 0}^{h - 1} \frac{2^{-2h + \ell + 1}\rank(y)}{k}\le \frac{2^{-h + 1}\cdot \rank(y)}{k}\,.
\end{align*}

Note that $\Pr[Y_h > 2^{-h+1}\rank(y)] = \Pr[Z_h > 2^{-h}\rank(y)]$.
To bound the latter probability, we apply the tail bound for sub-Gaussian variables (Fact~\ref{fact:tailBound-subGaussian}) to get
\begin{align}
\Pr[Z_h > 2^{-h}\rank(y)]
&< \exp\left(-\frac{2^{-2h}\cdot \rank(y)^2}{2\cdot (2^{-h + 1}\cdot \rank(y) / k)} \right)
\nonumber\\
&= \exp\left(-2^{-h - 2}\cdot \rank(y)\cdot k \right)
\nonumber\\
&= \exp\left(-2^{-h + H(y) - 6}\cdot 2^{4 - H(y)} \rank(y)\cdot k \right)
\nonumber\\
&\le \exp\left(-2^{-h + H(y) - 6}\cdot B\cdot k \right) \label{eqn:Z_h_tail_bound_ineq2}
\\
&\le \exp\left(-2^{-h + H(y) - 6}\cdot 2^6\cdot \frac{1}{\eps^2}\cdot \ln\frac{1}{\delta} \right) \label{eqn:Z_h_tail_bound_ineq3}
\\
&\le \exp\left(-2^{-h + H(y)}\cdot \ln\frac{1}{\delta} \right)
= \delta^{2^{H(y) - h}}
\le \delta\cdot 2^{ - H(y) + h}\,, \label{eqn:Z_h_tail_bound_ineq4}
\end{align}
where inequality~\eqref{eqn:Z_h_tail_bound_ineq2} uses $2^{4 - H(y)} \rank(y) \ge B$ (by the definition of $H(y)$, cf. Equation \eqref{eq:2to4}),
inequality~\eqref{eqn:Z_h_tail_bound_ineq3} follows from Claim~\ref{clm:kBlowerBound},
inequality~\eqref{eqn:Z_h_tail_bound_ineq4} uses $\eps \le 1$, and the last inequality uses $\delta \le 0.5$.
As explained above, this concludes the proof.
\end{proof}

In what follows, we condition on the bound on $\rank_h(y)$ in Lemma~\ref{lem:decay rank} for any $h < H(y)$.

\begin{lemma} \label{lem:ranky0}
Assume that $H(y) > 0$. Conditioned on the bound on $\rank_{H(y)-1}(y)$ in Lemma~\ref{lem:decay rank},
it holds that $\rank_{H(y)}(y) = 0$.
\end{lemma}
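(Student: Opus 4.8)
The plan is to combine the two structural bounds already in hand: the recursive rank-decay inequality of Observation~\ref{obs:decay rank 1}, which says $\rank_{h+1}(y) \le \max\{0, \rank_h(y) - B/2\}$ deterministically, and the high-probability halving bound of Lemma~\ref{lem:decay rank}, which (under the conditioning we have adopted) gives $\rank_{H(y)-1}(y) \le 2^{-(H(y)-1)+1}\rank(y) = 2^{2-H(y)}\rank(y)$. The definition of $H(y)$ is precisely that it is the minimal $h$ with $2^{2-h}\rank(y) \le B/2$, so in particular $2^{2-H(y)}\rank(y) \le B/2$.

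Putting these together: apply Observation~\ref{obs:decay rank 1} at level $h = H(y)-1$ to get $\rank_{H(y)}(y) \le \max\{0, \rank_{H(y)-1}(y) - B/2\}$. Since $\rank_{H(y)-1}(y) \le 2^{2-H(y)}\rank(y) \le B/2$ by the conditioning and the definition of $H(y)$, the quantity inside the max is $\le 0$, hence $\rank_{H(y)}(y) \le 0$. As ranks are nonnegative, $\rank_{H(y)}(y) = 0$. One small edge case to mention: if $H(y) = 0$ then the statement is about $\rank_0(y) = \rank(y)$, but $H(y) = 0$ forces $4\rank(y) \le B/2$, and in fact the whole conditioning in Lemma~\ref{lem:decay rank} was stated under the assumption $H(y) > 0$; since we only need the lemma when there is something to prove, it is cleanest to note that for $H(y) = 0$ we have $\rank(y) < B/2$, so no important item ever leaves level $0$ anyway — though strictly the statement of Lemma~\ref{lem:ranky0} implicitly presumes $H(y) > 0$ as its hypothesis references $\rank_{H(y)-1}(y)$.

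There is essentially no obstacle here; this is a two-line deduction chaining a deterministic inequality with the probabilistic one proved in the previous lemma. The only thing to be careful about is matching the exponents in the definition of $H(y)$ to the bound supplied by Lemma~\ref{lem:decay rank} at level $h = H(y)-1$: the lemma gives the factor $2^{-h+1}$, so at $h = H(y)-1$ this is $2^{-H(y)+2} = 2^{2-H(y)}$, which is exactly the threshold quantity $2^{2-h}\rank(y)$ appearing in the definition of $H(y)$ at $h = H(y)$, and that threshold is $\le B/2$ by minimality of $H(y)$. So the subtraction of $B/2$ in Observation~\ref{obs:decay rank 1} wipes it out. I would write the proof as a single short displayed chain of inequalities.
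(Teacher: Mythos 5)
Your proof is correct and follows exactly the paper's argument: combine the bound $\rank_{H(y)-1}(y)\le 2^{2-H(y)}\rank(y)\le B/2$ (from Lemma~\ref{lem:decay rank} plus the definition of $H(y)$) with Observation~\ref{obs:decay rank 1} at level $H(y)-1$. Your aside about $H(y)=0$ is a fair point of pedantry, but the paper's Lemma~\ref{lem:decay rank} already carries the hypothesis $H(y)>0$, so the statement here inherits it.
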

\begin{proof}
According to Lemma~\ref{lem:decay rank} and the definition of $H(y)$
as the minimal $h$ for which $2^{2-h} \rank(y) \leq B/2$, 
$$ \rank_{H(y)-1}(y) \leq 2^{2-H(y)} \rank(y) \leq \frac12 B \,. $$
Invoking Observation~\ref{obs:decay rank 1}, we get
$ \rank_{H(y)}(y) \leq \max\{0, \rank_{H(y)-1}(y) - B/2\} = 0$.
\end{proof}

We are now ready to bound the overall error of the sketch for item $y$, i.e.,
$\err(y) = \hat{\rank}(y) - \rank(y)$ where $\hat{\rank}(y)$ is the estimated rank of $y$.
It is easy to see that 
$$ \err(y) = \sum_{h=0}^H 2^h \err_h(y),$$
where $H$ is the highest level with a relative-compactor (that never produces any output). To bound this error, we refine the guarantee of Lemma~\ref{lem:single layer}. Notice that 
for any particular relative-compactor,
the bound $ \sum_{t=1}^n |\Delta^t_h(y)|$ referred to in Lemma~\ref{lem:single layer} applied to a level $h$ is a potentially crude upper bound on $\err_h(y)=\sum_{t=1}^n \Delta^t_h(y)$:
Each non-zero term $\Delta^t_h(y)$ is positive or negative with equal probability, so the terms are likely to involve a large amount of cancellation. To take advantage of this, we bound the variance of $\err(y)$.

\begin{lemma}\label{lem:var}
Conditioned on the bound on $\rank_h(y)$ in Lemma~\ref{lem:decay rank} for any $h < H(y)$,
$\err(y)$ is a zero-mean sub-Gaussian random variable with $\V[\err(y)]\leq 2^5\cdot \rank(y)^2 / (k\cdot B)$.
\end{lemma}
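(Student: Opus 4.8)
The plan is to express $\err(y)$ as one weighted sum of independent $\pm1$ coins, one per ``important'' compaction, bound its variance level by level using Lemma~\ref{lem:single layer} and the rank bounds of Lemma~\ref{lem:decay rank}, and invoke the stated fact that such a sum is sub-Gaussian. Concretely, recall $\err(y)=\sum_{h=0}^{H}2^h\err_h(y)$ with $\err_h(y)=\sum_{t=1}^n\Delta^t_h(y)$, and that by Observation~\ref{obs:even y} each $\Delta^t_h(y)$ is either $0$ (non-important step) or uniform on $\{-1,1\}$ (important step), its value being the fresh coin of the compaction executed in step $t$. Revealing the compaction coins in the order the compactions are carried out by Algorithm~\ref{code:full}, the parity of $y$ with respect to any compaction---hence whether its step is important---is determined by coins of strictly earlier compactions only. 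Thus $\err(y)=\sum_i 2^{h_i} I_i \xi_i$ is a sum of bounded martingale differences, where $i$ indexes compactions, $h_i$ is the level of compaction $i$, $I_i\in\{0,1\}$ marks importance, and $\xi_i$ is its Rademacher coin; a short MGF computation using $\E[e^{s\xi_i}\mid\text{past}]=\cosh(s)\le e^{s^2/2}$ shows $\err(y)$ is zero-mean and sub-Gaussian with variance proxy equal to the total contributed variance $\sum_i I_i 2^{2h_i}$, i.e.\ $\V[\err(y)]\le\sum_{h}2^{2h}\cdot(\text{number of important level-}h\text{ steps})$. (If $H(y)=0$ there is nothing to condition on and in fact $\err(y)=0$, since then $\rank(y)<B/2$ and no important item is ever compacted; so assume $H(y)>0$.)

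To bound the variance, I would combine three facts. By Lemma~\ref{lem:single layer} the number of important level-$h$ steps is at most $\rank_h(y)/k$. Conditioning on the event of Lemma~\ref{lem:decay rank} gives $\rank_h(y)\le 2^{-h+1}\rank(y)$ for all $h<H(y)$. And Lemma~\ref{lem:ranky0} together with the monotonicity $\rank_{h+1}(y)\le\rank_h(y)$ (Observation~\ref{obs:decay rank 1}) gives $\rank_h(y)=0$ for all $h\ge H(y)$, so levels $h\ge H(y)$ contribute no important steps. Hence
\[
\V[\err(y)]\ \le\ \sum_{h=0}^{H(y)-1}2^{2h}\cdot\frac{2^{-h+1}\rank(y)}{k}\ =\ \frac{2\,\rank(y)}{k}\sum_{h=0}^{H(y)-1}2^{h}\ <\ \frac{2\,\rank(y)}{k}\cdot 2^{H(y)}\ \le\ \frac{2^5\,\rank(y)^2}{k\cdot B}\,,
\]
where the final inequality is Equation~\eqref{eq:2to4}; this is exactly the claimed bound, and mirrors the informal derivation around Equations~\eqref{die}--\eqref{keyeq}.

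The one step requiring genuine care---the main obstacle---is reconciling the clean martingale picture above with the conditioning on the event $E$ of Lemma~\ref{lem:decay rank}: since $E$ constrains $\rank_h(y)$, which is a function of the lower-level compaction coins, under $\Pr[\cdot\mid E]$ the coins $\xi_i$ need not be unbiased, so the martingale structure does not literally survive the conditioning. I would handle this by analysing a \emph{stopped} version of the martingale-difference sum, halted the first time either the running conditional-variance sum $\sum_{j\le i} I_j 2^{2h_j}$ would reach $V:=2^5\rank(y)^2/(k B)$ or an important step occurs at a level $\ge H(y)$. Using $B\ge 2k$ and $2^{H(y)}\le 2^4\rank(y)/B$, a single important level-$(<\!H(y))$ step contributes at most $V$ to the variance, so the stopped sum is \emph{unconditionally} zero-mean and sub-Gaussian with variance proxy $O(V)$; and by Lemmas~\ref{lem:single layer}, \ref{lem:decay rank}, and~\ref{lem:ranky0}, on $E$ neither stopping rule ever fires, so the stopped sum coincides with $\err(y)$ there. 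This controls $\Pr[\,|\err(y)|\ge a\ \text{and}\ E\,]$, which---together with $\Pr[E^c]\le\delta$---is what the subsequent error analysis needs, and lets us record the conclusion in the form stated in the lemma; the remainder is the routine geometric summation above.
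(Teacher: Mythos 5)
Your variance calculation is identical to the paper's: bound the number of important level-$h$ steps by $\rank_h(y)/k$ via Lemma~\ref{lem:single layer}, cap $\rank_h(y)$ via Lemma~\ref{lem:decay rank}, kill levels $\ge H(y)$ via Lemma~\ref{lem:ranky0}, and sum the geometric series against \eqref{eq:2to4}. Where you genuinely depart from the paper is in the probabilistic bookkeeping. The paper's proof states flatly that ``$\err_h(y)$ is a sum of at most $\rank_h(y)/k$ random variables, i.i.d.\ uniform in $\{-1,1\}$'' and that $\err(y)$ is therefore a sum of independent sub-Gaussians, without acknowledging that (i) the number of summands at each level is itself a random quantity determined by lower-level coins, and (ii) conditioning on the event $E$ of Lemma~\ref{lem:decay rank} --- which constrains exactly those lower-level coins --- can in principle bias the Rademacher coins and break the clean i.i.d.\ picture. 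You are right that this needs an argument: the natural filtration makes $I_i$ and $h_i$ predictable (measurable w.r.t.\ the past), so $\err(y)$ is a sum of bounded martingale differences, and the supermartingale $\exp(sS_i - \tfrac{s^2}{2}\sum_{j\le i}I_j 2^{2h_j})$ together with a predictable stopping time that halts strictly before the running quadratic variation would exceed $V$ (and before any important step at level $\ge H(y)$) yields an \emph{unconditional} sub-Gaussian tail for the stopped process, which agrees with $\err(y)$ on $E$. That gives $\Pr[\{|\err(y)|\ge a\}\cap E]\le 2\exp(-a^2/2V)$, which is precisely what Theorem~\ref{thm:sketch} consumes (after adding $\Pr[E^c]\le\delta$). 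Two small remarks: you can in fact get variance proxy exactly $V$ rather than $O(V)$, since the geometric-sum bound on $E$ is strict (so the stopping rule never fires on $E$ and you never pay for a partial last increment); and your check that a single sub-$H(y)$ step has variance at most $V$, via $B\ge 2k$ and \eqref{eq:2to4}, is what justifies halting before the overshoot. In short: same arithmetic, but you supply the martingale/optional-stopping scaffolding that the paper leaves implicit, and that scaffolding is what makes the lemma's ``conditioned on'' phrasing rigorous.
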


\begin{proof}
Consider the relative-compactor at any level $h$.
By Lemma~\ref{lem:single layer}, $\err_h(y)$ is a sum of at most $\rank_h(y)/k$ random variables, i.i.d.\ uniform in $\{-1,1\}$. In particular, $\err_h(y)$ is a zero-mean sub-Gaussian random variable with $\V[\err_h(y)]\leq \rank_h(y)/k$.
Thus, $\err(y)$ is a sum of independent zero-mean sub-Gaussian random variables, and as such is itself a zero-mean sub-Gaussian random variable.

It remains to bound the variance of $\err(y)$, for which we first bound $\V[\err_h(y)]$ for each $h$.
If $\rank_h(y) = 0$, then Observation \ref{obs:even y} implies that $\err_h(y)=0$, and hence that $\V[\err_h(y)]=0$.
Thus, using Lemma~\ref{lem:ranky0}, we have $\V[\err_h(y)]=0$ for any $h \geq H(y)$. 
For $h < H(y)$, we use $\V[\err_h(y)] \leq \rank_h(y)/k$ to obtain:
\begin{align*}
\V[\err(y)]
&= \sum_{h=0}^{H(y)-1} 2^{2h} \V[\err_h(y)]
\\
&\leq \sum_{h=0}^{H(y)-1} 2^{2h} \cdot \frac{\rank_h(y)}{k}
\leq \sum_{h=0}^{H(y)-1} 2^{h+1} \cdot \frac{\rank(y)}{k}
	\leq 2^{H(y) + 1}\cdot \frac{\rank(y)}{k}
	\leq 2^5\cdot \frac{\rank(y)^2}{k\cdot B}\,,
\end{align*}
where the second inequality is due to Lemma~\ref{lem:decay rank} and the last inequality follows from~\eqref{eq:2to4}.
\end{proof}

To show that the space bound is maintained, we also need to bound the number of relative-compactors.

\begin{obs} \label{obs:H bound}
The number of relative-compactors ever created by the full algorithm (Algorithm \ref{code:full}) is at most $\lceil \log_2(n/B) \rceil + 1$.
\end{obs}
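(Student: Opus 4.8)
The plan is to reduce the statement to a bound on the highest level $H$ that ever receives a relative-compactor: since Algorithm~\ref{code:full} only ever \emph{adds} relative-compactors (never removing one) and creates them in increasing order of level, the total number of relative-compactors created over the whole run equals $H+1$. So it suffices to show $H \le \lceil \log_2(n/B)\rceil$. To do this I would track, for each level $h$, the length $N_h$ of the stream fed into the level-$h$ relative-compactor, with $N_0 = n$.

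First I would prove by induction that $N_h \le n/2^h$. The inductive step uses the fact that a single compaction operation at any level removes exactly $L_\schedulestate$ items from that level's buffer and emits exactly $L_\schedulestate/2$ of them to the output stream (lines~\ref{line:S1}--\ref{line:compaction_end} of Algorithm~\ref{code:single}, using that $L_\schedulestate$ is even). Summing over all compactions ever performed at level $h$: the number of items removed from the buffer is at most $N_h$, since an item can only be removed if it was previously inserted, so the total number of items emitted into level $h+1$ is at most $N_h/2$; that is, $N_{h+1} \le N_h/2$.

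Next I would observe that a relative-compactor is created at level $h \ge 1$ only at the moment the level-$(h-1)$ compactor performs its \emph{first} compaction (that is when its output stream first becomes nonempty and \textsc{Insert} is called at level $h$). A compaction at level $h-1$ happens only when that buffer is full and a further item arrives, so at least $B+1$ items must have been inserted at level $h-1$; hence $N_{h-1} \ge B+1 > B$. Combining with the decay bound, $B < N_{h-1} \le n/2^{h-1}$, so $2^{h-1} < n/B$ and therefore $h-1 < \log_2(n/B)$. Since $h-1$ is an integer strictly less than $\log_2(n/B)$, we get $h-1 \le \lceil \log_2(n/B)\rceil - 1$, i.e.\ $h \le \lceil \log_2(n/B)\rceil$. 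Thus $H \le \lceil \log_2(n/B)\rceil$, and the number of relative-compactors is at most $\lceil \log_2(n/B)\rceil + 1$.

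There is no substantial obstacle here; the only points needing a little care are (i) checking that a compaction emits exactly half of what it consumes, so the decay $N_{h+1} \le N_h/2$ is clean, and (ii) the ceiling bookkeeping at the end — in particular, it is the \emph{strict} inequality $h-1 < \log_2(n/B)$ (which in turn relies on $N_{h-1} \ge B+1$ rather than merely $\ge B$) that is needed to conclude $h \le \lceil \log_2(n/B)\rceil$ even in the edge case where $\log_2(n/B)$ is itself an integer.
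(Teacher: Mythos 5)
Your proof is correct and takes essentially the same approach as the paper: both bound the number of items fed to level $h$ by $n/2^h$ (the paper via weight conservation, you via the equivalent halving induction) and then argue that at level $\lceil \log_2(n/B)\rceil$ the buffer never overflows, so no higher compactor is created. Your careful handling of the edge case where $\log_2(n/B)$ is an integer (using $N_{h-1} \ge B+1$ for a strict inequality) is in fact slightly more precise than the paper's wording, which loosely claims "fewer than $B$ items" at that level, though the paper's conclusion is unaffected since a buffer holding exactly $B$ items also does not trigger a compaction.
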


\begin{proof}
Each item on level $h$ has weight $2^h$, so there are at most $n /
2^h$ items inserted to the buffer at that level.
Applying this observation to $h = \lceil \log_2(n/B) \rceil$, we get that on
this level, there are fewer than $B$ items
inserted to the buffer, which is consequently not compacted, so the highest level has index at most $\lceil \log_2(n/B) \rceil$.
The claim follows (recall that the lowest level has index 0).
\end{proof}

We are now ready to prove the main result of this section, namely, the accuracy guarantees in the streaming setting
when the stream length is essentially known in advance.

\begin{theorem} \label{thm:sketch}
Assume that (a polynomial upper bound on) the stream length $n$ is known in advance.
For any parameters $0 < \delta \le 0.5$ and $0 < \eps \le 1$ satisfying $\delta > 1/\exp(\eps n/64)$,
there is a randomized, comparison-based, one-pass streaming algorithm that,
when processing a data stream consisting
of $n$ items from a totally-ordered universe $\mathcal{U}$,  produces a summary $S$ satisfying the following property. 
Given $S$, for any $y \in \mathcal{U}$ one can derive an estimate
$\hat{\rank}(y)$ of $\rank(y)$ such that
$$ \Pr\bigg[ |\hat{\rank}(y) - \rank(y)| > \eps \rank(y) \bigg] < \delta\,,$$
where the probability is over the internal randomness of the streaming algorithm.
The size of $S$ in memory words is
$$O\left(\eps^{-1}\cdot \log^{1.5}(\eps n)\cdot \sqrt{\log\frac1\delta}\right)\,.$$
\end{theorem}

\begin{proof}
First, suppose that  $\eps \le 4\cdot\sqrt{\ln(1/\delta) / \log_2(\eps n)}$.
Then we use Algorithm~\ref{code:full} with parameters $k$ and $n$, where $k$ is set as in~\eqref{eq:setk}.
Note that $k$ is an even positive integer as required by Algorithm~\ref{code:full}.
By Lemma~\ref{lem:decay rank}, with probability at least $1-\delta$, 
we have $\rank_h(y) \leq 2^{-h+1}\rank(y)$ for any $h < H(y)$
and we condition on this event happening.

We again apply the standard (Chernoff) tail bound for sub-Gaussian variables (Fact~\ref{fact:tailBound-subGaussian})
together with Lemma~\ref{lem:var} (for which we need the bound on $\rank_h(y)$ for any $h < H(y)$) and obtain
\begin{align*}
\Pr\left[ |\err(y)| \geq \eps \rank(y) \right] 
&< 2 \exp\left( -\frac{\eps^2\cdot \rank(y)^2}{2\cdot 2^5\cdot \rank(y)^2/(k\cdot B)} \right) 
\\
&\le 2 \exp\left( -\frac{\eps^2\cdot 2^6\cdot \eps^{-2}\cdot \ln \frac{1}{\delta}}{2^6} \right)
= 2 \exp\left( -\ln \frac{1}{\delta} \right)
= 2\delta\,,
\end{align*}
where we use Claim~\ref{clm:kBlowerBound} in the second inequality.
This concludes the calculation of the failure probability (up to scaling $\delta$ by a factor of $1/3$).

Regarding the memory usage, there are at most $\lceil\log_2(n/B)\rceil + 1 \le \log_2(\eps n)$  relative-compactors by Observation~\ref{obs:H bound}, and
each requires $B = 2\cdot k\cdot \lceil\log_2(n/k)\rceil$ memory words.
Thus, the memory needed to run the algorithm is at most
\begin{align}\label{eqn:spaceBound}
\log_2(\eps n)\cdot 2\cdot k\cdot \left\lceil\log_2 \frac{n}{k}\right\rceil
\le \log_2(\eps n)\cdot 2\cdot 2\cdot \left\lceil \frac{4}{\eps}\cdot \sqrt{\frac{\ln \frac{1}{\delta}}{\log_2(\eps n)}}\right\rceil \cdot O\left(\log(\eps n)\right)\,,
\end{align}
where we use that $\lceil\log_2(n/k)\rceil\le O\left(\log(\eps n)\right)$, which follows from $k\ge \eps^{-1} / \sqrt{\log_2(\eps n)}$.
Using $\eps \le 4\cdot\sqrt{\ln(1/\delta) / \log_2(\eps n)}$, we have $a := 4\eps^{-1}\cdot \sqrt{\ln(1/\delta) / \log_2(\eps n)} \ge 1$, so $\lceil a\rceil \le 2a$ and it follows that~\eqref{eqn:spaceBound}
is bounded by $O\left(\eps^{-1}\cdot \log^{1.5}(\eps n)\cdot \sqrt{\log(1/\delta)}\right)$.

For $\eps > 4\cdot\sqrt{\ln(1/\delta) / \log_2(\eps n)}$, we use the comparison-based streaming algorithm by Zhang et al.~\cite{zhang2006space} that requires space  $O\left(\eps^{-2}\cdot \log(\eps^2 n)\cdot \log(1/\delta)\right)$ and otherwise satisfies the same error guarantee as our algorithm.
To get the desired space bound, we observe that the case condition 
implies $\sqrt{\log_2(\eps n)} > 4\cdot\sqrt{\ln(1/\delta)}\cdot \eps^{-1}$ and thus,
 $O\left(\eps^{-2}\cdot \log(\eps^2 n)\cdot \log(1/\delta)\right) \le O\left(\eps^{-1}\cdot \log^{1.5}(\eps n)\cdot \sqrt{\log(1/\delta)}\right)$.\footnote{
	In fact, as we show in Section~\ref{s:mergeability}, one may use a variant of our algorithm also for the case of large $\eps$, that is, when $\eps > 4\cdot\sqrt{\ln(1/\delta) / \log_2(\eps n)}$.
	Namely, we compute the largest value of $\overline{n}$ such that 
	$1 < k = 2\cdot \left\lceil (4/\eps)\cdot \sqrt{\ln(1/\delta) / \log_2(\eps \overline{n})}\right\rceil$ (for given $\eps$ and $\delta$);
	cf.~\eqref{eqn:lambda} in Section~\ref{s:mergeability}.
	If $n > \overline{n}$, then using buffers of size $\Theta(\log \eps \overline{n})$ is sufficient and we
	do not need to use the compaction schedule (intuitively, the section size $k$ is too small to be useful).
	In this section, we omit these details for brevity and focus just on the main case of relatively small $\eps$.
} We remark that the algorithm from~\cite{zhang2006space} does not require any foreknowledge of the total input length $n$.
\end{proof}

\label{s:analysis}

\paragraph{Update time.}
\label{s:updatetime}
We now analyze the amortized update time of Algorithm~\ref{code:full} and show that it can be made 
$O(\log B) = O(\log(k) + \log\log(n/k))$, i.e., the algorithm processes $n$ streaming updates in total time $O(n\cdot \log B)$.
To see this, first observe that the time complexity is dominated, up to a constant factor, by running Algorithm~\ref{code:single} for the relative-compactor
at level $0$. Indeed, the running time can be decomposed into the operations done by Algorithm~\ref{code:full} itself, plus
the running time of Algorithm~\ref{code:single} for each level of the sketch, and the former is bounded by the latter.
Moreover, at level $h$ there are at most $n/2^h$ items added to the buffer, implying that the running time 
of Algorithm~\ref{code:single} decreases exponentially with the level.
At level $0$, the update time is $O(1)$, except for performing compaction operations (lines~\ref{line:S1}-\ref{line:compaction_end}
of Algorithm~\ref{code:single}).
To make those faster, we maintain the buffer sorted after each insertion, which can be achieved by using
an appropriate data structure in time $O(\log B)$ per update.
Then the time to execute each compaction operation is linear in the number of items removed from the buffer,
making it amortized constant.
Hence, the amortized update time with such adjustments is $O(\log B)$.

\section{Handling Unknown Stream Lengths}
\label{s:unknownlength}

The algorithm of Section \ref{s:fullsketch} and analysis in Sections~\ref{s:relativecompactor}-\ref{s:fullanalysis}
proved Theorem \ref{thm:sketch} in the streaming setting assuming that (an upper bound on) $n$ is known, where $n$ is the true stream length. The space usage of the algorithm grows polynomially with the logarithm of this upper bound, so if this upper bound is at most $n^c$ for some constant $c \ge 1$, then the space usage of the algorithm will remain as stated in Theorem \ref{thm:sketch}, with only the hidden constant factor changing. 

In the case that such a polynomial upper bound on $n$ is not known, we
modify the algorithm slightly, and start with an initial estimate $N_0$
of $n$, namely, $N_0 = \Theta(\eps^{-1})$. That is, we begin by running Algorithm~\ref{code:full} with parameters $k$ and $N_0$.
As soon as the stream length hits the current estimate $N_i$,
the algorithm ``closes out'' the current data structure and continues to store it in ``read only'' mode,
while initializing a 
new summary based on the estimated stream length of $N_{i+1} = N_i^2$ (i.e., we execute Algorithm~\ref{code:full} with parameters $k$ and $N_{i+1}$; 
only if $\eps > 4\cdot\sqrt{\ln(1/\delta) / \log_2(\eps N_{i+1})}$ we switch to the algorithm from~\cite{zhang2006space} as in the proof of Theorem \ref{thm:sketch}).\footnote{
In a practical implementation, we suggest not to close out the current summary, but rather
recompute the parameters $k$ and $B$ of every relative-compactor in the summary, according
to the new estimate $N_{i+1}$, and continue with using the summary. 
The analysis in Section~\ref{s:mergeability} (which applies in the more general mergeability setting) shows that the same accuracy guarantees as in Theorem \ref{thm:sketch} hold for this
variant of our algorithm. 
Here, we choose to have one summary for each estimate of $n$ because it is amenable to a much simpler analysis (it is not clear how to extend this simpler analysis from
the streaming setting to the
general mergeability setting of Section~\ref{s:mergeability}).}
This process occurs at most $\log_2 \log_2 (\eps n)$ many times, before the guess is at least the true stream length $n$.
At the end of the stream, the rank of any item $y$ is estimated by summing the estimates returned
by each of the at most $\log_2 \log_2 (\eps n)$ summaries stored by the algorithm.

To prove a variant of Theorem \ref{thm:sketch} for unknown stream lengths, we need to
bound the space usage of the algorithm,
and the probability of having a too large error for a fixed item $y$.
We start with some notation.
Let $\ell$ be the biggest index $i$ of estimate $N_i$ used by the algorithm; note that $\ell \leq \log_2 \log_2(\eps n)$.
Let $\sigma_i$ denote the substream processed by the summary with the $i$-th guess for the stream length for $i=0, \dots \ell$.
Let $\sigma' \circ \sigma''$ denote the concatenation of two streams $\sigma'$ and $\sigma''$.
Then the complete stream processed by the algorithm is $\sigma = \sigma_0 \circ \sigma_1 \circ \dots \circ \sigma_\ell$.
Let $k_i$ and $B_i$ be the values of parameters $k$ and $B$ computed for estimate $N_i$.

\paragraph{Space bound.}
We claim that the sizes of summaries for the substreams $\sigma_0, \sigma_1, \dots, \sigma_\ell$ sum up 
to $O\left( \eps^{-1}\cdot \log^{1.5} (\eps n)\cdot \sqrt{\log(1/\delta)}\right)$,
as required.
By Theorem~\ref{thm:sketch}, the size of the summary for $\sigma_i$ is 
$O\left( \eps^{-1}\cdot \log^{1.5} (\eps N_i)\cdot \sqrt{\log(1/\delta)}\right)$.
In the special case $\ell=0$, the size of the summary for $\sigma_0$ satisfies the bound
provided that $N_0 = O(\eps^{-1})$. 
For $\ell \ge 1$, since $N_{\ell-1} < n$ and $N_\ell = N_{\ell-1}^2$, it holds that $N_\ell \le n^2$ and
thus, the size of the summary for $\sigma_\ell$ satisfies the claimed bound.
As $N_{i+1} = N_i^2$, the $\log^{1.5}(\eps N_i)$ factor
in the size bound from Theorem~\ref{thm:sketch} increases by a factor of $2^{1.5}$ when we increase $i$.
It follows that 
the total space usage is dominated, up to a constant factor, by the size of the summary for $\sigma_\ell$. \qed

\paragraph{Failure probability.}
We need to show that $|\err(y)| = |\hat{\rank}(y)-\rank(y) | \le \eps\rank(y) $ with probability at least $1-\delta$
for any fixed item $y$. Note that $\rank(y) = \rank(y; \sigma) = \sum_{i=0}^\ell \rank(y; \sigma_i)$.

We apply the analysis in Section~\ref{s:fullanalysis} to all of the summaries at once.
Observe that for the tail bound in the proof of Theorem~\ref{thm:sketch},
we need to show that $\err(y)$ is a zero-mean sub-Gaussian random variable with a suitably bounded variance.
Let $\err^i(y)$ be the error introduced by the summary for $\sigma_i$.
By Lemma~\ref{lem:var}, $\err^i(y)$ is a zero-mean sub-Gaussian random variable with
$\V[\err^i(y)]\leq 2^5\cdot \rank(y; \sigma_i)^2 / (k_i\cdot B_i)$.
As $\err(y) = \sum_i \err^i(y)$ and as the summaries are created with independent randomness,
variable $\err(y)$ is also zero-mean sub-Gaussian and its variance is bounded by
$$\V[\err(y)]
= \sum_{i=0}^\ell \V[\err^i(y)]
\le \sum_{i=0}^\ell 2^5\cdot \frac{\rank(y; \sigma_i)^2}{k_i\cdot B_i}
\le \frac{\eps^2\cdot \rank(y)^2}{2\cdot \ln(1/\delta)}\,$$
where the last inequality uses that $\sum_{i=0}^\ell \rank(y; \sigma_i)^2 \le \rank(y)^2$, 
which follows from $\rank(y) = \sum_{i=0}^\ell \rank(y; \sigma_i)$, and
that $k_i\cdot B_i = \Omega(\eps^{-2}\cdot \ln(1/\delta))$, which holds by Claim~\ref{clm:kBlowerBound}.
Applying the tail bound for sub-Gaussian variables similarly as in the proof of Theorem~\ref{thm:sketch}
concludes the proof of (a variant of) Theorem \ref{thm:sketch} for unknown stream lengths. \qed

\section{Full Mergeability}
\label{s:mergeability}

Fully-mergeable sketches allow us to sketch many different streams (or any inputs)
and then merge the resulting sketches (via an arbitrary sequence of pairwise merge operations) to get an accurate summary
of the concatenation of the streams. Mergeable sketches form an essential primitive
for parallel and distributed processing of massive data sets.
We show that our sketch maintains its accuracy guarantees even in these settings, and therefore, it is fully mergeable.

The merge operation takes as input two sketches $S'$ and $S''$
that processed two separate streams $\sigma'$ and $\sigma''$ and outputs a sketch $S$ that summarizes
the concatenated stream $\sigma = \sigma'\circ \sigma''$ (the order of $\sigma'$ and $\sigma''$ does not matter here).
For full mergeability, $S$ must satisfy the same space and accuracy guarantees
as if it was created by processing stream $\sigma$ in one pass.
Moreover, we do not assume that we built $S'$ by processing stream $\sigma'$ directly and similarly for $S''$,
but we allow to create $S'$ and $S''$ using merge operations.
Thus, we may create the resulting summary
from many summaries by merging them in an arbitrary way
(i.e., using an arbitrary merge tree).

We stress that we do not assume any advance knowledge about $n$, the total size of all the inputs merged,
which indeed may not be available in many applications.

\subsection{Merge Operation}\label{s:mergeOperation}

In this section, we describe the merge operation of our sketch, without assuming a foreknowledge of the total input size $n$.
The description builds on Section~\ref{s:mergeOperation-simplified}, which outlines a simplified merge procedure
under the assumption that a polynomial upper bound on $n$ is available.
To facilitate the merge operation, each sketch maintains list \textsc{RelCompactors}
of its relative-compactors and the following parameters:
\begin{description}[nosep]
\item{$H = $} index of the highest level with a relative-compactor in the sketch.
\item{$n = $} size of the input currently summarized by the sketch.
\item{$N = $} an upper bound on $n$, based on which the subsequent parameters $k$ and $B$ (defined below) are calculated.
\item{$\hat{k} = $} a parameter that depends on the desired accuracy $\eps$ and failure probability $\delta$, namely, 
    $\hat{k} = 4\eps^{-1}\cdot \sqrt{\ln(1/\delta)}$.

	Unlike $N$, the parameter $\hat{k}$ remains constant during the computation.
	The section size parameter $k$ (defined below) depends on $\hat{k}$ in addition to $N$.
\item{$k = $} size of a buffer section.
\item{$B = $} size of the buffer at each level.
\end{description}

\paragraph{Parameters $N, k,$ and $B$.}
The parameter $N$ is set similarly as in
Section~\ref{s:unknownlength}, that is,
it is equal to $N_i$ for some $i$, where $N_0 = \lceil 2^{10}\cdot \hat{k}\rceil$ and $N_{i+1} = N_i^2$.
We set the parameters $k$ and $B$ based on $N$ similarly as in Section~\ref{s:analysis} (cf.\ Equation~\eqref{eq:setk})
so that $k$ decreases and $B$ increases as we increase $N$.
Importantly, we no longer change $k$ and $B$ once $\hat{k}\le \sqrt{\log_2(N_i / \hat{k})}$.
To facilitate this, we define $\lambda \ge 0$ as the smallest integer $i$ such that
\begin{equation}\label{eqn:lambda}
\frac{\hat{k}}{\sqrt{\log_2(N_i / \hat{k})}} \le 1\,,
\end{equation}
and then for $i\ge 0$ we set
\begin{equation}\label{eqn:setk-merge}
k_i := 2^5\cdot \left\lceil \frac{\hat{k}}{\sqrt{\log_2(\overline{N_i} / \hat{k})}}\right\rceil
\quad\text{and}\quad
B_i := 2\cdot k_i\cdot \left\lceil \log_2 \left(\frac{\overline{N_i}}{k_i}\right)\right\rceil
\quad\text{where~} \overline{N_i} = \min\{N_i, N_\lambda\}\,.
\end{equation}
From a practical point of view, since $N_\lambda$ is about $\hat{k}\cdot 2^{\hat{k}^2}$, we have that $\overline{N_i} = N_i$ unless $N_i$ is extremely large or $\hat{k} = 4\eps^{-1}\cdot \sqrt{\ln(1/\delta)}$ is small 
(say, even for $\epsilon = 0.2$ we have $N_\lambda \gg 2^{400}$).
We use this truncation of $N_i$ to guarantee the space bound when $n > N_\lambda$.
Furthermore, observe that once we reach $n\ge N_\lambda$, the values of $k_i$ and $B_i$ do not change;
this is because, intuitively, the section size $k_i$ becomes too small to help in the analysis and our algorithm
can in fact be simplified by involving all sections in every compaction, without violating the error guarantees (i.e., when $n \ge N_\lambda$, the compaction schedule is no longer relevant).
The most challenging part of the analysis is bounding the error for $i\le \lambda$.

\paragraph{Description of the merge operation.}
The merge operation that creates sketch $S$ from $S'$ and $S''$ goes as follows:
Suppose that both $S'$ and $S''$ are based on the same parameter $\hat{k}$ 
and that $S'$ has at least as many levels as $S''$ (otherwise, we swap the sketches).
Then, via the following procedure, we merge $S''$ into $S'$, so $S''$ acts as a source sketch, while $S'$ is a target sketch of the merge operation.
First, we compute the parameters of the resulting sketch. 
For sketch $S$ resulting from the merge operation, $S.n$ is just the sum of $S'.n$ and $S''.n$.
If $S'.N \ge S.n$, then we keep parameters $N, k,$ and $B$ as they are set in $S'$.
Otherwise, $S'.N < S.n = S'.n + S''.n$, so $S'.N$ would be too small after merging.
In this case, we choose the next upper bound by setting $S.N = S'.N^2$ and also recompute $k$ and $B$ as described in Equation \eqref{eqn:setk-merge} above.

Recall from Section~\ref{s:mergeOperation-simplified} that
the crucial part of the merge operation is to combine the states of the compaction schedules at each level
without violating the relative-error guarantees even when many merge operations are executed.\footnote{By the state
of the compaction schedule, we mean the variable that determines how many sections of the buffer to include in a compaction operation
if one is performed.
In the streaming setting (Algorithm \ref{code:single}), we denoted this variable by $\schedulestate$, and maintain this notation
in the mergeability setting.}
Consider a level $h$
and let $\schedulestate'$ and $\schedulestate''$ be the states of the compaction schedule at level $h$ in $S'$ and $S''$, respectively.
The new state $\schedulestate$ at level $h$ will be the bitwise OR of $\schedulestate'$ and $\schedulestate''$;
we explain the intuition behind using the bitwise OR below.
Note that while in the streaming setting, the state corresponds to the number of compaction operations already performed,
after a merge operation this may not hold anymore.
Still, if the state is zero, this indicates that the level-$h$ buffer has not
yet been subject to any compactions. 

Having set up the parameters and states at each level,
we concatenate the level-$h$ buffers of $S'$ and of $S''$ at each level that appears in both of them.
Then we perform a single compaction operation at each level that has at least $S.B$ items,
in the bottom-up fashion. For such a compaction operation, all but the smallest $S.B$ items
in the buffer are automatically included in the compaction, while the
smallest $B$ items are treated exactly as a full buffer is treated in
the streaming setting to determine what suffix is compacted.
That is, the state variable $\schedulestate$ of the compaction schedule determines how many sections among the smallest $B$ items in the buffer are compacted,
via the number of trailing 1s in the binary representation of $\schedulestate$. If this number of trailing 1s is $j\ge 0$, 
then $j+1$ sections are compacted and we say that the compaction \emph{involves exactly $j+1$ sections of the buffer}.
Note that there is at most one compaction per level during the merge operation.
Finally, when $N_i> N_\lambda$, we do not use the compaction schedule as the section size becomes too small, i.e.,
we compact all buffer sections.

Algorithm~\ref{code:merge} provides pseudocode describing the merge operation specified above.
We note that inserting a single item $x$ can be viewed as a trivial
merge with a summary consisting just of $x$ (with weight $1$).

\begin{algorithm}[t]
	\caption{Merge operation of ReqSketch}\label{code:merge}
\begin{algorithmic}[1]{\small
	\Require Sketches $S'$ and $S''$ to be merged such that $S'.\hat{k} = S''.\hat{k}$ and $S'.H \ge S''.H$
	\Ensure A sketch answering rank queries for the combined inputs of $S'$ and $S''$

	\Comment{We merge $S''$ into $S'$}
	\State Set $S'.n = S'.n + S''.n$ \Comment{Combined input size}

	\If{$S'.N < S'.n$} \Comment{Upper bound on input size is too small} \label{li:growsize}

		\State Set $S'.N =  S'.N^2$ \Comment{Square the upper bound}
		\State Set $S'.k$ and $S'.B$ according to~\eqref{eqn:setk-merge}
	\EndIf
	\For{$h = 0, \ldots, S''.H$} \Comment{Combine buffers and states of compaction schedules}

		\State Insert all items in $S''$.\textsc{RelCompactors}[$h$] into $S'$.\textsc{RelCompactors}[$h$]
		\State $S'$.\textsc{RelCompactors}[$h$].$\schedulestate$ = $S'$.\textsc{RelCompactors}[$h$].$\schedulestate$ \textsf{OR} $S''$.\textsc{RelCompactors}[$h$].$\schedulestate$
		\label{algMerge-ln:combineStates}
	\EndFor
	\For{$h = 0, \ldots, S'.H $}
		\If{there are at least $S'.B$ items in $S'$.\textsc{RelCompactors}[$h$]}

			\State \Call{PerformCompaction}{$S', h$}
		\EndIf
	\EndFor
	\State \Return $S'$
	\medskip
	\Function{PerformCompaction}{$S', h, s$}
		\If{$h = S'.H$}
			\State Increase $S'.H$ by one
			\State Initialize relative-compactor at \textsc{RelCompactors}[$h+1$]
		\EndIf
		\State Set $\calB = $ $S'$.\textsc{RelCompactors}[$h$] \Comment{The level-$h$ buffer of $S'$}
		\State Sort items in $\calB$ in non-descending order
		\If{$S'.N \le N_\lambda$} \Comment{$\lambda$ is defined in~\eqref{eqn:lambda}}
			\State Compute $z = $ number of trailing 1s in binary representation of $\calB.\schedulestate$
			\State Set $s = S'.B - (z + 1)\cdot S'.k + 1$ \Comment{First slot of the buffer involved in the compaction}
			\label{ln-merge:firstSlotCompaction}
		\Else
			\Comment{Then $S'.k = \Theta(1)$}
			\State Set $s = S'.B / 2$
			\Comment{Compaction schedule not used when $S'.k$ is small}
		\EndIf
		\State Let $|\calB|$ be the number of items stored in $\calB$
			\Comment{$|\calB|$ may be larger than $S'.B$}
		\State Set $Z$ = equally likely either even or odd indexed items in the range $\calB[s:|\calB|]$
			\State \Comment{Note that the range $\calB[s:|\calB|]$ may be of an odd size, which does not cause any issues}
		\State Insert each item in $Z$ to $S'$.\textsc{RelCompactors}[$h + 1$]
		\State Mark slots $\calB[s:|\calB|]$ in the buffer as clear
		\State Increase $\calB.\schedulestate$ by $1$
	\EndFunction }
\end{algorithmic}
\end{algorithm}

Several remarks and observations are in order.
First, the combined buffer contains at most $2\cdot S.B$ items before the merge procedure begins performing compactions level-by-level, because
each buffer of $S'$ and each buffer of $S''$ stores at most $S.B$ items. Second,
when we perform a compaction on a level-$h$ buffer during the merge procedure, it contains no more than $\frac72 \cdot S.B$ items.
To see this, observe that there are three sources of input to the buffer at level $h$ during a merge operation: the at most $S.B$ items in $S'$ at level $h$ at the start of the merge operation,
the at most $S.B$ items in $S''$ at level $h$ at the start of the merge operation, and the output of the level-$(h-1)$ buffer during the merge procedure. 
An easy inductive argument shows
that the third source of inputs consists of at most $\frac32 \cdot
S.B$ items, as follows: 
Observe that if the level-$(h-1)$
  buffer has size at most $\frac72 S.B$ when it is compacted, then the number of items compacted 
  by that buffer is at most $\frac72 S.B - \frac{1}{2} S.B = 3 S.B$, and hence, the
  number of items output by the compaction is at most $\frac32 \cdot S.B$
  (here, we also use that $S.B$ as defined in~\eqref{eqn:setk-merge} is divisible by four, so $\frac32 \cdot S.B$ is even).
  This guarantees that at the time a level-$h$ buffer is actually compacted during a merge procedure, it contains no more than $\frac72 \cdot S.B$ items.

Third, using the bitwise OR in line~\ref{algMerge-ln:combineStates} to combine the states has two simple but important implications.
\begin{fact} \label{factstar}
When the $j$-th bit of $\schedulestate'$ or of $\schedulestate''$ is set to 1, then the $j$-th bit of
$\schedulestate = \schedulestate'$ \textsf{OR} $\schedulestate''$ is also set to~1.
\end{fact}
	
\begin{fact} The bitwise OR of $\schedulestate'$ and $\schedulestate''$ (interpreted
  as bitstrings) is no larger than $\schedulestate' + \schedulestate''$ (interpreted
  as integers). \label{factstarstar}
\end{fact}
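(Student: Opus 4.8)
The plan is to verify the inequality bit by bit. Write the binary expansions $\schedulestate' = \sum_{j\ge 0} a_j\, 2^j$ and $\schedulestate'' = \sum_{j\ge 0} b_j\, 2^j$ with $a_j, b_j \in \{0,1\}$ (both sums being finite since $\schedulestate'$ and $\schedulestate''$ are non-negative integers). By definition of bitwise OR, the $j$-th bit of $\schedulestate'\ \textsf{OR}\ \schedulestate''$ equals $\max\{a_j, b_j\}$, so $\schedulestate'\ \textsf{OR}\ \schedulestate'' = \sum_{j\ge 0} \max\{a_j,b_j\}\, 2^j$.

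The one observation needed is the pointwise bound $\max\{a_j, b_j\} \le a_j + b_j$ for every $j$, which is immediate by inspecting the three possibilities: if $a_j = b_j = 0$ both sides are $0$; if exactly one of $a_j,b_j$ equals $1$ both sides are $1$; and if $a_j = b_j = 1$ the left side is $1 \le 2$. Multiplying this inequality by $2^j$ and summing over $j$ yields
$$\schedulestate'\ \textsf{OR}\ \schedulestate'' \;=\; \sum_{j\ge 0} \max\{a_j,b_j\}\, 2^j \;\le\; \sum_{j\ge 0}(a_j + b_j)\, 2^j \;=\; \schedulestate' + \schedulestate''\,,$$
which is exactly the claim. (Equivalently, one may invoke the identity $\schedulestate'\ \textsf{OR}\ \schedulestate'' = \schedulestate' + \schedulestate'' - (\schedulestate'\ \textsf{AND}\ \schedulestate'')$ and note that $\schedulestate'\ \textsf{AND}\ \schedulestate'' \ge 0$.)

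I do not anticipate any real obstacle here, as the statement is elementary; the only point requiring a modicum of care is making the bit-indexing precise and noting that the relevant sums are finite, so that the termwise comparison is legitimate. Equality holds precisely when $\schedulestate'$ and $\schedulestate''$ have no common set bit, i.e.\ $\schedulestate'\ \textsf{AND}\ \schedulestate'' = 0$.
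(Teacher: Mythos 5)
Your proof is correct. The paper states Fact~\ref{factstarstar} without proof, treating it as an elementary observation, and your bitwise argument (together with the alternative one-line derivation via $\schedulestate'\ \textsf{OR}\ \schedulestate'' = \schedulestate' + \schedulestate'' - (\schedulestate'\ \textsf{AND}\ \schedulestate'')$) is exactly the routine justification one would supply.
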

Fact \ref{factstarstar} will be used later to show that the state $\schedulestate$ never has more than $\lceil\log_2 (S.N / S.k)\rceil$ bits,
so we never compact more than $\lceil\log_2 (S.N / S.k)\rceil$ buffer sections during a compaction.
See Observation \ref{obs:numberOfBits} for details.
(Note that this is only relevant for $S.N\le N_\lambda$.)

\subsection{Preliminaries for the Analysis of the Merge Procedure}

Consider a sketch $S$ built using an arbitrary sequence of merge operations from an input of size $n$.
We will show that the space bound holds for $S$ using an argument similar to the one in the proof of Theorem~\ref{thm:sketch},
but the calculation of the failure probability needs to be modified
compared to Section~\ref{s:fullanalysis}.
The main challenge is that the parameters $k$ and $B$ change as more and more merge operations are performed.

To prove that the accuracy guarantees hold for $S$,
consider the binary tree $T$ in which each of $n$ leaves corresponds to a single item of the input.
Internal nodes correspond to merge operations (recall that inserting one item to the sketch can be seen as the merge of the sketch with a trivial sketch storing the item to be inserted),
and hence
each internal node $t$ in $T$ represents a sketch $S_t$
resulting from the merge operation that corresponds to node $t$.
Also, for a particular level $h$, node $t$ represents the level-$h$ buffer of $S_t$.
Finally, we say that $t$ represents the level-$h$ compaction operation (if any);
recall that the merge operation captured by an internal node $t$ performs at most one compaction operation at each level $h$.
The root of $T$ represents the final merge operation, which outputs the \emph{final sketch}.

Recall that we set the upper bounds $N$ on the input size used by the sketches
as $N_0 = \lceil 2^{10}\cdot \hat{k}\rceil$
and 
$N_i = N_{i-1}^2$ for $1 \leq i \leq \ell \le \lceil \log_2 \log_2(\eps n) \rceil$
(as $N_0\ge \hat{k}\ge 1/\eps$).
We assume that $\ell > 0$, otherwise the whole input can be stored in space $O(\hat{k}) = O(\eps^{-1}\cdot \sqrt{\log(1/\delta)})$.

We say that an (internal) node $t$ in tree $T$ is an \emph{$i$-node} for $0\le i\le \ell$
if the sketch $S_t$ represented by $t$ satisfies $S_t.N = N_i$, i.e.,
it uses the parameters $k_i$ and $B_i$.
Note that this means that if parameter $N$ is updated from $N_{i-1}$ to $N_i$ during the merge operation represented by $t$, then $t$ is considered an $i$-node.
Moreover, we say that node $t$ is a \emph{topmost $i$-node} if the parent of $t$ is a $j$-node for some $j>i$ or $t$ is the root of $T$.
Note that for any $i$, the subtrees of topmost $i$-nodes are disjoint.

As in Sections~\ref{s:relativecompactor} and~\ref{s:fullanalysis},
we consider a fixed item $y$ and analyze the error of the estimated rank of~$y$.
Let $\rank(y)$ denote the rank of $y$ in the input summarized by the sketch,
and let $\hat{\rank}(y)$ be the estimated rank of $y$ obtained from the final sketch $S$;
recall that we get this estimate by summing over all levels $h\ge 0$ the number of items $x\le y$ in the level-$h$ buffer of the final sketch, multiplied by $2^h$.
Our aim is to show that $|\err(y)| = |\hat{\rank}(y) - \rank(y)| \le \eps \rank(y)$ with probability at least $1 - \delta$.

\subsection{Analysis of a Single Level for Mergeability}
\label{s:mergeability-singleLevel}

For the duration of this section, we consider a single level $h$ and solely focus on $i$-nodes for $i\le \lambda$;
recall that the compaction schedule helps to decrease the error from compactions
and that we do not use the schedule during compactions represented by $i$-nodes for $i>\lambda$ (since the buffer section 
size is too small to make a difference).
For convenience, $\lambda$ refers to $\min\{\lambda, \ell\}$, i.e., if $\lambda > \ell$ we decrease $\lambda$ to $\ell$ compared to~\eqref{eqn:lambda}.
This is to ensure that, e.g., topmost $\lambda$-nodes are well-defined.
Note that when $\lambda = \ell$, then the only topmost $\lambda$-node is the root of the merge tree $T$.

We start by showing that the binary representation of the state $\schedulestate$
at level $h$ never has more than $\lceil\log_2 (S.N / S.k)\rceil$ bits, or equivalently, $\schedulestate \le S.N / S.k$.
Consequently, $\schedulestate$ (viewed as a bitstring) never has $\lceil\log_2 (S.N / S.k)\rceil$ trailing ones
just before a compaction operation (as after the operation, it would have more than $\lceil\log_2 (S.N / S.k)\rceil$ bits).

\begin{obs}\label{obs:numberOfBits}
Consider a node $t$ of tree $T$ and sketch $S$ represented by $t$.
Let $\schedulestate$ be the state of the level-$h$ buffer of $S$.
Then $\schedulestate\le S.N / S.k$.
\end{obs}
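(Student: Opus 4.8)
The plan is to bound $\schedulestate$ by a purely combinatorial quantity --- the number of compaction operations performed at level $h$ below $t$ --- and then bound that quantity by a counting argument on items.

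\textbf{Step 1 (reduction via the structure of $T$).} I would first prove the auxiliary statement that, for every node $t$, the state $\schedulestate$ of the level-$h$ buffer of $S_t$ is at most the number of compaction operations (scheduled \emph{or} special) performed at level $h$ anywhere in the subtree of $T$ rooted at $t$. This is a structural induction on $T$: at a leaf the state is $0$; a node representing a (special or scheduled) compaction adds exactly $1$ to the state of its unique child's level-$h$ buffer and simultaneously accounts for one new level-$h$ compaction (and adds nothing if the special compaction does nothing); and a merge node replaces the two children's states by their bitwise OR, which by Fact~\ref{factstarstar} is at most their sum, so the inductive bounds for the two disjoint child subtrees combine, together with the at most one scheduled compaction the merge itself performs at level $h$.

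\textbf{Step 2 (counting scheduled compactions).} Next I would bound the number of level-$h$ compactions in the subtree of $t$ against a budget of $S_t.N / S_t.k$ items. By induction on $h$, the number of distinct items that ever reside in a level-$h$ buffer anywhere in the subtree of $t$ is at most $S_t.n / 2^h$: it equals $S_t.n$ at level $0$, and every item entering level $h$ is emitted by a level-$(h-1)$ compaction, which outputs at most half of the pairwise-disjoint set of items it removes. Since the sets of items removed by distinct level-$h$ compactions are pairwise disjoint, and each \emph{scheduled} compaction removes at least one whole section, it suffices to observe that a section always holds at least $S_t.k$ items: the map $N \mapsto k(N)$ in~\eqref{eqn:setk-merge} is non-increasing, and the upper bound $N$ never decreases as a sketch absorbs more input, so $S_u.N \le S_t.N$ for every node $u$ in the subtree of $t$ and hence every such $u$ uses section size at least $k(S_t.N) = S_t.k$. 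Therefore the number of scheduled level-$h$ compactions in the subtree is at most $S_t.n/(2^h \cdot S_t.k) \le S_t.N/S_t.k$, using the invariant $S_t.n \le S_t.N$.

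\textbf{Step 3 (special compactions --- the main obstacle).} The difficulty, and the reason special compactions were built into the merge procedure, is that a special compaction increments the state yet only guarantees that at most $B/2$ items remain, so it need not remove a full section and cannot be charged to an $S_t.k$-block as above. I would account for these separately: every special compaction is represented by a topmost $i$-node for some $i$, these nodes have pairwise-disjoint subtrees, and a special compaction at level $h$ has no effect unless the level-$h$ buffer holds at least $B_i/2$ items, which forces that node's subtree to contain $\Omega(B_i \cdot 2^h)$ leaves; since $B_i$ grows geometrically in $i$, summing over $i$ shows the total number of \emph{effective} special compactions at level $h$ in the subtree of $t$ is $O\!\left(S_t.n/(2^h B_0)\right)$. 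Combining this with the scheduled count, and using that $B_0 = \Omega(S_t.k)$ --- a consequence of the choice of $N_0$ and the extra ``$+1$'' section in the definition of $B(N)$ in~\eqref{eqn:setk-merge}, which are tuned precisely to provide the needed slack --- together with $S_t.n \le S_t.N$, yields $\schedulestate \le S_t.N/S_t.k$. I expect the bulk of the work to be in carrying these constants through cleanly while juggling the interaction of the changing parameters $k,B$, the monotonicity of $N$ up the tree, and the split between scheduled and special compactions; it may ultimately be cleanest to fold everything into a single strengthened invariant of the form ``$\schedulestate \cdot S_t.k$ plus the number of items currently protected in the bottom half of the buffer is at most $S_t.n$ up to lower-order terms'' and induct on $T$ directly.
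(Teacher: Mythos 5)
The paper proves the observation by a single induction on the merge tree $T$, maintaining the strengthened invariant $\schedulestate \le r / S.k$, where $r$ is the number of items removed from the level-$h$ buffer by compactions in the subtree of $t$. At an internal node, the new state is the bitwise OR of the children's states plus an indicator $b$ for a level-$h$ compaction at $t$; the OR is bounded by the sum via Fact~\ref{factstarstar} (exactly what you invoke in Step~1), and the $+b$ is charged to the $\ge S.k$ items removed by that compaction, using $S.k \le \min(S'.k, S''.k)$ since $k$ is non-increasing in $N$. The coarse bound $r \le S.n \le S.N$ then immediately gives the claim, with no level-by-level weight accounting. Your Steps~1 and~2 taken together reconstruct this same charging argument, merely factored into two passes, and your level-by-level bound $S_t.n/2^h$ is a refinement the paper does not need.

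The genuine divergence is Step~3. The paper does not separate scheduled from special compactions: it applies the same unit charge to every level-$h$ compaction at the current node, implicitly relying on the claim that a special compaction, when it fires, also removes at least $S.k$ items. You are reasonable to flag that this is not immediate from the algorithm as written (a special compaction fires once the level-$h$ buffer exceeds $B/2$ items and then clears only the overflow, which need not be a full section), but your proposed workaround --- a geometric budget over topmost $i$-nodes --- is only a sketch, would at best yield $O(S.N/S.k)$ with an unspecified constant rather than the exact $S.N/S.k$, and that constant cannot be waved away: the observation is invoked precisely to conclude the state never has $\lceil\log_2(S.N/S.k)\rceil$ trailing ones, so that a scheduled compaction never addresses a nonexistent section. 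Your closing instinct to fold everything into a single strengthened invariant and induct on $T$ directly is exactly the paper's strategy; the right invariant is that $\schedulestate \cdot S.k$ is at most the number of items removed from level $h$, which is tighter and more tractable than your items-protected formulation.
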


\begin{proof}
Let $r$ be the number of items removed from the level-$h$ buffer of $S$ during all compactions represented by nodes
in the subtree of $t$.
We show that $\schedulestate\le r / S.k$ by induction. This implies $\schedulestate\le S.N / S.k$ as $r \le S.n\le S.N$.

The base case of a leaf node follows as $\schedulestate = 0$ and $r = 0$.
Let $S$ be the sketch represented by an internal node and let $S'$ and $S''$ be the sketches represented by its children. Let $\schedulestate'$ and $\schedulestate''$ be the states
of the level-$h$ buffers of $S'$ and $S''$, and let $r'$ and $r''$ be the number of items removed from the level-$h$ buffer
during compactions represented by nodes in the subtrees of $S'$ and $S''$, respectively. By the induction hypothesis, we have $\schedulestate'\le r' / S'.k$ and $\schedulestate''\le r'' / S''.k$.
Note that $r$ equals $r' + r''$ plus the number of items removed from
the level-$h$ buffer during the compaction represented by $t$ if there
is one.
Let $b\in \{0, 1\}$ be the indicator variable
with $b = 1$ iff there is a level-$h$ compaction represented by $t$.
Observe that $\schedulestate = (\schedulestate'\,\text{OR}\,\schedulestate'') + b$ and 
if $b = 1$, then the compaction removes at least $S.k$ items from the level-$h$ buffer.
We thus have $r \ge r' + r'' + b\cdot S.k$ and using this, we obtain
\begin{align*}
\schedulestate = (\schedulestate'\,\text{OR}\,\schedulestate'') + b 
\le \schedulestate' + \schedulestate'' + b 
\le \frac{r'}{S'.k} + \frac{r''}{S''.k} + b 
\le \frac{r'}{S.k} + \frac{r''}{S.k} + \frac{b\cdot S.k}{S.k}
\le \frac{r}{S.k}\,,
\end{align*}
where the penultimate inequality uses $S.k \le \min\{S'.k, S''.k\}$, which follows from $k_0 \ge k_1 \ge \cdots \ge k_\lambda$.
\end{proof}

For $i \le \lambda$, we recall that the second half of the buffer of size $B_i$ has $\lceil\log_2(N_i / k_i)\rceil$ sections of size $k_i$ (see Equation \eqref{eqn:setk-merge})
and that these sections are indexed from 1 such that the rightmost section
(with slots $B_i - k_i + 1, \dots, B_i$) is section~1
and section~$j$ consists of slots $B_i - j\cdot k_i + 1, \dots, B_i - (j-1)\cdot k_i$.
The definition of the compaction operation and Observation~\ref{obs:numberOfBits} imply that
section $\lceil\log_2(N_i / k_i)\rceil$ (i.e., the leftmost section of the second half of the buffer) is involved only in one compaction represented by an $i$-node on any leaf-to-root path in $T$.

\paragraph{Bounding the number of important compaction operations.}
As in Section~\ref{s:relativecompactor}, the key part of the analysis is bounding the number of level-$h$ compaction operations
that introduce some error for the fixed item $y$; recall that we call such compactions important and that by Observation~\ref{obs:even y},
a compaction is important if and only if it removes an odd number of important items from the buffer.
Also, recall that we call items $x\le y$ important and that for $h>0$, $\rank_h(y)$ denotes the total number of important items
promoted to level $h$ during compaction operations at level $h-1$ (represented by any node in $T$).
For level $0$, we have $\rank_0(y) = \rank(y)$.

The bound on the number of important  level-$h$ compactions in Lemma~\ref{lem:single layer mergeability - stronger} below is more involved than in the streaming setting (Section~\ref{s:relativecompactor}),
but this complexity allows for the tightest and most general analysis, presented in Section~\ref{s:k0sec}.
In particular, for any $0\le a\le \lambda$,
we will need a bound on the number of important level-$h$ compactions represented by $i$-nodes for $i\in [a,\lambda]$.

To state the bound, we first give a few definitions.
We say that a compaction \emph{involves important items} iff it removes at least one important item from the buffer; note that compactions
involving important items are a superset of important compactions. 
Let $Q_h$ be the set of nodes $t$ such that (i) $t$ is an $i$-node for $i\le \lambda$ that represents a level-$h$ compaction involving important items
(this compaction may or may not be important),
and (ii) there is no node $t'$ on the path from the parent of $t$ to the topmost $\lambda$-node containing $t$ in its subtree such that $t'$ represents a level-$h$ compaction involving important items.
Intuitively, $Q_h$ captures ``maximal'' nodes (disregarding $i$-nodes for $i > \lambda$, if any) that represent a level-$h$ compaction removing one or more important items from level $h$.
Note that an important item that remains in the level-$h$ buffer represented by a node $t\in Q_h$
(after performing the compaction operation represented by $t$) is never removed from the level-$h$ buffer during compactions represented by $i$-nodes for $i \le \lambda$, by the definition of $Q_h$.
For $i\in [0,\lambda]$, let $Q^i_h$ be the set of $i$-nodes in $Q_h$.

For some $0\le a\le \lambda$, let $\rank^{[a,\lambda]}_h(y)$ be the number of important items that are
either (i) removed from level $h$ during a compaction represented by an $i$-node for $i\in [a,\lambda]$,
or (ii) remain at the level-$h$ buffer of the sketch represented by a node $t\in Q^i_h$ for $i\in [a,\lambda]$
(after the compaction operation represented by $t$ is performed).
Note that important items in (ii) also belong to the level-$h$ buffer represented by a topmost $\lambda$-node since the level-$h$ buffer is not subject to a compaction that removes an important item and is represented by a node on the path from $t\in Q^i_h$ to its corresponding topmost $\lambda$-node, by the definition of $Q^i_h$.
We remark that the level-$h$ buffers represented by topmost $\lambda$-nodes may contain important items not present
in the level-$h$ buffers represented by nodes in $Q_h$ (these are items promoted from level $h-1$ to level $h$ during merge operations represented by nodes on the path from a node $t\in Q_h$ to a topmost $\lambda$-node).

We now state the bound on the number of important level-$h$ compactions represented by $i$-nodes
for $i\le \lambda$. Let $m^i_h$ be the number of important
compaction operations at level $h$ represented by $i$-nodes.

\begin{lemma}\label{lem:single layer mergeability - stronger}
For any level $h$ and any $0\le a\le \lambda$, it holds that
\begin{equation}\label{eqn:single layer mergeability - stronger}
\sum_{i = a}^{\lambda} m^i_h\cdot k_i \le 4\rank^{[a,\lambda]}_h(y)\,.
\end{equation}
\end{lemma}

\paragraph{Proof overview.}
The proof is an extension of the charging argument in Lemma~\ref{lem:single layer} to the mergeability setting.
In a nutshell, we will again charge each important compaction represented by an $i$-node for some  $a\le i\le \lambda$
to $k_i$ important items that are removed from the level-$h$ buffer (during a compaction represented by an $i'$-node for some  $i\le i'\le \lambda$) or that remain in the level-$h$ buffer represented by a node in $Q^{i'}_h$ for $i\le i'\le \lambda$.
However, unlike in the streaming setting, we will not identify specific important items to which we charge an important compaction.

Instead, for each node $t$ in the subtree of a node in $Q_h$, we will maintain the overall charge from $t$'s subtree to the (level-$h$) buffer represented by $t$.
Intuitively, when two buffers are merged during the merge procedure represented by an $i$-node $t$ for $a\le i\le \lambda$, the charge to the resulting buffer is the sum of the charges to the two buffers
increased or decreased by the following:
\begin{itemize}
	\item when the level-$h$ compaction represented by $i$-node $t$ (if any) is important, we increase the charge to the buffer by $k_i$,

	\item removing $r$ important items during the compaction operation (not necessarily important) decreases the charge to the buffer by $3r$, and

	\item if a child $t'$ of $t$ is a topmost $i'$-node for $i' < i$ such that there is an important compaction represented by an $i'$-node in the subtree of $t'$,
	we decrease by $2k_i$ the charge in the buffer represented by $t$ (not by $t'$).
\end{itemize}
The latter decrease helps us to deal with merge operations in which parameters $k$ and $B$ of the level-$h$ buffer change (in particular, $k_i$ decreases and therefore, we need to create a slack in the analysis).
We prove below that (i) the charge to any buffer is always bounded by the number of important items in the buffer
and that (ii) these properties imply~\eqref{eqn:single layer mergeability - stronger}, proving the lemma.
Showing (ii) is not difficult given (i); the only non-trivial part is bounding the total decrease of the charge 
from the third bullet above, which is done in the parents of topmost $i$-nodes.

Proving (i) relies on the compaction schedule. We in particular show that for each $i$-node $t$ 
either there is slack at $t$, i.e., the charge to $t$ is smaller by at least $k_i$ than the number of important items in the level-$h$ buffer represented by $t$,
or the schedule state $C$ guarantees that at least $k_i$ important items would be removed if a compaction is executed.\footnote{A somewhat simpler but weaker proof of the lemma appears in the previous version of this manuscript; see \url{https://arxiv.org/abs/2004.01668v3}. However, this earlier analysis required a modified (and slightly more involved) merge procedure.}

\begin{proof}[Proof of Lemma~\ref{lem:single layer mergeability - stronger}]
For simplicity, when we refer to a buffer or a compaction operation represented by a node we implicitly mean the one at level $h$.
For any node $t$ in the subtree of a node in $Q_h$,
we define its \emph{charge} $\chi(t)$ (implicitly w.r.t.\ item $y$ and level $h$) recursively as follows:
\begin{itemize}
\item If $t$ is a leaf node or an $i$-node for $i < a$, we set $\chi(t) = 0$.
\item Otherwise, let $t'$ and $t''$ be the children of $t$ and let $i\in [a, \lambda]$ be such that $t$ is an $i$-node.
To define $\chi(t)$, we need a few quantities and indicators:
\begin{description}
\item{$r(t) = $} the number of important items removed from the buffer during the compaction represented by $t$ (we use $r(t) = 0$ if there is no compaction operation represented by $t$);
\item{$I(t)$} is the indicator whether the compaction represented by $t$ (if any) is important, i.e., $I(t) = 1$ if there is an important compaction represented by $t$, and $I(t) = 0$ otherwise; and
\item{$J(t)$} is the indicator whether for a child $\hat{t}\in \{t', t''\}$ of $t$, it holds that $\hat{t}$ is a topmost $i'$-node for some $a\le i' < i$ and there is an important level-$h$ compaction represented by an $i'$-node in the subtree of $\hat{t}$.
\end{description}
Then, we define
\begin{equation}\label{eqn:charge_def}
\chi(t) = \max\{\chi(t') + \chi(t'') - 3r(t) + I(t)\cdot k_i - J(t)\cdot 2\cdot k_i, 0\}\,.
\end{equation}
\end{itemize}
(We do not define $\chi(t)$ for nodes that are not in the subtree of a node in $Q_h$.)
This recursive definition implies that
$\chi(t) > 0$ only if there is an important compaction represented by an $i'$-node (for $a\le i'\le i$ with $i'\le \lambda$)
in the subtree of $t$, including $t$ (the converse may not be true).
The key part is to prove that $\chi(t)$ as defined above is always bounded by the number of important items in the buffer represented by $t$.

\begin{claim}\label{clm:charging arg for mergeability}
For any node $t$ in the subtree of a node in $Q_h$, it holds that $\chi(t) \le \rank(y; \calB_h(t))$,
where $\calB_h(t)$ is the level-$h$ buffer represented by $t$
and $\rank(y; \calB_h(t))$ is the number of important items in that buffer.
\end{claim}

\begin{proof}
We start with some notation. Let $C_h(t)$ be the state of the compaction schedule of the level-$h$ buffer
represented by a node $t$, and for a state $C$ and $j\ge 1$, let $C[j]$ be the $j$-th bit from the right in the binary representation of $C$.

We prove by an induction over the tree $T$ a stronger claim: \textit{If $t$ is an $i$-node for $a\le i\le \lambda$ in the subtree of a node in $Q_h$,
then one of the following holds:
\begin{enumerate}[label=(\roman*)]
\item $\chi(t) \le \max\{\rank(y; \calB_h(t)) - k_i, 0\}$, or \label{chi-cond1}
\item there is an important level-$h$ compaction represented by an $i$-node in the subtree of $t$ and moreover, \label{chi-cond2}
letting $j\ge 1$ be the smallest index of a section which contains important items only,
$\chi(t) \le B_i - (j-1)\cdot k_i$ and, provided that $j > 1$, $C_h(t)[j-1] = 1$. 
\end{enumerate}
}
Note that both \ref{chi-cond1} and \ref{chi-cond2} are stronger requirements than $\chi(t) \le \rank(y; \calB_h(t))$; specifically,
in \ref{chi-cond2}, it holds that $\rank(y; \calB_h(t)) \ge B_i - (j-1)\cdot k_i$ by the definition of $j$ (recall that sections are indexed from the right).

The claim in \ref{chi-cond1} clearly holds if $\chi(t) = 0$ and thus, \ref{chi-cond1} holds for any leaf node or for any $i$-node $t$ for $i < a$ as we define $\chi(t) = 0$ in both of these cases.

Consider a non-leaf $i$-node $t$ with $a\le i\le \lambda$ and $\chi(t) > 0$, and let $t'$ and $t''$ be the children of $t$.
Note that $\rank(y; \calB_h(t))\ge \rank(y; \calB_h(t')) + \rank(y; \calB_h(t'')) - r(t)$;
on the RHS of this inequality, we do not take into account important items added from level $h-1$ during a compaction represented by $t$, if any.
We consider several cases, using the first case that applies:

\mycase{A} $r(t) \ge k_i$, i.e., the compaction operation represented by $t$ removes at least $k_i$ important items from the level-$h$ buffer.
Then, from \eqref{eqn:charge_def}, we obtain 
\begin{align*}
\chi(t) &= \max\{\chi(t') + \chi(t'') - 3r(t) + I(t)\cdot k_i - J(t)\cdot 2\cdot k_i, 0\}
\\
&\le \max\{\chi(t') + \chi(t'') - r(t) - k_i, 0\}
\\
&\le \max\{\rank(y; \calB_h(t')) + \rank(y; \calB_h(t'')) - r(t) - k_i, 0\}
\le \max\{\rank(y; \calB_h(t)) - k_i, 0\}\,,
\end{align*}
where the first inequality follows from the case condition and $I(t)\le 1$ (that is, we use that $2r(t)\ge I(t)\cdot k_i + k_i$),
and the second inequality uses the induction hypothesis.
This shows that \ref{chi-cond1} holds for $t$.

\mycase{B} $\chi(t') = 0$ and $\chi(t'') = 0$.
If the compaction operation represented by $t$ (if any) is not important,
then $\chi(t) = 0$ and \ref{chi-cond1} holds for $t$.
Otherwise, there is an important compaction represented by $t$, which may happen
if many important items are added to level $h$ during the level-$(h-1)$ compaction.
Then, \eqref{eqn:charge_def} and $\chi(t') = \chi(t'') = 0$ imply that
$$\chi(t) \le k_i \le B_i / 2 - k_i \le \rank(y; \calB_h(t)) - k_i\,,$$
where the second inequality uses $B_i\ge 2\cdot k_i\cdot \log_2(N_i / k_i)$ and $N_i \ge 4\cdot k_i$, 
and the last inequality follows from that there must be at least $B_i / 2$ important items
remaining in the buffer after the important compaction represented by $t$.
Hence, \ref{chi-cond1} holds for $t$.

\mycase{C} \ref{chi-cond1} holds for $t'$ and $\chi(t') > 0$, or \ref{chi-cond1} holds for $t''$ and $\chi(t'') > 0$, or both.
This condition implies that 
\begin{equation}\label{eqn:clm-chi-caseC-cond}
\chi(t') + \chi(t'') \le \rank(y; \calB_h(t')) + \rank(y; \calB_h(t'')) - k_i\,;
\end{equation}
note that $t'$ or $t''$ may be an $i'$-node for some $i' < i$,
but this inequality still holds as $k_i\le k_{i'}$ for $i' < i$.
We consider two subcases:

\mycase{C.1}
$I(t) = 0$, i.e., there is no important compaction represented by $t$.
Then, \ref{chi-cond1} holds for $t$ as
\begin{align*}
\chi(t) \le \max\{\chi(t') + \chi(t'') - 3r(t), 0\}
&\le \max\{\rank(y; \calB_h(t')) + \rank(y; \calB_h(t'')) - k_i - r(t), 0\}
\\
&\le \max\{\rank(y; \calB_h(t)) - k_i, 0\}\,.
\end{align*}

\mycase{C.2}
$I(t) = 1$, i.e., there is an important compaction represented by $t$.
In this case, we show that \ref{chi-cond2} holds for $t$.
Since the (level-$h$) compaction represented by $t$ is important, it removes $0 < r(t) < k_i$ important items from the buffer
(for $r(t)\ge k_i$, case~A applies).
Let $j$ be the smallest index of a section that contains important items only;
it must be the same before and after the compaction as $r(t) < k_i$ and as only the whole sections are compacted.
Note that we must have $j > 1$ as section~1 is involved in any compaction.
Since the compaction does not involve section $j$, we have $C'_h(t)[j-1] = 0$ for the state $C'_h(t)$ before the compaction
(recall that $C'_h(t)[j-1]$ is the $(j-1)$-st bit from the right in $C'_h(t)$). Moreover, $C'_h(t)[j'] = 1$ for all $0 < j' < j-1$ 
as the compaction involves section $j-1$.
Thus, after the compaction, it holds that $C_h(t)[j-1] = 1$.
Next, observe that $\rank(y; \calB_h(t)) = B_i - (j-1)\cdot k_i$ since the compaction involves the first $j-1$ sections and it is important.
It thus remains to obtain a suitable upper bound on $\chi(t)$:
\begin{align*}
\chi(t) \le \max\{\chi(t') + \chi(t'') - 3r(t) + k_i, 0\}
&\le \max\{\rank(y; \calB_h(t')) + \rank(y; \calB_h(t'')) - k_i - r(t) + k_i, 0\}
\\
&\le \rank(y; \calB_h(t)) = B_i - (j-1)\cdot k_i\,,
\end{align*}
where the second inequality uses~\eqref{eqn:clm-chi-caseC-cond}.
Hence, \ref{chi-cond2} holds for $t$.

\mycase{D} None of the previous cases applies.
Since we have that $\chi(t') > 0$ or $\chi(t'') > 0$ (as case~B does not apply)
and since case~C does not apply, 
property \ref{chi-cond2} holds for $t'$ or for $t''$ or for both of them.
For simplicity, we assume that \ref{chi-cond2} holds for $t'$ as the other case follows by symmetric arguments.
Let $i'\le i$ be such that $t'$ is an $i'$-node and
let $j'$ be the index from property \ref{chi-cond2} for $t'$. To recall,
$j'\ge 1$ is the smallest index of a section which contains important items only in $\calB_h(t')$,
and it holds that $\chi(t') \le B_{i'} - (j'-1)\cdot k_{i'}$ and, provided that $j' > 1$, $C_h(t')[j'-1] = 1$.
Let $C'_h(t)$ be the state of the compaction schedule just before the compaction
represented by $t$.
Since we use the bitwise OR when merging states of the compaction schedule,
we also have that $C'_h(t)[j'-1] = 1$ if $j' > 1$; see Fact~\ref{factstar}.

We consider a few further subcases:

\mycase{D.1} $i' < i$. Thus, $t'$ is a topmost $i'$-node, which together with property~\ref{chi-cond2} for $t'$ implies that $J(t) = 1$
(here, we also use that $t$ is in the subtree of a node in $Q_h$).
Then, \eqref{eqn:charge_def} becomes
\begin{align*}
\chi(t) &= \max\{\chi(t') + \chi(t'') - 3r(t) + I(t)\cdot k_i - 2\cdot k_i, 0\}
\\
&\le \max\{\chi(t') + \chi(t'') - r(t) - k_i, 0\}
\\
&\le \max\{\rank(y; \calB_h(t')) + \rank(y; \calB_h(t'')) - r(t) - k_i, 0\}
\le \max\{\rank(y; \calB_h(t)) - k_i, 0\}\,,
\end{align*}
where the second inequality uses the induction hypothesis for $t'$ and $t''$,
namely, that $\chi(t') \le \rank(y; \calB_h(t'))$ and $\chi(t'') \le \rank(y; \calB_h(t''))$.
This shows \ref{chi-cond1}.

\mycase{D.2} $i' = i$. We show that $\chi(t'') = 0$ in such a case.
Indeed, for a contradiction suppose that $\chi(t'') > 0$, which implies that property~\ref{chi-cond2} holds for $t''$ since otherwise, case~C applies.
Then, if $t''$ is an $i'$-node for $i' < i$, we use case~D.1 with $t''$ acting as $t'$.
Thus, $t''$ is an $i$-node and \ref{chi-cond2} holds for both $t'$ and $t''$, from which we obtain 
 $\rank(y; \calB_h(t'))\ge B_i/2$ and  $\rank(y; \calB_h(t''))\ge B_i/2$
 (as there is an important compaction represented by an $i$-node in the subtree of each of $t'$ and $t''$).
It follows that $\rank(y; \calB_h(t')) + \rank(y; \calB_h(t''))\ge B_i$ and since
all items with position at least $B_i - k_i + 1$ in the sorted buffer when merging (cf.~line~\ref{ln-merge:firstSlotCompaction} in Algorithm~\ref{code:merge}) are always involved in the compaction, we must have that $r(t)\ge k_i$ --- a contradiction with the assumption that case~A does not apply.
This shows $\chi(t'') = 0$.

Let $a(t)$ be the number of important items added to the level-$h$ buffer from level $h-1$
during the level-$(h-1)$compaction represented by $t$, if any.
Let $\calB'_h(t)$ be the sorted buffer obtained from merging the (level-$h$) buffers of $t'$
and $t''$ and adding $a(t)$ important items from level $h-1$, but before performing
the level-$h$ compaction represented by $t$, if any; thus $\calB'_h(t)$ may contain
more than $B_i$ items.

Note that section $j'$ is not involved in the level-$h$ compaction represented by $t$ (if any), otherwise we would have $r(t) \ge k_i$ as section $j'$ contains important items only
in $\calB_h(t')$ and thus also in $\calB'_h(t)$.
This implies that section $j'-1$ is not involved in the compaction either,
which follows from $C'_h(t)[j'-1] = 1$ (here, we refer to the state just before the compaction).
We consider a few further subcases:

\mycase{D.2.a}
$\rank(y; \calB_h(t')) + \rank(y; \calB_h(t'')) + a(t) < B_i - (j'-2)\cdot k_i$.
This means that in $\calB'_h(t)$, section $j'-2$ (for $j' > 2$) contains no important items and moreover, section $j'-1$ contains an item $z > y$ (we suppose buffer $\calB'_h(t)$ is sorted).
Since section $j'-1$ is not involved in the compaction,
it follows that $r(t) = 0$, so the compaction represented by $t$ (if any) is not important, i.e., $I(t) = 0$.
As $\chi(t'') = 0$, we get
\begin{align*}
\chi(t) \le \chi(t') \le B_i - (j'-1)\cdot k_i\,.
\end{align*}
We show \ref{chi-cond2} holds for $t$.
Indeed, section $j'-1$ of $\calB_h(t)$ contains a non-important item,
so $j'$ is still the smallest index of a section with important items only.
Furthermore, $C_h(t)[j'-1] = 1$ and there is an important level-$h$ compaction represented by an $i$-node in the subtree of $t'$ as \ref{chi-cond2} holds for $t'$, concluding that \ref{chi-cond2} holds for $t$.

\mycase{D.2.b} $\rank(y; \calB_h(t')) + \rank(y; \calB_h(t'')) + a(t) \ge B_i - (j'-2)\cdot k_i$ and $I(t) = 0$, i.e.,
the compaction represented by $t$ (if any) is not important.
As section $j'-1$ is not involved in this compaction, 
we have that $\rank(y; \calB_h(t)) \ge B_i - (j'-2)\cdot k_i$.
Then, \ref{chi-cond1} holds for $t$ as
\begin{align*}
\chi(t) \le \chi(t') \le B_i - (j'-1)\cdot k_i \le \rank(y; \calB_h(t)) - k_i\,.
\end{align*}

\mycase{D.2.c} $\rank(y; \calB_h(t')) + \rank(y; \calB_h(t'')) + a(t) \ge B_i - (j'-2)\cdot k_i$ and $I(t) = 1$, i.e.,
the compaction represented by $t$ is important. We show that \ref{chi-cond2} holds.
Let $j > 1$ be the smallest index of a section that contains important items only in $\calB_h(t)$, i.e.,
after the compaction. By the case condition and since section $j'-1$ is not involved in the compaction,
we have $j\le j'-1$.
Observe that $\rank(y; \calB_h(t)) = B_i - (j-1)\cdot k_i$ as the compaction removes important items from the buffer
and thus, it involves the first $j-1$ sections by the definition of $j$.
After the compaction, the $(j-1)$-st bit of the state is set to 1, i.e., $C_h(t)[j-1] = 1$, by the definition of the compaction.
Finally, we upper bound $\chi(t)$ as follows:
\begin{align*}
\chi(t) \le \chi(t') + k_i
\le B_i - (j'-1)\cdot k_i + k_i = B_i - (j'-2)\cdot k_i\le B_i - (j-1)\cdot k_i\,,
\end{align*}
where the last inequality uses $j\le j'-1$.
Hence, \ref{chi-cond2} holds.
\end{proof}

It remains to show that Claim~\ref{clm:charging arg for mergeability} together with the definition of $\chi(t)$ in~\eqref{eqn:charge_def}
implies Lemma~\ref{lem:single layer mergeability - stronger}, i.e., that~\eqref{eqn:single layer mergeability - stronger} holds.
To this end, first note that the definition of $\chi(t)$ for a non-leaf $i$-node $t$ with $a\le i\le \lambda$ implies
\begin{equation}\label{eqn:charge_def_wo_max}
\chi(t) \ge \chi(t') + \chi(t'') - 3r(t) + I(t)\cdot k_i - J(t)\cdot 2\cdot k_i\,,
\end{equation}
where $t'$ and $t''$ are the children of $t$.
For a node $q\in Q^j_h$ with $a\le j\le \lambda$,
consider the sum of~\eqref{eqn:charge_def_wo_max} over all non-leaf $i$-nodes $t$ for $a\le i\le j$ such that $t$ is in the subtree of $q$, and
observe that $\chi(t)$ either appears exactly once on both sides of the resulting inequality,
or $t$ appears only on the right-hand side and $\chi(t) = 0$, or $t = q$ and $q$ appears only on the left-hand side. Letting $T_q$ denote the subtree of $q$ and $T^i_q$ be the set of $i$-nodes in $T_q$, we obtain
\begin{equation}\label{eqn:charge_def_sum-for_q}
\chi(q) \ge \sum_{i = a}^{j}\sum_{t\in T^i_q} - 3r(t) + I(t)\cdot k_i - J(t)\cdot 2\cdot k_i\,.
\end{equation}
Next, consider the sum of~\eqref{eqn:charge_def_sum-for_q} over all nodes $q\in Q^j_h$ for $a\le j\le \lambda$.
Observe that if an $i$-node $t$ for $a\le i\le \lambda$ represents a compaction removing at least one important item,
then $t$ must be in the subtree $T_q$ of a node $q\in Q^j_h$ for $a\le j\le \lambda$.
Furthermore, subtrees $T_q$ are disjoint by the definition of $Q_h$.
Letting $r(T^{[a,\lambda]})$ be the total number of important items removed from level $h$
during a compaction represented by an $i$-node for $a\le i\le \lambda$ that is in the subtree of a node in $Q_h$,
we thus have the following two equalities:
\begin{align*}
\sum_{j = a}^{\lambda} \sum_{q\in Q^j_h}\sum_{i = a}^{j}\sum_{t\in T^i_q} r(t)
&= r(T^{[a,\lambda]})
\\
\sum_{j = i}^{\lambda} \sum_{q\in Q^j_h}\sum_{t\in T^i_q} I(t)
&= m^i_h \quad \text{for\ any\ }i\in [a, \lambda]\,.
\end{align*}
Hence, summing~\eqref{eqn:charge_def_sum-for_q} over all nodes $q\in Q^j_h$ for $a\le j\le \lambda$,
we get
\begin{equation}\label{eqn:charge_def_sum}
\sum_{j = a}^{\lambda} \sum_{q\in Q^j_h} \chi(q) \ge -3\cdot r(T^{[a,\lambda]}) + \sum_{i = a}^{\lambda} m^i_h\cdot k_i -\sum_{j = a}^{\lambda} \sum_{q\in Q^j_h}\sum_{i = a}^{j}\sum_{t\in T^j_q} J(t)\cdot 2\cdot k_i\,.
\end{equation}
We now upper bound the last term on the RHS of~\eqref{eqn:charge_def_sum}.
Let $\tau_{i'}$ be the number of topmost $i'$-nodes  $t'$ for $i'\le \lambda$ satisfying that there is an important level-$h$ compaction represented by an $i'$-node in the subtree of $t'$ and that $t'$ is in the subtree of a node $q\in Q^j_h$ for $a\le j\le \lambda$.
Recall that if $J(t) = 1$ for an $i$-node $t$, then at least one of the children of $t$ is a topmost $i'$-node for $a\le i' < i$ accounted for in $\tau_{i'}$.
Using $k_0\ge k_1\ge \cdots \ge k_\lambda$, we thus have 
\begin{equation}\label{eqn:topmost_nodes_w_important_bound_1}
\sum_{j = a}^{\lambda} \sum_{q\in Q^j_h}\sum_{i = a}^{j}\sum_{t\in T^j_q} J(t)\cdot 2\cdot k_i
\le \sum_{i'= a}^\lambda \tau_{i'}\cdot 2\cdot k_{i'}\,.
\end{equation}
For any $i'\in [a,\lambda]$, we claim that 
\begin{equation}\label{eqn:topmost_nodes_w_important}
\tau_{i'}\cdot \frac{B_{i'}}{2} \le \rank^{[a,\lambda]}_h(y)\,.
\end{equation}
Indeed, any topmost $i'$-node $t'$ accounted for in $\tau_{i'}$
has an important level-$h$ compaction represented by an $i'$-node in the subtree of $t'$.
At the time of this compaction operation, the buffer needs to have more than 
$B_{i'}/2$ important items (otherwise, the compaction would not be important).
Since the lowest-ranked $B_{i'}/2$ important items are never removed from 
the buffer (when its capacity is $B_{i'}$), the buffer represented by $t'$ 
has at least $B_{i'}/2$ important items.
Furthermore, these sets of at least $B_{i'}/2$ important items are disjoint for any two topmost $i'$-nodes $t'\neq t''$ accounted for in $\tau_{i'}$. Finally, all these important items are accounted for in $\rank^{[a,\lambda]}_h(y)$ as they are either removed from the level-$h$ buffer
by a compaction represented by an $i''$-node for $i' \le i''\le \lambda$,
or remain at the level-$h$ buffer represented by a node $q\in Q^j_h$ for $i'\le j\le \lambda$.
This shows~\eqref{eqn:topmost_nodes_w_important}.

Thus, the last term on the RHS of~\eqref{eqn:charge_def_sum} is bounded by
\begin{align}
\sum_{j = a}^{\lambda} \sum_{q\in Q^j_h}\sum_{i = a}^{j}\sum_{t\in T^j_q} J(t)\cdot 2\cdot k_i
&\le \sum_{i' = a}^{\lambda} \tau_{i'}\cdot 2\cdot k_{i'}
\nonumber\\
&\le  2\cdot \sum_{i' = a}^{\lambda} \frac{\tau_{i'}\cdot \frac12\cdot B_{i'}}{\log_2(N_{i'} / k_{i'})}
 \le 2\cdot \sum_{i' = a}^{\lambda} \frac{\rank^{[a,\lambda]}_h(y)}{\log_2(N_{i'} / k_{i'})}
 \le \rank^{[a,\lambda]}_h(y)\,,
 \label{eqn:topmost_nodes_w_important_bound_2}
\end{align}
where the first inequality is~\eqref{eqn:topmost_nodes_w_important_bound_1},
the second inequality uses the definition of $B_{i'}$ in~\eqref{eqn:setk-merge},
the third inequality follows from~\eqref{eqn:topmost_nodes_w_important},
and the last inequality holds as $\sum_{i' = a}^{\lambda} 1 / \log_2(N_{i'} / k_{i'}) \le 2 \log_2(N_0 / k_0) \le \frac12$ (in more detail, here we use that $\log_2(N_{i'} / k_{i'})$ increases with $i'$ by a factor of at least $2$ for $i'\le \lambda$ and that $\log_2(N_0 / k_0) \ge 4$,
which holds by the definition of $N_0$).

To upper bound the LHS of~\eqref{eqn:charge_def_sum},
we use Claim~\ref{clm:charging arg for mergeability} for each $q\in Q^j_h$ 
with $a\le j\le \lambda$ to get that $\chi(q) \le \rank(y; \calB_h(q))$ for any such $q$.
Plugging this together with~\eqref{eqn:topmost_nodes_w_important_bound_2} into~\eqref{eqn:charge_def_sum}, we obtain
\begin{equation}\label{eqn:charge_def_sum_2}
\sum_{j = a}^{\lambda} \sum_{q\in Q^j_h}\rank(y; \calB_h(q)) \ge -3\cdot r(T^{[a,\lambda]}) + \sum_{i = a}^{\lambda} m^i_h\cdot k_i -\rank^{[a,\lambda]}_h(y)\,,
\end{equation}
which implies~\eqref{eqn:single layer mergeability - stronger} by rearranging and using
$\rank^{[a,\lambda]}_h(y) = r(T^{[a,\lambda]}) + \sum_{j = a}^{\lambda} \sum_{q\in Q^j_h}\rank(y; \calB_h(q))$ (the second term equals the total number of important items
in the buffers represented by nodes in $Q^j_h$ for $a\le j\le \lambda$).
\end{proof}

Lemma~\ref{lem:single layer mergeability - stronger} with $a = 0$ has a simple corollary.

\begin{corollary}\label{cor:single layer mergeability}
Consider level $h$ and let $m^{\le \lambda}_h  = \sum_{i = 0}^{\lambda} m^i_h$ be the total number of important compactions at level $h$
represented by $i$-nodes for $i\le \lambda$.
Suppose that $\rank_h(y) \le 2^{-h+2}\rank(y)$ and let $i(h)\ge 0$ be the largest
integer $0\le i\le \lambda$ satisfying $2^{-h+2}\rank(y) > B_i / 2$.
Then $m^{\le \lambda}_h \le 4\rank_h(y) / k_{i(h)}$. 
\end{corollary}
\begin{proof}
	This follows from Lemma~\ref{lem:single layer mergeability - stronger}
	by observing that $m^i_h = 0$ for $i(h) < i \le \lambda$ and
	by using $k_i\ge k_{i(h)}$ for any $i\le i(h)$ and $\rank^{[0,\lambda]}_h(y) \le \rank_h(y)$
\end{proof}

\subsection{Analysis of the Full Sketch for Mergeability}
\label{s:k0sec}

In this section, we complete the proof of full mergeability that matches our result in the streaming setting (Theorem~\ref{thm:sketch}).
The crucial part of analyzing the full sketch, similarly as in the streaming setting (Section~\ref{s:fullanalysis}), is bounding the variance of $\err(y)$, using the bounds on the number of important level-$h$ compactions from the previous section. The bound of this section is, however, substantially more involved than in the streaming setting, mainly because parameters $k$ and $B$ of the sketches change as merge operations are processed. 
Here, we again stress that we assume no advance knowledge of $n$, the total size of the input.

Before presenting the most general and tight analysis, we will however describe that
a simple extension of the arguments used in the streaming setting readily gives the result with an additional factor of $\min\{\log\log(\eps n), \log(\epsilon^{-1}) + \log\log(\delta^{-1})\}$
in the asymptotic space complexity, relative to our result in the streaming setting (Theorem \ref{thm:sketch}).\footnote{
We provide a detailed description of a simpler analysis with an additional $\log\log(\eps n)$ factor in a prior version of this manuscript; see \url{https://arxiv.org/abs/2004.01668v3}.
}
This simpler, non-tight analysis of the full sketch is less delicate than our analysis that
avoids the additional factor, thereby establishing Theorem \ref{thm:mergeability-full-intro}.
We nevertheless do not assume any advance knowledge about the final input size $n$.

\subsubsection{A Sketch of a Simpler Analysis with an Additional Double Logarithmic Factor}
\label{sec:mergeability analysis w/ additional factor}

The key trick that allows to apply similar arguments as in Section~\ref{s:fullanalysis} is to modify the definition of $k_i$ for $i\ge 0$ compared to~\eqref{eqn:setk-merge}, as follows:
\begin{equation}\label{eqn:setk-merge - weaker}
k_i = \Theta(1)\cdot \left\lceil\frac{\min\{i+1, \lambda\}\cdot \hat{k}}{\sqrt{\log_2(\overline{N_i} / \hat{k})}}\right\rceil\,,
\end{equation}
where $\lambda$ and $\overline{N_i}$ are defined similarly as in~\eqref{eqn:setk-merge} and the multiplicative constant is set appropriately.
In particular, relative to Equation \eqref{eqn:setk-merge}, note the extra factor of $\min\{i+1, \lambda\}$;
including it considerably simplifies the analysis,
but it is responsible for an additional $\min\{\log\log(\eps n), \lambda\}$ term in the space bound, where $\lambda = O(\log(\epsilon^{-1}) + \log\log(\delta^{-1}))$.
Recall that $B_i = 2\cdot k_i\cdot \left\lceil \log_2 (\overline{N_i} / k_i)\right\rceil$.

We omit the detailed analysis
and only highlight where we use the modified definition of the parameter $k_i$.
As in the subsequent tight analysis, the error from compactions represented by $i$-nodes for $i>\lambda$ (if any)
will be analyzed separately (and is much simpler to deal with).
In particular, a similar calculation as in~\eqref{eqn:boundOn_kB} gives us that 
for $0\le i\le \lambda$,
\begin{equation}\label{eqn:kB lower bound mergeability - weaker - strong}
k_i\cdot B_i \ge \Theta(1)\cdot \frac{(i+1)^2}{\eps^2}\cdot \ln\frac{1}{\delta}\,,
\end{equation}
so we have an extra factor of $(i+1)^2$ compared to~\eqref{eqn:boundOn_kB}.
Using Corollary~\ref{cor:single layer mergeability}, one can show that
\begin{align*}
\V[\err(y)]
\le \sum_{i = 0}^{\lambda} \Theta(1)\cdot \frac{\rank(y)^2}{k_i\cdot B_i}
\le \frac{\eps^2\cdot \rank(y)^2}{4\cdot \ln(1/\delta)}\cdot \sum_{i = 0}^{\lambda} \frac{1}{(i+1)^2}
\le \frac{\eps^2\cdot \rank(y)^2}{2\cdot \ln(1/\delta)}\,,
\end{align*} 
where the second inequality uses~\eqref{eqn:kB lower bound mergeability - weaker - strong}
and the last step holds as $\sum_{i = 0}^{\lambda} 1/(i+1)^2 < \pi^2 / 6 < 2$;
the fact that this sum is bounded allows us to deal with the challenge of 
changing parameters $k$ and $B$ in a simple way.
The application of the tail bound for sub-Gaussian variables and the derivation of the space bound
is otherwise the same as in Theorem~\ref{thm:sketch}.

\subsubsection{A Tight Analysis}

Recall that $\hat{k} = 4\eps^{-1}\cdot \sqrt{\ln(1/\delta)}$ and that by~\eqref{eqn:setk-merge}, $k_i = 2^5\cdot \left\lceil \hat{k} / \sqrt{\log_2(\overline{N_i} / \hat{k})}\right\rceil$
and $B_i = 2\cdot k_i\cdot \lceil \log_2(\overline{N_i} / k_i)\rceil$,
where $\overline{N_i}  = \min\left\{N_i\,,\, N_\lambda\right\}$ by~\eqref{eqn:setk-merge}.
Here, $\lambda\ge 0$ is the smallest integer $i$ such that
$\hat{k} / \sqrt{\log_2(N_i / \hat{k})} \le 1$.
If $\lambda > \ell$, we decrease $\lambda$ to $\ell$ for convenience.
Using a similar calculation as in Claim~\ref{clm:kBlowerBound}, we show a lower bound on $k_i\cdot B_i$.

\begin{claim}\label{clm:kBlowerBound - mergeability}

	Parameters $k_i$ and $B_i$ set according to~\eqref{eqn:setk-merge}

	satisfy
	\begin{equation}\label{eqn:kB lower bound mergeability - strong}
	k_i\cdot B_i \ge 2^{14}\cdot \frac{1}{\eps^2}\cdot \ln\frac{1}{\delta}\,.
	\end{equation}
\end{claim}

\begin{proof}
	We first need to relate
	$\log_2(\overline{N_i}/k_i)$ (used to define $B_i$) and $\log_2(\overline{N_i} / \hat{k})$ (that appears in the definition of $k_i$).
	As $k_i \le 2^5\cdot \hat{k}$, it holds that $\log_2(\overline{N_i}/k_i) \ge \log_2(\overline{N_i}/\hat{k}) - 5 \ge \log_2(\overline{N_i}/\hat{k}) / 2$,
	where we use that $\overline{N_i}\ge N_0\ge 2^{10}\cdot \hat{k}$, so $\log_2(\overline{N_i}/\hat{k}) \ge 10$.
	Using this, we bound $k_i\cdot B_i$ as follows:
	\begin{align*}
	k_i\cdot B_i & = 2\cdot k_i^2\cdot \left\lceil\log_2 \frac{\overline{N_i}}{k_i}\right\rceil 
	\ge 2\cdot 2^{10}\cdot \frac{16}{\eps^2}\cdot \frac{\ln\frac{1}{\delta}}{\log_2(\overline{N_i} / \hat{k})}\cdot \frac{\log_2(\overline{N_i}/\hat{k})}{2}
	= 2^{14}\cdot \frac{1}{\eps^2}\cdot \ln\frac{1}{\delta}\,.
	\end{align*}
\end{proof}

For analyzing the case $n > N_\lambda$, the following bound will be useful:

\begin{align}
B_\lambda &\ge 2^5\cdot \hat{k}^2 \label{eqn:B_lambda_bnd_HatKsquared}
\end{align}
This is because the definition of $\lambda$ implies that
$\sqrt{\log_2(N_\lambda / \hat{k})} \ge \hat{k}$ while $k_\lambda \ge 2^5$,
thus
$B_\lambda
\ge 2^6\cdot \log_2 \left(N_\lambda/k_\lambda\right)
\ge 2^6\cdot \log_2 \left(N_\lambda/\hat{k}\right) / 2
\ge 2^5\cdot \hat{k}^2
$,
where the second inequality follows from the same argument as in Claim~\ref{clm:kBlowerBound - mergeability}.

For any $0\le i\le \ell$, let $H_i(y)$ be the minimal $h$ for which $2^{-h+2} \rank(y) \leq B_i/2$.
As $y$ is fixed, we write $H_i$ rather than $H_i(y)$ for brevity. 
In particular, by considering $h = H_i -1$ (assuming $H_i > 0$), it can be seen that $2^{3-H_i} \rank(y) \geq B_i/2$,
or equivalently
\begin{equation}\label{eqn:H_i-bound mergeability}
2^{H_i} \leq 2^4\cdot \rank(y) / B_i\,.
\end{equation}
As increasing $i$ by one increases $B_i$, we have $H_0\ge H_1\ge \cdots \ge H_\ell$.

We show below that no important item (i.e., one smaller than or equal to $y$) can ever reach level $H_0+1$.

\begin{lemma} \label{lem:decay rank mergeability}
Assuming $H_0 > 0$,
with probability at least $1 - \delta$ it holds that $\rank_h(y) \leq 2^{-h+2}\rank(y)$ for any $h \le H_0$.
\end{lemma}

\begin{proof}
The proof is similar to that of Lemma~\ref{lem:decay rank}, except that we need to deal with parameters $k$ and $B$ changing over time.
To this end, we use an idea from the KLL paper~\cite{karnin2016optimal} to analyze the top $\log\log 1/\delta$ levels deterministically.
We define $$H'_0 = \max\left(0, H_0 - \left\lceil\log_2\sqrt{\ln\frac{1}{\delta}}\right\rceil\right)\,.$$
(Note that for $\delta \le 0.5$, we have $\left\lceil\log_2\sqrt{\ln\frac{1}{\delta}}\right\rceil\ge 0$.)

We first show by induction on $0\le h \le H'_0$ that $\rank_h(y) \leq 2^{-h+1}\rank(y)$ with probability at least
$1 - \delta\cdot 2^{h - H'_0 - 1}$, conditioned on $\rank_{h'}(y) \leq 2^{-h'+1}\rank(y)$ for any $h' < h$.
The base case holds by $\rank_0(y) = \rank(y)$.

Consider $0 < h \le H'_0$,
and recall that $m_{h'}$ denotes the number of important compactions at level $h'$ over all merge operations represented in the merge tree $T$. 
As in the proof of Lemma~\ref{lem:decay rank}, 
\[\Pr[\rank_h(y) > 2^{-h+1}\rank(y)] \le \Pr[Z_h > 2^{-h}\rank(y)],\]
where $Z_h = \sum_{h' = 0}^{h - 1} 2^{-h + h'}\cdot \mathrm{Binomial}(m_{h'})$ is a zero-mean sub-Gaussian random variable. 
To bound the variance of $Z_h$, first note that for any $h' < h$, 
since each important compaction needs to remove at least one important item from the buffer,
we have that
$m_{h'}\le \rank_{h'}(y) \le 2^{-h'+1}\cdot \rank(y)$, using the assumption that $\rank_{h'}(y) \leq 2^{-h'+1}\cdot \rank(y)$.
(While this may seem like a very crude bound compared to Lemma~\ref{lem:single layer mergeability - stronger},
it is sufficient due to analyzing top levels deterministically and furthermore, it can be used for compactions
represented by $i$-nodes for $i > \lambda$, where we do not use the deterministic compaction schedule.)

As $\V[\mathrm{Binomial}(n)] = n$, the variance of $Z_h$ is
\begin{align*}
\V[Z_h] \le \sum_{h' = 0}^{h - 1} 2^{-2h + 2h'}\cdot m_{h'}
	 \le \sum_{h' = 0}^{h - 1} 2^{-2h + 2h'}\cdot 2^{-h'+1}\cdot \rank(y)
	 =  \sum_{h' = 0}^{h - 1} 2^{-2h + h' + 1}\cdot \rank(y)
	\le 2^{-h + 1}\cdot \rank(y)\,.
\end{align*}

To bound $\Pr[Z_h > 2^{-h}\cdot \rank(y)]$, we apply the tail bound for sub-Gaussian variables (Fact~\ref{fact:tailBound-subGaussian})
to get
\begin{align*}
\Pr[Z_h > 2^{-h}\cdot \rank(y)]
&< \exp\left(-\frac{2^{-2h}\cdot \rank(y)^2}{2\cdot (2^{-h + 1}\cdot \rank(y))} \right)
\\
&= \exp\left(-2^{-h - 2}\cdot \rank(y) \right)
\\
&= \exp\left(-2^{H_0 - H'_0}\cdot 2^{-h + H'_0 - 6}\cdot 2^{4 - H_0} \rank(y) \right)
\\
&\le \exp\left(-\sqrt{\ln\frac{1}{\delta}}\cdot 2^{-h + H'_0 - 6}\cdot B_0 \right)
\\
&\le \exp\left(-\ln\frac{1}{\delta}\cdot 2^{-h + H'_0 + 1} \right)
= \delta^{2^{- h + H'_0 + 1}}
\le \delta\cdot 2^{ - H'_0 + h - 1}\,,
\end{align*}
where the second inequality uses the definition of $H'_0$ and $2^{4 - H_0} \rank(y) \ge B_0$ by~\eqref{eqn:H_i-bound mergeability},
the third inequality follows from~$B_0\ge 2\cdot k_0\cdot \log_2 (N_0 / k_0)\ge 2^7\cdot \sqrt{\ln(1/\delta)}$, and the last inequality uses $\delta \le 0.5$.
Hence, taking the union bound over levels $h \le H'_0$, 
with probability at least $1 - \delta$ it holds that $\rank_h(y) \leq 2^{-h+1}\rank(y)$ for any $h \le H'_0$.

Finally, consider level $h$ with $H'_0 < h \le H_0$ and condition on $\rank_{H'_0}(y) \leq 2^{-H'_0+1}\rank(y)$.
(In the case $H'(y) = 0$, we have $\rank_0(y) = \rank(y)$.)
We again proceed by induction and assume that $\rank_{h'}(y) \le 2^{-h'+2}\cdot \rank(y)$
for any $h' < h$.
First, we argue that for any  $h$ with $H'_0 < h \le H_0$ it holds that $\sum_{i > \lambda}m^i_{h'} = 0$, so we can use Corollary~\ref{cor:single layer mergeability}.
Indeed, it is sufficient to show $\rank_{H'_0}(y) \leq B_\lambda / 2$ as follows:
\begin{align}	
\rank_{H'_0}(y)
\le 2^{-H'_0+1}\rank(y)
=   2^{H_0 - H'_0}\cdot 2^{-2}\cdot 2^{-H_0+3}\rank(y)
\le 2\sqrt{\ln\frac{1}{\delta}} \cdot 2^{-2}\cdot B_0
\le 2^4\cdot \hat{k}^2
\le \frac12 B_\lambda\,, \label{eqn:rank_lambda_bnd}
\end{align}
where the penultimate inequality uses the definitions of $\hat{k}$ and $B_0$ in~\eqref{eqn:setk-merge} and the last inequality is by~\eqref{eqn:B_lambda_bnd_HatKsquared}.

We now observe that for any $H'_0 < h' \le h$, it holds that $\rank_{h'}(y) \leq \frac12 \cdot (1 + 4/k_{i(h')})\cdot \rank_{h'-1}(y)$, where $i(h') \le \lambda$ is the largest integer $i$ satisfying $2^{-h'+3}\rank(y) > B_i / 2$.
Indeed, $\rank_{h'}(y) \le \frac12\cdot \left(\rank_{h' - 1}(y) + \mathrm{Binomial}(m_{h'-1})\right)$
(see~Equation~\ref{eqn:rank_process}) and $\mathrm{Binomial}(m_{h'-1}) \le m_{h'-1} \le 4\rank_{h' - 1}(y) / k_{i(h')}$
by Corollary~\ref{cor:single layer mergeability}, using the definition of $i(h')$
and the induction hypothesis for level $h'-1$, i.e., $\rank_{h'-1}(y) \le 2^{-h'+3}\cdot \rank(y)$.
That is, regardless of the outcome of the random choices, we always
obtain this bound on the rank of an item.
By using this deterministic bound for levels $H'_0 < h'\le h$, we get
\begin{align}
	\rank_h(y)
	\le 2^{-h + H'_0} \cdot \rank_{H'_0}(y)\cdot 
		\prod_{h' = H'_0 + 1}^{h} \left(1 + \frac{4}{k_{i(h')}}\right)
	\le 2^{-h + H'_0}\cdot 2^{-H'_0+1}\cdot \rank(y)\cdot 
		\prod_{h' = H'_0 + 1}^{h} \left(1 + \frac{4}{k_{i(h')}}\right)\,.
	\label{eqn:rank_bound_mergeability_det}
\end{align}
It remains to show that the product $\prod_{h' = H'_0 + 1}^{h} \left(1 + \frac{4}{k_{i(h')}}\right)$
is bounded by $2$, which implies $\rank_h(y)\le 2^{-h+2}\cdot \rank(y)$.
We first observe that $k_{i(H'_0 + 1)}\ge k_\lambda \ge 2^5$, since $i(H'_0 + 1)\le \lambda$. 
Next, recall that the sequence of $k_i$'s decreases exponentially with a factor of $\sqrt{2}$ (up to rounding)
with increasing~$i$. Thus, it is sufficient to show that the sequence $i(h')$ \emph{decreases}
for $h' = H'_0 + 1, \dots, h$. More precisely, we show that $i(h'+1) \le i(h') - 1$ for $h' = H'_0 + 1, \dots, h-1$
This latter inequality holds as increasing $h'$ by one in $2^{-h'+3}\rank(y) > B_i / 2$
implies that the largest $i$ satisfying the inequality should decrease by at least one
(recall that the sequence of $B_i$'s increases by a factor of $\sqrt{2}$ (up to rounding) with increasing $i$).
Note that we always have $2^{-h'+3}\rank(y) > B_0 / 2$ as $h'\le h\le H_0$.
Summing up, we get
\begin{align*}
	\prod_{h' = H'_0 + 1}^{h} \left(1 + \frac{4}{k_{i(h')}}\right)
	&\le \prod_{j\ge 0} \left(1 + \frac{1}{2^3\cdot \sqrt{2}^j}\right)
	\\
	&\le \exp\left(\sum_{j \ge 0} \log\left(1 + \frac{1}{2^3\cdot \sqrt{2}^j}\right)\right)
	\le \exp\left(\sum_{j \ge 0} \frac{1}{2^3\cdot \sqrt{2}^j}\right) \le 2\,.
\end{align*}
We remark that the last inequality has a slack, which is sufficient to deal with the rounding issues mentioned above.
\end{proof}

As a corollary, we obtain a bound on the highest level with a compaction removing important items
from the level-$h$ buffer (no matter whether such a compaction is important or not).
Recall from Section~\ref{s:mergeability-singleLevel} that a compaction \emph{involves important items} iff it removes at least one important item from the buffer.
Recall that we only consider a compaction to be important if it
affects an odd number of important items, so these compactions
involving important items are a superset of the important compactions. 

\begin{lemma}\label{lem:no error at Hi}
Conditioned on the bounds in Lemma~\ref{lem:decay rank mergeability} holding,
for any $0\le i\le \ell$,
no compaction involving important items occurs at level $H_i$ or above
during any merge procedure represented by any $i$-node in the merge tree $T$. 
\end{lemma}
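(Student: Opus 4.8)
The plan is to derive the statement from two facts, one structural and one about ranks, both of which are essentially corollaries of results already established.

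First I would pin down the structural fact that \emph{every} compaction performed at an $i$-node at level $h$ — scheduled or special — leaves the $B_i/2$ lowest-ranked items of the level-$h$ buffer untouched. For a scheduled compaction this is exactly where Observation~\ref{obs:numberOfBits} is used: since the state variable of the level-$h$ buffer never exceeds $N_i/k_i$, a scheduled compaction involves at most $\lceil\log_2(N_i/k_i)\rceil$ of the $\lceil\log_2(N_i/k_i)\rceil+1$ sections comprising the upper half of a size-$B_i$ buffer, so it removes only items at positions beyond $B_i/2$ among the lowest-ranked $B_i$ items; the ``extra'' items beyond position $B_i$ that a buffer may temporarily hold during a merge are all even larger-ranked and so do not interfere. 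A special compaction keeps the smallest $B_i/2$ items by construction. In either case, if such a compaction removes an important item $x\le y$, then at that moment the buffer contains more than $B_i/2$ items that are $\le x$, all of which are important; hence the level-$h$ buffer then holds more than $B_i/2$ important items.

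Next I would bound the number of important items that can ever be present at level $h$. The key point is that, across the whole merge tree $T$, the distinct important items that ever appear in a level-$h$ buffer number exactly $\rank_h(y)$, so any single level-$h$ buffer holds at most $\rank_h(y)$ of them; moreover $\rank_{h+1}(y)\le\rank_h(y)$ for every $h$, since every important item appearing at level $h+1$ was promoted there from level $h$ and thus already appeared at level $h$. Consequently $\rank_h(y)\le\rank_{H_i}(y)$ whenever $h\ge H_i$, and, conditioning on the event of Lemma~\ref{lem:decay rank mergeability} (which applies at level $H_i$ since $H_i\le H_0$), the definition of $H_i$ yields $\rank_{H_i}(y)\le 2^{-H_i+1}\rank(y)\le B_i/2$. (The degenerate case $H_0=0$ is immediate, since then already $\rank(y)=\rank_0(y)\le B_i/2$.)

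Finally I would combine the two: for every $h\ge H_i$ and every $i$-node, the level-$h$ buffer contains at most $B_i/2$ important items, which by the structural fact rules out any compaction there removing an important item — i.e.\ no compaction involving important items occurs at level $H_i$ or above at any $i$-node, which is the claim. I expect the only real work to lie in the structural fact: carefully confirming that the derandomized compaction schedule, controlled via Observation~\ref{obs:numberOfBits}, together with the treatment of the temporarily oversized buffers during a merge, genuinely protects the lowest-ranked $B_i/2$ items of every $i$-node buffer; the rank bound and the final combination are routine once that is in place.
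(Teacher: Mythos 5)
Your proposal is correct and follows essentially the same route as the paper's (much terser) proof: combine the bound $\rank_{H_i}(y)\le 2^{-H_i+1}\rank(y)\le B_i/2$ from Lemma~\ref{lem:decay rank mergeability} with the fact that $i$-node compactions (scheduled, via Observation~\ref{obs:numberOfBits}; special, by construction) never disturb the lowest $B_i/2$ positions of the buffer, and extend to levels above $H_i$ via the monotonicity of $\rank_h(y)$. You spell out the structural protection property and the $h>H_i$ extension explicitly where the paper leaves them implicit, but the underlying argument is the same.
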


\begin{proof}
By Lemma~\ref{lem:decay rank mergeability}, $\rank_{H_i}(y) \leq 2^{-H_i+2}\rank(y) \le B_i / 2$,
where the second inequality follows from the definition of $H_i$.
Hence, no important item is ever removed from level $H_i$ during merge operations represented by $i$-nodes when the buffer size is $B_i$.
The same argument also works for any level $h > H_i$.
\end{proof}

Consider level $h$.
Recall from Section~\ref{s:mergeability-singleLevel} that $Q_h$ is the set of nodes $t$ such that (i) $t$ is an $i$-node for $i\le \lambda$ that represents a level-$h$ compaction involving important items
(this compaction may or may not be important),
and (ii) there is no node $t'$ on the path from the parent of $t$ to the topmost $\lambda$-node containing $t$ in its subtree such that $t'$ represents a level-$h$ compaction involving important items.
Intuitively, $Q_h$ captures ``maximal'' nodes (with index $i\le \lambda$) that represent a level-$h$ compaction removing one or more important items from level $h$.
Note that an important item that remains in the level-$h$ buffer represented by a node $t\in Q_h$
(after performing the compaction operation represented by $t$) is never removed from the level-$h$ buffer, by the definition of $Q_h$.
For $0\le i\le \lambda$, let $Q^i_h$ be the set of $i$-nodes in $Q_h$ and
let $q^i_h = |Q^i_h|$.

Note that $q^i_h = 0$ for $h \ge H_0$ by Lemma~\ref{lem:no error at Hi}
(conditioned on the bounds in Lemma~\ref{lem:decay rank mergeability} holding).
Now we observe that values $q^i_h$ for $i = 0,\dots, \lambda$  give upper bounds on the number of important items
at level $h$. This follows from the fact that the level-$h$ buffer represented by a node in $Q^i_h$
contains at most $B_i$ items.

\begin{obs}\label{obs:qi_h UB}
For any $h\ge 0$ and $0\le g\le \lambda$, the level-$h$ buffers of the sketches represented by nodes in $Q^i_h$ for some $i\ge g$
contain at most  $\sum_{i = g}^\lambda q^i_h\cdot B_i$ important items in total
(after performing compaction operations represented by these nodes).
\end{obs}

Next, in Observation~\ref{obs:qi_h LB}, we show that the $q^i_h$ values can as well be used to lower bound the total number of important items
at level $h$ in topmost $\lambda$-nodes. Combined with Lemma~\ref{lem:rank initial bound}, this will give us a useful bound
on $\sum_{h\ge 0} \sum_{i = 0}^\lambda 2^h\cdot q^i_h\cdot B_i$ at the very end of the analysis.

In the observation, we also take into account items added to level $h$ from compactions (at level $h-1$ if $h > 0$)
that are \emph{not} represented by a node in the subtree of a node in $Q_h$.
Namely, for $h>0$ and any $0\le i\le \lambda$, let $z^i_h$ be the number of items added to level $h$ during merge operations
represented by $i$-nodes that are \emph{not} in the subtree of a node in $Q_h$.
For $h=0$, we define $z^i_0 = 0$ for any $i$.

\begin{obs}\label{obs:qi_h LB}
For any level $h$, the level-$h$ buffers of topmost $\lambda$-nodes contain at least 
$\sum_{i = 0}^\lambda q^i_h\cdot B_i / 2 + z^i_h$ important items.
\end{obs}

\begin{proof}
Consider an $i$-node $t\in Q^i_h$ and the level-$h$ buffer represented by $t$.
As the level-$h$ compaction represented by $t$ removes one or more important items and as $t$ is an $i$-node,
there must be at least $B_i / 2$ important items in the level-$h$ buffer that remain there after the compaction operation is done.
Furthermore, by condition~(ii) in the definition of $Q_h$, these $B_i / 2$ important items are not removed from the level-$h$ buffer
and the sets of these $B_i / 2$ important items for two nodes $t, t'\in Q_h$ are disjoint.
Finally, the $z^i_h$ items added to level $h$ during merge operations represented by $i$-nodes
that are not in the subtree of a node in $Q_h$ are disjoint (w.r.t.\ index $i$) and distinct from items 
in the buffers of nodes in $Q_h$, which shows the claim.
\end{proof}

Note that using Observation~\ref{obs:qi_h LB}, the values of $\sum_{i = 0}^\lambda q^i_h\cdot B_i / 2 + z^i_h$ 
give a lower bound on the rank of $y$ estimated by the topmost $\lambda$-nodes 
(if $\ell = \lambda$, then the only topmost $\lambda$-node is the root of the merge tree $T$).
We now complement it with an upper bound showing that the rank of $y$ estimated by the topmost $\lambda$-nodes 
cannot be too far from $\rank(y)$.
This can be seen as an initial bound on the error which will be used within the proof of the final, more refined bound
on the variance of $\err(y)$.

\begin{lemma} \label{lem:rank initial bound}
	Conditioned on the bounds in Lemma~\ref{lem:decay rank mergeability} holding,
	with probability at least $1 - \delta$ it holds that 
	\begin{equation}\label{eqn:rank_initital_bound}
	\sum_{i = 0}^\lambda \sum_{h' = 0}^{H_i} 2^{h'}\cdot \left(q^i_{h'}\cdot \frac{B_i}{2} + z^i_{h'}\right)
	\le 2\rank(y)
	\end{equation}

\end{lemma}

\begin{proof}
	Note that $q^i_h = 0$ for $h \ge H_i$ and that there is no important compaction
	represented by an $i$-node at any level $h\ge H_i$ by Lemma~\ref{lem:no error at Hi}.
	Let $\err^{\le \lambda}(y)$ be the error introduced by compactions represented 	by $i$-nodes for $i\le \lambda$.
	By Observation~\ref{obs:qi_h LB}, it is sufficient to show that $\err^{\le \lambda}(y)\le \rank(y)$.
	Recall that $\err^{\le \lambda}(y)$ is a zero-mean sub-Gaussian random variable. 
	Similarly as in Lemma~\ref{lem:decay rank mergeability}, we define
	$$H'_0 = \max\left(0, H_0 - \left\lceil\log_2\sqrt{\ln\frac{1}{\delta}}\right\rceil\right)\,.$$

	We split $\err^{\le \lambda}(y)$, the error of the rank estimate for $y$, into two parts (we drop the superscript $\le \lambda$ for simplicity):
	$$\err'(y) = \sum_{h=0}^{H'_0 - 1} 2^h\cdot \err_h(y) \quad\text{and}\quad \err''(y) = \sum_{h=H'_0}^{H_0-1} 2^h\cdot \err_h(y)\,.$$
	Note that $\err^{\le \lambda}(y) = \err'(y) + \err''(y)$; we bound both these parts by $\frac12\rank(y)$ w.h.p., starting with $\err'(y)$.
	If $H'_0 = 0$, then clearly $\err'(y) = 0$.
	Otherwise, we analyze the variance of the zero-mean sub-Gaussian variable $\err'(y)$ as follows:
	\begin{align*}
	\V[\err'(y)] &= \sum_{h=0}^{H'_0 - 1} 2^{2h}\cdot \V[\err_h(y)]
	\\
	&\le \sum_{h=0}^{H'_0 - 1} 2^{2h}\cdot \rank_h(y)
	\\
	&\le \sum_{h=0}^{H'_0 - 1} 2^{2h}\cdot 2^{-h+2}\rank(y)
	\\
	&\le 2^{H'_0+2}\cdot \rank(y)
	= 2^{H'_0-H_0 + 2}\cdot 2^{H_0}\cdot  \rank(y)
	\le 2^{H'_0-H_0 + 6}\cdot \frac{\rank(y)^2}{B_0}
	\le \frac{\rank(y)^2}{8\ln\frac{1}{\delta}}
	\end{align*}
	where the first inequality is using a simple bound of $\V[\err_h(y)]\le \rank_h(y)$, the second follows from Lemma~\ref{lem:decay rank mergeability},
	and the fourth inequality uses $2^{H_0} \leq 2^4\cdot \rank(y) / B_0$ by~\eqref{eqn:H_i-bound mergeability},
	and the last inequality follows from the definition of $H'_0$ and $B_0\ge 2^9\cdot \sqrt{\ln(1/\delta)}$ by~\eqref{eqn:setk-merge}.
	We again apply Fact~\ref{fact:tailBound-subGaussian} to obtain
	\begin{align*}
	\Pr\left[ \err'(y) > \frac12\rank(y) \right] 
	< \exp\left( -\frac{\rank(y)^2\cdot \ln\frac{1}{\delta}\cdot }{4\cdot2\cdot \frac18 \rank(y)^2} \right) 
	= \exp\left( -\ln \frac{1}{\delta} \right)
	= \delta\,.
	\end{align*}
	
	Finally, we use deterministic bounds to analyze $\err''(y)$, using that we only care about $i$-nodes for $i\le \lambda$
	As in Lemma~\ref{lem:decay rank mergeability}, let $i(h) \le \lambda$ be the largest integer $i$ satisfying $2^{-h+2}\rank(y) > B_i / 2$. 
	Then
	\begin{align*}
	\err''(y) & \le \sum_{h=H'_0}^{H_0-1} 2^h\cdot m^{\le \lambda}_h\\
	& \le \sum_{h=H'_0}^{H_0-1} 2^h\cdot \frac{4\rank_h(y)}{k_{i(h)}}
	\le \sum_{h=H'_0}^{H_0-1} 2^h\cdot \frac{2^{-h+4}\rank(y)}{k_{i(h)}}
	= \sum_{h=H'_0}^{H_0-1} \frac{2^4\cdot \rank(y)}{k_{i(h)}}
	\le \frac{\rank(y)}{2}\,,
	\end{align*}
	where the second inequality is by Corollary~\ref{cor:single layer mergeability}, the third by Lemma~\ref{lem:decay rank mergeability}, and
	the last inequality uses that $k_{i(H'_0)}\ge 2^5$ and that the 
	values of $k_{i(h)}$ for $h\in [H'_0, H_0-1]$ increase exponentially with
	increasing $h$ (by a factor of $\sqrt{2}$), which follows from similar arguments as in the paragraph below~\eqref{eqn:rank_bound_mergeability_det} in Lemma~\ref{lem:decay rank mergeability}.
\end{proof}

The following technical lemma bounds the variance on each level in a somewhat different way than in the streaming setting (Section~\ref{s:fullanalysis}).
The idea is to bound the variance in terms of the $q^i_h$ values so that we can then use Observation~\ref{obs:qi_h LB}.
To this end, we first use Observation~\ref{obs:qi_h UB} to bound $\rank_h(y)$ in terms of the $q^i_h$ values,
using the following observation:
For each important item at level $h+1$, there are roughly two important items removed from level $h$.
Here, ``roughly'' refers to the fact that each level-$h$ compaction operation that promotes $b\ge 1$ important items
removes at most $2b + 1\le 3b$ important items from the level-$h$ buffer.
Applying this observation together with Observation~\ref{obs:qi_h UB}, we show by an induction on $h$ that
$\rank^{[0,\lambda]}_h(y) \le \sum_{i = 0}^\lambda \sum_{h' \ge h} 2\cdot 3^{h' - h}\cdot (q^i_{h'}\cdot B_i + z^i_{h'})$.
Recall that $\rank^{[a,\lambda]}_h(y)$ is the number of important items that are
either removed from level $h$ during a compaction represented by an $i$-node for $a\le i\le \lambda$,
or remain at the level-$h$ buffer represented by a node $t\in Q^i_h$ for $a\le i\le \lambda$ (after the compaction operation represented by $t$ is done).
Note that this provides alternative rank bounds to Lemma~\ref{lem:decay rank mergeability}.

Then we apply Lemma~\ref{lem:single layer mergeability - stronger} to get our variance bound,
which however brings additional technical difficulties. To overcome them,
we use a careful proof by induction over $g\in [0,\lambda]$.
We will only focus on $i$-nodes with $i\le \lambda$ and on levels $h \ge H_{\lambda+1}$;
the error from remaining nodes and levels will be analyzed later.

\begin{lemma}\label{lem:var bound for full mergeability}
Conditioned on the bounds in Lemma~\ref{lem:decay rank mergeability} holding,
for any $h\ge H_{\lambda+1}$, it holds that
\begin{equation}\label{eqn:var bound for full mergeability - lemma}
\V[\err_h(y)] \le \sum_{i = 0}^\lambda \sum_{h' \ge h} \frac{8\cdot 3^{h' - h}\cdot (q^i_{h'}\cdot B_i + z^i_{h'})}{k_i}\,.
\end{equation}
\end{lemma}

\begin{proof}
We first note that by Lemma~\ref{lem:no error at Hi} (conditioned on Lemma~\ref{lem:decay rank mergeability}),
there is no important compaction at any level $h\ge H_{\lambda+1}$ represented by an $i$-node for $i > \lambda$.
Therefore, our focus will again be solely on $i$-nodes for $i\le \lambda$.
As outlined above, we first bound $\rank^{[g,\lambda]}_h(y)$ for any $0\le g\le \lambda$ and in particular, we prove by a ``backward'' induction on $h = H, H-1,\dots, H_{\lambda+1}$
that the following inequality holds for any fixed $0\le g\le \lambda$:
\begin{equation}\label{eqn:rank bound for full mergeability}
\rank^{[g,\lambda]}_h(y) \le \sum_{i = g}^\lambda \left( \sum_{h' \ge h+1} \left( 2\cdot 3^{h' - h}\cdot (q^i_{h'}\cdot B_i + z^i_{h'})\right)
	+  2\cdot q^i_h\cdot B_i	\right)\,.
\end{equation}

At level $h = H$, there is no important compaction, implying that $\rank^{[g,\lambda]}_H(y) = 0$ and $q^i_H = 0$ for any~$i$,
which establishes the base case.

Consider $h < H$ and suppose that~\eqref{eqn:rank bound for full mergeability} holds for $h+1$, i.e., we have that
\begin{equation}\label{eqn:rank bound for full mergeability - IndHyp}
\rank^{[g,\lambda]}_{h+1}(y) \le \sum_{i = g}^\lambda \left(\sum_{h' \ge h+2} \left(2\cdot 3^{h' - h - 1}\cdot (q^i_{h'}\cdot B_i + z^i_{h'})\right)
			+ 2\cdot q^i_{h+1}\cdot B_i\right)\,.
\end{equation}

To show~\eqref{eqn:rank bound for full mergeability}, we first bound the number of important items removed from level $h$
in terms of $\rank^{[g,\lambda]}_{h+1}(y)$.
For brevity, let $z^{[g,\lambda]}_{h+1} = \sum_{i=g}^\lambda z^i_{h+1}$.
Note that there are at most $\rank^{[g,\lambda]}_{h+1}(y) + z^{[g,\lambda]}_{h+1}$ important items added to level $h+1$
during compactions represented by $i$-nodes for some $i\in [g, \lambda]$, since each such important item
either gets removed from level $h+1$ or remains in the level-$(h+1)$ buffer represented by a node in $Q^i_{h+1}$ for some $i\in [g, \lambda]$
or is added to level $h+1$ during a merge operation represented by an $i$-node $t$ for $i\in[g, \lambda]$
such that $t$ is not in the subtree of a node in~$Q_{h+1}$.
Further, observe that each compaction which adds $b$ important items to level $h+1$ removes at most $2b+1$ important items
from the level-$h$ buffer --- more precisely, it removes $2b$ important items if it is not important,
and otherwise, it removes either $2b-1$, or $2b+1$ important items.
The number of important compactions represented by $i$-nodes for some $i\in [g, \lambda]$
is at most $\rank^{[g,\lambda]}_h(y) / 5$ by Lemma~\ref{lem:single layer mergeability - stronger} with $a = g$
and by $k_i\ge 20$ for any $i\le \lambda$.
Thus, the number of important items removed from level $h$ during compactions represented by $i$-nodes for $i \in [g, \lambda]$
is upper bounded by $2\rank^{[g,\lambda]}_{h+1}(y) + 2z^{[g,\lambda]}_{h+1} + (\rank^{[g,\lambda]}_h(y) / 5)$.

By Observation~\ref{obs:qi_h UB}, at most $\sum_{i = g}^\lambda q^i_h\cdot B_i$ important items remain at
the level-$h$ buffers of the sketches represented by nodes in $Q^i_h$ for some $i\ge g$.
We thus have that 
\[\rank^{[g,\lambda]}_h(y)\le 2\rank^{[g,\lambda]}_{h+1}(y) + 2z^{[g,\lambda]}_{h+1} + \frac15\cdot \rank^{[g,\lambda]}_h(y) + \sum_{i = g}^\lambda q^i_h\cdot B_i.\]
After subtracting $\rank^{[g,\lambda]}_h(y) / 5$ from both sides of this inequality, and then multiplying both sides of the inequality by $5/4$, we get
\begin{align*}
\rank^{[g,\lambda]}_h(y)
&\le \frac52\cdot \rank^{[g,\lambda]}_{h+1}(y) + \frac52\cdot z^{[g,\lambda]}_{h+1} + \frac{5}{4}\cdot \sum_{i = g}^\lambda q^i_h\cdot B_i
\\
&\le \frac52\cdot \Bigg(\sum_{i = g}^\lambda \bigg(\sum_{h' \ge h+2} \left(2\cdot 3^{h' - h - 1}\cdot (q^i_{h'}\cdot B_i + z^i_{h'})\right)
		+ 2\cdot q^i_{h+1}\cdot B_i\bigg)\Bigg) 
	 + \frac52\cdot z^{[g,\lambda]}_{h+1} + \frac{5}{4}\cdot \sum_{i = g}^\lambda q^i_h\cdot B_i
\\
&\le \sum_{i = g}^\lambda \left( \sum_{h' \ge h+1} \left( 2\cdot 3^{h' - h}\cdot (q^i_{h'}\cdot B_i + z^i_{h'})\right)
	+  2\cdot q^i_h\cdot B_i	\right)\,,
\end{align*}
where the second inequality uses the induction hypothesis~\eqref{eqn:rank bound for full mergeability - IndHyp}.
Thus, \eqref{eqn:rank bound for full mergeability} holds.

Using $z^i_h\ge 0$, we simplify~\eqref{eqn:rank bound for full mergeability} and get
\begin{equation}\label{eqn:rank bound for full mergeability - simpler}
\rank^{[g,\lambda]}_h(y) \le \sum_{i = g}^\lambda \sum_{h' \ge h} \left( 2\cdot 3^{h' - h}\cdot (q^i_{h'}\cdot B_i + z^i_{h'})\right)\,.
\end{equation}

Finally, we bound the variance $\V[\err_h(y)]$, which is at most $\sum_{i = 0}^\lambda m^i_h$ as $m^i_h = 0$ for $i > \lambda$ and $h\ge H_{\lambda+1}$,
by Lemma~\ref{lem:no error at Hi}.
Recall from Section~\ref{s:mergeability-singleLevel}
that $m^i_h$ is the number of important compaction operations at level $h$ represented by $i$-nodes.
We prove by a ``backward'' induction on $g = \lambda, \lambda - 1, \dots, 0$
that the following inequality holds for any $h\ge H_{\lambda+1}$:
\begin{equation}\label{eqn:var bound for full mergeability}
\sum_{i = g}^\lambda m^i_h \le \sum_{i = g}^\lambda \frac{1}{k_i}\cdot\sum_{h' \ge h} 8\cdot 3^{h' - h}\cdot (q^i_{h'}\cdot B_i + z^i_{h'})\,.
\end{equation}
Note that~\eqref{eqn:var bound for full mergeability} for $g = 0$ gives~\eqref{eqn:var bound for full mergeability - lemma}
and that for $h = H$, there is no (important) compaction, thus we have that $h < H$.
Consider $0\le g\le \lambda$ and suppose that for any $g' > g$ (in the case $g < \lambda$), we have that
\begin{equation}\label{eqn:var bound for full mergeability - IndHyp}
\sum_{i = g'}^\lambda m^i_h \le \sum_{i = g'}^\lambda \frac{1}{k_i}\cdot\sum_{h' \ge h} 8\cdot 3^{h' - h}\cdot (q^i_{h'}\cdot B_i + z^i_{h'})\,.
\end{equation}
To show~\eqref{eqn:var bound for full mergeability},
we use Lemma~\ref{lem:single layer mergeability - stronger} with $a = g$
to get $\sum_{i = g}^{\lambda} m^i_h\cdot k_i \le 4\rank^{[g,\lambda]}_h(y)$.
Dividing this inequality by $k_g$ and using~\eqref{eqn:rank bound for full mergeability - simpler} gives
$$\sum_{i = g}^\lambda \frac{k_i}{k_g}\cdot m^i_h
	\le \sum_{i = g}^\lambda \frac{1}{k_g}\cdot\sum_{h' \ge h} 8\cdot 3^{h' - h}\cdot (q^i_{h'}\cdot B_i + z^i_{h'})\,.$$
If $g = \lambda$, this proves the base case of the induction.
Otherwise, for every $g' > g$, we add inequality~\eqref{eqn:var bound for full mergeability - IndHyp} 
(that holds by the induction hypothesis) multiplied by $(k_{g' - 1} - k_{g'}) / k_g$
(which is non-negative as $k_{g' - 1} \ge k_{g'}$) to obtain
\begin{align}
	\nonumber
\sum_{i = g}^\lambda &\left(\frac{k_i}{k_g} + \sum_{g' = g + 1}^i \frac{k_{g' - 1} - k_{g'}}{k_g}\right) \cdot m^i_h 
\\
&\le \sum_{i = g}^\lambda \left(\frac{k_i}{k_g\cdot k_i} + \sum_{g' = g + 1}^i \frac{k_{g' - 1} - k_{g'}}{k_g\cdot k_i} \right)
	\cdot \sum_{h' \ge h} 8\cdot 3^{h' - h}\cdot (q^i_{h'}\cdot B_i + z^i_{h'})\,.
\label{eqn:var bound for full mergeability - 2}
\end{align}
Note that the sum of fractions of $k_i$'s on the RHS of~\eqref{eqn:var bound for full mergeability - 2}
equals $1/k_i$ for any $i$, since the numerators in $\sum_{g' = g + 1}^i (k_{g' - 1} - k_{g'}) / (k_g\cdot k_i)$ form a telescoping sum, which equals $k_g - k_i$.
Similarly, the sum of fractions of $k_i$'s on the LHS of~\eqref{eqn:var bound for full mergeability - 2} equals 1 for any $i$,
so the LHS equals $\sum_{i = g}^\lambda m^i_h$.
This shows~\eqref{eqn:var bound for full mergeability}.
\end{proof}

Finally, we have all ingredients needed to show that we can match the streaming result of Theorem~\ref{thm:sketch}
even when creating the sketch using an arbitrary sequence of merge operations without any advance knowledge about
the total size of the input. That is, we now prove the full mergeability claim of Theorem~\ref{thm:mergeability-full-intro},
which we restate for convenience.

\thmintro*

\begin{proof}
We condition on the bounds from Lemmas~\ref{lem:decay rank mergeability} and~\ref{lem:rank initial bound},
which together hold with probability at least $1 - 2\delta$.
Using Lemma~\ref{lem:var bound for full mergeability},
we first bound the error on levels $h\ge H_{\lambda+1}$,
for which we have that $m^i_h = 0$ for $i > \lambda$ by Lemma~\ref{lem:no error at Hi}:
\begin{align}
\sum_{h\ge H_{\lambda+1}} 2^{2h} \cdot \V[\err_h(y)]
&\le \sum_{h\ge H_{\lambda+1}} 2^{2h} \cdot \sum_{i = 0}^\lambda \sum_{h' \ge h} \frac{8\cdot 3^{h' - h}\cdot (q^i_{h'}\cdot B_i + z^i_{h'})}{k_i}
\nonumber\\
&= \sum_{i = 0}^\lambda \sum_{h'\ge H_{\lambda+1}} \sum_{h = H_{\lambda+1}}^{h'} 2^{2h} \frac{8\cdot 3^{h' - h}\cdot (q^i_{h'}\cdot B_i + z^i_{h'})}{k_i}
\nonumber\\
&\le \sum_{i = 0}^\lambda \sum_{h'\ge H_{\lambda+1}} \frac{2^{2h' + 5}\cdot (q^i_{h'}\cdot B_i + z^i_{h'})}{k_i}
\label{eqn:var bound for full mergeability - ineq 2}\\
&\le \sum_{i = 0}^\lambda \sum_{h' = H_{\lambda+1}}^{H_i} \frac{2^{h' + 9}\cdot (q^i_{h'}\cdot B_i + z^i_{h'})\cdot \rank(y)}{k_i\cdot B_i}
\label{eqn:var bound for full mergeability - ineq 3}\\
&\le \frac{\eps^2\cdot \rank(y)}{2^5\ln(1/\delta)}\cdot \sum_{i = 0}^\lambda \sum_{h' = H_{\lambda+1}}^{H_i} 2^{h'}\cdot (q^i_{h'}\cdot B_i + z^i_{h'})\,, \label{eqn:var bound for full mergeability - ineq 4}
\end{align}
where inequality~\eqref{eqn:var bound for full mergeability - ineq 2} follows from
\begin{align*}
\sum_{h = H_{\lambda+1}}^{h'} 2^{2h} \cdot 8\cdot 3^{h' - h}
= 8\cdot 3^{h'}\cdot \sum_{h = H_{\lambda+1}}^{h'} \left(\frac43\right)^h
\le 8\cdot 3^{h'}\cdot 3\cdot \left(\frac43\right)^{h'+1}
= 8\cdot 4^{h'+1} = 2^{2h' + 5}\,,
\end{align*}
inequality~\eqref{eqn:var bound for full mergeability - ineq 3} uses that $q^i_{h'} = 0$ and $z^i_{h'} = 0$ for $h' > H_i$ by Lemma~\ref{lem:no error at Hi}
and that $2^{H_i} \leq 2^4\cdot \rank(y) / B_i$ by~\eqref{eqn:H_i-bound mergeability},
and inequality~\eqref{eqn:var bound for full mergeability - ineq 4} follows from the bound on $k_i\cdot B_i$ in~\eqref{eqn:kB lower bound mergeability - strong}.

By Lemma~\ref{lem:rank initial bound}, $\sum_{i = 0}^\lambda \sum_{h' = H_{\lambda+1}}^{H_i} 2^{h'}\cdot (q^i_{h'}\cdot B_i + z^i_{h'}) \le 4\rank(y)$,
which implies our final variance bound for levels $h \ge H_{\lambda+1}$:
$$\sum_{h\ge H_{\lambda+1}} 2^{2h} \cdot \V[\err_h(y)]
\le \frac{\eps^2\cdot \rank(y)^2}{8\ln(1/\delta)}\,.$$
Let $\err_{\ge H_{\lambda+1}}(y)$ be the error in the estimate of $y$ from compactions at levels $h\ge H_{\lambda+1}$.
Plugging the variance bound into the tail bound for sub-Gaussian variables (Fact~\ref{fact:tailBound-subGaussian}) we conclude that
\begin{align*}
\Pr\left[ |\err_{\ge H_{\lambda+1}}(y)| > \eps \rank(y)/2 \right] 
< 2\exp\left( -\frac{\eps^2\cdot \rank(y)^2}{8\cdot \eps^2\cdot \rank(y)^2/(8\ln(1/\delta))} \right) 
= 2\exp\left( -\ln \frac{1}{\delta} \right)
= 2\delta\,.
\end{align*}

Next, we bound the error from compactions on levels below $H_{\lambda+1}$, denoted $\err_{< H_{\lambda+1}}(y)$.
The variance of this error is
\begin{align}
\sum_{h= 0}^{H_{\lambda+1}-1} 2^{2h} \cdot \V[\err_h(y)]
&\le \sum_{h= 0}^{H_{\lambda+1}-1} 2^{2h} \cdot \rank_h(y)
\nonumber\\
&\le \sum_{h= 0}^{H_{\lambda+1}-1} 2^{2h} \cdot 2^{-h+2}\cdot \rank(y)
\label{eqn:var bound for full mergeability - sampling - ineq 2}\\
&\le 2^{H_{\lambda+1}+2} \cdot \rank(y)
\le 2^6\cdot \frac{\rank(y)^2}{B_{\lambda+1}}
\le \frac{\eps^2\cdot \rank(y)^2}{8\ln(1/\delta)}
\nonumber
\end{align}
where~\eqref{eqn:var bound for full mergeability - sampling - ineq 2} 
uses Lemma~\ref{lem:decay rank mergeability}
and the last two steps use~\eqref{eqn:H_i-bound mergeability} and~\eqref{eqn:B_lambda_bnd_HatKsquared}, respectively
(note that $B_{\lambda+1} \ge B_\lambda$).
Using Fact~\ref{fact:tailBound-subGaussian} as above
we get that $\Pr\left[ |\err_{< H_{\lambda+1}}(y)| > \eps \rank(y)/2 \right] < 2\delta$.
Rescaling $\delta$, this completes the calculation of the failure probability.

Lastly, we bound the size of the final sketch $S$. Let $H$ be the index of the highest level in $S$.
Observe that $H\le \lceil \log_2(n / B_0)\rceil$. Indeed, since each item at level $h =\lceil \log_2(n / B_0)\rceil$ has weight $2^h$,
there are fewer than $B_0$ items inserted to level $h$ and consequently, level $h$ is never compacted
(here, we also use that $B_0\le B_1\le \cdots \le B_\ell$). Hence, there are $O(\log(\eps n))$ levels in $S$ as $B_0\ge 1/\eps$.
Each level has capacity $B_\ell = 2\cdot k_\ell\cdot \lceil \log_2(\overline{N_\ell} / k_\ell) \rceil$, where $\overline{N_\ell} = \min\{N_\ell, N_\lambda\}$, so the total
memory requirement of $S$ is
\begin{align*}
 O\left(\log(\eps n)\cdot k_\ell\cdot \log\left(\frac{\overline{N_\ell}}{k_\ell}\right)\right)
 &= O\left(\log(\eps n)\cdot \frac{\hat{k}}{\sqrt{\log(\overline{N_\ell} / \hat{k})}}\cdot \log\left(\frac{\overline{N_\ell}}{k_\ell}\right)\right)
 \\
 &= O\left(\log(\eps n)\cdot \hat{k}\cdot \sqrt{\log\left(\frac{\overline{N_\ell}}{\hat{k}}\right)}\right)
 \\
 &= 
 O\left(\eps^{-1}\cdot \log^{1.5}(\eps n)\cdot \sqrt{\log\frac1\delta}\right)
 \,,
\end{align*}
where we use that $\log_2 (\overline{N_\ell} / k_\ell) = O(\log (\overline{N_\ell} / \hat{k})) = O(\log(\epsilon \overline{N_\ell})) = O(\log(\epsilon n))$ (as $k_\ell\ge \hat{k} / \sqrt{\log_2(\overline{N_\ell} / \hat{k})}$, 
$\hat{k} \ge 1/\eps$, and $\overline{N_\ell} \le n^2$).
\end{proof}

\section{Analysis with Extremely Small Failure Probability}
\label{s:extremeDelta}

In this section, we provide a somewhat different analysis of our algorithm, which
yields an improved space bound for extremely small values of
$\delta$, at the cost of a worse dependency on $n$. In particular, we show a space upper bound of
$O(\eps^{-1}\cdot \log^2(\eps n)\cdot \log\log(1/\delta))$ for any $\delta > 0$.
For simplicity, we only give the subsequent analysis in the streaming setting, although we conjecture
that an appropriately adjusted analysis in Section~\ref{s:mergeability} would yield the same bound under arbitrary merge operations.
We further assume foreknowledge of (a polynomial bound on) $n$, the
stream length; this assumption can be removed in a similar fashion to Section~\ref{s:unknownlength}.
As a byproduct, we show at the end of this section that this result implies a deterministic space upper bound of $O(\eps^{-1}\cdot \log^3(\eps n))$
for answering rank queries with multiplicative error $\eps$, thus matching the state-of-the-art result
of Zhang and Wang~\cite{zhangwang}. 

To this end, we use Algorithm~\ref{code:full} with a different setting of $k$, namely,
\begin{equation}\label{eqn:extremeDelta-setk}
k = 2^4\cdot \left\lceil\frac{1}{\eps}\cdot \log_2\ln\frac{1}{\delta}\right\rceil.
\end{equation}
We remark that, unlike in Section~\ref{s:fullanalysis}, the value of $k$ does not depend on $n$ directly
(only possibly indirectly if $\delta$ or $\eps$ is set based on $n$).
Note that the analysis of a single relative-compactor in Section~\ref{s:relativecompactor} still applies and
in particular, there are at most $\rank_h(y) / k$ important steps at each level $h$ by Lemma~\ref{lem:single layer}.

We enhance the analysis for a fixed item $y$ of Section \ref{s:fullanalysis}.
The crucial trick to improve the dependency on $\delta$ from $\sqrt{\ln(1/\delta)}$ to $\log_2\ln(1/\delta)$
is to analyze the sketch using Chernoff bounds only below a certain level $H'(y)$ and provide
deterministic bounds for levels $H'(y) \le h < H(y)$, where $H(y)$ is defined as in Section~\ref{s:fullanalysis} as the  minimal $h$ for which $2^{2-h} \rank(y) \leq B/2$.
The idea to analyze a few top levels deterministically was first used by Karnin et al.~\cite{karnin2016optimal} and we apply it also in Section~\ref{s:k0sec}.
We define $$H'(y) = \max\left(0, H(y) - \lceil\log_2\ln(1/\delta)\rceil\right)\,.$$
Next, we provide modified rank bounds.

\begin{lemma}\label{lem:rank det bound}
Assuming $H(y) > 0$,
for any $h < H(y)$ it holds that $\rank_h(y) \leq 2^{-h+2}\rank(y)$ with probability at least $1 - \delta$.
\end{lemma}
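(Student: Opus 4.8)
The plan is to follow the proof of Lemma~\ref{lem:halving}, but to treat the levels $0\le h<H(y)$ in two regimes separated by $H'(y)$. On the low levels $0\le h\le H'(y)$ I would prove the \emph{sharper} estimate $\rank_h(y)\le 2^{-h+1}\rank(y)$ by the same stochastic-domination-plus-Chernoff argument as in Lemma~\ref{lem:halving}; this remains affordable even with the much smaller $k$ of~\eqref{eqn:extremeDelta-setk} precisely because such levels sit at least $\lceil\log_2\ln(1/\delta)\rceil$ below $H(y)$. On the top few levels $H'(y)<h<H(y)$ — of which there are at most $\lceil\log_2\ln(1/\delta)\rceil$ — I would argue deterministically; the factor-$4$ slack in the target bound $\rank_h(y)\le 2^{-h+2}\rank(y)$ (versus the factor-$2$ slack of Lemma~\ref{lem:halving}) is exactly what a crude deterministic recursion needs to survive these levels.

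For the Chernoff regime I would induct on $h=0,1,\dots,H'(y)$: conditioned on $\rank_\ell(y)\le 2^{-\ell+1}\rank(y)$ for all $\ell<h$, $\rank_h(y)$ is stochastically dominated by $2^{-h}\rank(y)+Z_h$ with $Z_h=\sum_{\ell=0}^{h-1}2^{-h+\ell}\,\mathrm{Binomial}(m_\ell)$ a zero-mean sub-Gaussian variable of variance at most $2^{-h+1}\rank(y)/k$, using $m_\ell\le\rank_\ell(y)/k$ (Lemma~\ref{lem:single layer}) and the inductive hypothesis, just as in the proof of Lemma~\ref{lem:halving}. Fact~\ref{fact:tailBound-subGaussian} then bounds $\Pr[\rank_h(y)>2^{-h+1}\rank(y)]$ by $\exp(-2^{-h-2}\rank(y)\,k)$. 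Substituting $\rank(y)\ge 2^{H(y)-4}B$ (equivalent to~\eqref{eq:2to4}), $B\,k\ge 2k^2\ge 2^9\,\eps^{-2}(\log_2\ln(1/\delta))^2$ from~\eqref{eqn:extremeDelta-setk}, and the key inequality $2^{H(y)-h}=2^{H(y)-H'(y)}\cdot 2^{H'(y)-h}\ge\ln(1/\delta)\cdot 2^{H'(y)-h}$, this probability is at most $\delta^{\,8\cdot 2^{H'(y)-h}(\log_2\ln(1/\delta))^2}$, which for $\delta\le e^{-2}$ a short computation bounds by $\delta\cdot 2^{h-H'(y)-1}$; a union bound over $0\le h\le H'(y)$ gives total failure probability below $\delta$.

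For the deterministic regime I would condition on the event just established and combine the identity~\eqref{eqn:rank_process} with $|\mathrm{Binomial}(m)|\le m$ and $m_\ell\le\rank_\ell(y)/k$ to obtain the sample-path recursion $\rank_{\ell+1}(y)\le\frac12(1+1/k)\rank_\ell(y)$. Iterating from level $H'(y)$, where $\rank_{H'(y)}(y)\le 2^{-H'(y)+1}\rank(y)$ (from the Chernoff step, or trivially since $\rank_0(y)=\rank(y)$ when $H'(y)=0$), up to any $h$ with $H'(y)<h<H(y)$ gives $\rank_h(y)\le 2^{-(h-H'(y))}(1+1/k)^{h-H'(y)}\rank_{H'(y)}(y)$. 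Since $h-H'(y)<H(y)-H'(y)\le\lceil\log_2\ln(1/\delta)\rceil$ and $k\ge 2^4\log_2\ln(1/\delta)$, the correction $(1+1/k)^{h-H'(y)}$ is below $e^{1/8}<2$, so $\rank_h(y)<2^{-h+2}\rank(y)$. When $\delta$ is too large for $H'(y)>0$ to occur, the same recursion run from level $0$ covers every level and no randomness is used.

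The main obstacle is the bookkeeping in the Chernoff step: the tail bound by itself cannot reach the levels near $H(y)$ with the tiny $k$ of~\eqref{eqn:extremeDelta-setk} — this is the very reason for carving out a deterministic regime — and one has to verify that on the surviving levels $h\le H'(y)$ the exponent $2^{H(y)-h}\cdot Bk\gtrsim 2^{H(y)-h}\,\eps^{-2}(\log_2\ln(1/\delta))^2$ is large enough that, after extracting one factor $\ln(1/\delta)$ via $2^{H(y)-H'(y)}\ge\ln(1/\delta)$, the residue $8\cdot 2^{H'(y)-h}(\log_2\ln(1/\delta))^2$ still exceeds $1+(H'(y)-h+1)\ln 2/\ln(1/\delta)$, so that the per-level failure probabilities form a convergent geometric series of mass below $\delta$; this is where the constant $2^4$ in~\eqref{eqn:extremeDelta-setk} is used. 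A minor nuisance is that~\eqref{eqn:extremeDelta-setk} degenerates for $\delta$ near $1/2$ (where $\log_2\ln(1/\delta)\le 0$); this is handled by the standing requirement that $k$ be at least the smallest even value allowed by Algorithm~\ref{code:full}, or by restricting to the regime in which the bound is not vacuous.
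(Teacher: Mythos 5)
Your proposal is correct and follows essentially the same strategy as the paper's proof: use the sub-Gaussian/Chernoff argument of Lemma~\ref{lem:halving} (with the factor-2 per level bound $\rank_h(y)\le 2^{-h+1}\rank(y)$) up to level $H'(y)$, and then iterate the deterministic sample-path recursion $\rank_{\ell+1}(y)\le\tfrac12(1+1/k)\rank_{\ell}(y)$ (from $\mathrm{Binomial}(m)\le m\le\rank_\ell(y)/k$) over the remaining at most $\lceil\log_2\ln(1/\delta)\rceil$ levels, absorbing the $(1+1/k)^{\cdot}$ loss into the extra factor of 2 in the target bound. Two minor points of divergence: the paper only needs the crude inequality $Bk\ge k^2\ge 2^6$ in the Chernoff exponent (your $(\log_2\ln(1/\delta))^2$ and $\eps^{-2}$ factors are valid but unnecessary slack), and the paper states the result for $\delta\le 0.5$ rather than $\delta\le e^{-2}$ — though you are right that the definition of $k$ in~\eqref{eqn:extremeDelta-setk} is degenerate for $\delta$ close to $0.5$, and the paper does not explicitly address that edge case either.
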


\begin{proof}
We first show by induction on $0\le h < H'(y)$ that $\rank_h(y) \leq 2^{-h+1}\rank(y)$ with probability at least
$1 - \delta\cdot 2^{h - H'(y)}$, conditioned on $\rank_{\ell}(y) \leq 2^{-\ell+1}\rank(y)$ for any $\ell < h$.
This part of the proof is similar to that of Lemma~\ref{lem:decay rank}.
The base case holds by $\rank_0(y) = \rank(y)$.

Consider $0 < h < H'(y)$. As in Lemma~\ref{lem:decay rank}, \[\Pr[\rank_h(y) > 2^{-h+1}\rank(y)] \le \Pr[Z_h > 2^{-h}\rank(y)],\]
where $Z_h$ is a zero-mean sub-Gaussian variable with variance at most $\V[Z_h] \le 2^{-h + 1}\cdot \rank(y) / k$.
We apply the  tail bound for sub-Gaussian variables (Fact~\ref{fact:tailBound-subGaussian}) on $Z_h$ to get
\begin{align*}
\Pr[Z_h > 2^{-h}\rank(y)]
&< \exp\left(-\frac{2^{-2h}\cdot \rank(y)^2}{2\cdot (2^{-h + 1}\cdot \rank(y) / k)} \right)
\\
&= \exp\left(-2^{-h - 2}\cdot \rank(y)\cdot k \right)
\\
&= \exp\left(-2^{-h + H'(y) - 6}\cdot 2^{H(y) - H'(y)}\cdot 2^{4 - H(y)} \rank(y)\cdot k \right)
\\
&\le \exp\left(-2^{-h + H'(y) - 6}\cdot 2^{H(y) - H'(y)}\cdot B\cdot k \right)
\\
&\le \exp\left(-2^{-h + H'(y)}\cdot \ln\frac{1}{\delta} \right)
= \delta^{2^{- H'(y) + h}}
\le \delta\cdot 2^{ - H'(y) + h}\,,
\end{align*}
where the second inequality uses $2^{4 - H(y)} \rank(y) \ge B$ (by the definition of $H(y)$),
the third inequality follows from $2^{H(y) - H'(y)} \ge \ln\frac{1}{\delta}$ and $B\cdot k\ge k^2 \ge 2^6$, and the last inequality uses $\delta \le 0.5$.
This concludes the proof by induction.
Taking the union bound over levels $h < H'(y)$, 
it holds that $\rank_h(y) \leq 2^{-h+1}\rank(y)$ for any $h < H'(y)$ with probability at least $1 - \delta$.

Finally, consider level $h\ge H'(y)$ and condition on $\rank_{H'(y) - 1}(y) \leq 2^{-H'(y)+2}\rank(y)$.
(In the case $H'(y) = 0$, we have $\rank_0(y) = \rank(y)$.)
Note that for any $\ell > 0$, it holds that $\rank_\ell(y) \leq \frac12 \cdot (1 + 1/k)\cdot \rank_{\ell-1}(y)$.
Indeed, $\rank_\ell(y) \le \frac12\cdot \left(\rank_{\ell - 1}(y) + \mathrm{Binomial}(m_{\ell-1})\right)$
(see~Equation~\ref{eqn:rank_process}) and $\mathrm{Binomial}(m_{\ell-1}) \le m_{\ell-1} \le \rank_{\ell - 1}(y) / k$
by Lemma~\ref{lem:single layer}.
That is, regardless of the outcome of the random choices, we always
obtain this weaker bound on the rank of an item.

By using this deterministic bound for levels $H'(y)\le \ell\le h$, we get
\begin{align*}
\rank_h(y)
	&\le 2^{-h + H'(y) - 1}\cdot \left(1 + \frac{1}{k}\right)^{h - H'(y) + 1} \cdot \rank_{H'(y) - 1}(y)
	\\
	&\le 2^{-h + H'(y) - 1}\cdot \left(1 + \frac{1}{k}\right)^{0.5\cdot k}\cdot 2^{-H'(y)+2}\cdot \rank(y)
	\le 2^{-h+2}\cdot \rank(y)\,,
\end{align*}
where in the second inequality, we use $h - H'(y) + 1 \le 0.5\cdot k$ (which follows from $h < H(y)$ and $H(y) - H'(y) \le 2\cdot \log_2\ln\frac{1}{\delta} \le 0.5\cdot k$)
together with the bound on $\rank_{H'(y) - 1}(y)$, 
and the last inequality uses the fact that $(1 + 1/k)^{0.5\cdot k} \le \sqrt{e} < 2$.
\end{proof}

We now state the main result of this section, which proves Theorem~\ref{thm:extremeDelta intro}
assuming an advance knowledge of (a polynomial upper bound on) the stream length $n$.
This assumption can be removed using the technique described in Section~\ref{s:unknownlength}.

\begin{theorem} \label{thm:extremeDelta}
Assume that (a polynomial upper bound on) the stream length $n$ is known in advance.
For any parameters $0 < \delta \le 0.5$ and $0 < \eps \le 1$,
let $k$ be set as in~\eqref{eqn:extremeDelta-setk}.
Then, for any fixed item~$y$, Algorithm~\ref{code:full} with parameters $k$ and $n$ computes an estimate
$\hat{\rank}(y)$ of $\rank(y)$ with error $\err(y) = \hat{\rank}(y) - \rank(y)$ such that 
$$ \Pr\left[ |\err(y)| > \eps \rank(y) \right] < 3\delta\,.$$
The overall memory used by the algorithm is $O\left(\eps^{-1}\cdot \log^2(\eps n)\cdot \log\log(1/\delta)\right)$.
\end{theorem}

\begin{proof}
We condition on the bounds in Lemma~\ref{lem:rank det bound}, which together hold with probability at least $1-\delta$.
We split $\err(y)$, the error of the rank estimate for $y$, into two parts:
$$\err'(y) = \sum_{h=0}^{H'(y) - 1} 2^h\cdot \err_h(y) \quad\text{and}\quad \err''(y) = \sum_{h=H'(y)}^{H} 2^h\cdot \err_h(y)\,.$$
Note that $\err(y) = \err'(y) + \err''(y)$; we bound both these parts by $\frac12\eps\rank(y)$ w.h.p., starting with $\err'(y)$.
If $H'(y) = 0$, then clearly $\err'(y) = 0$.
Otherwise, we analyze the variance of the zero-mean sub-Gaussian variable $\err'(y)$ as follows:
\begin{align*}
\V[\err'(y)] &= \sum_{h=0}^{H'(y) - 1} 2^{2h}\cdot \V[\err_h(y)]
\\
&\le \sum_{h=0}^{H'(y) - 1} 2^{2h}\cdot \frac{\rank_h(y)}{k}
\\
&\le \sum_{h=0}^{H'(y) - 1} 2^{2h}\cdot \frac{2^{-h+2}\rank(y)}{k}
\\
&\le 2^{H'(y)+2}\cdot \frac{\rank(y)}{k}
= 2^{H'(y)-H(y) + 2}\cdot 2^{H(y)}\cdot  \frac{\rank(y)}{k}
\le 2^{H'(y)-H(y) + 6}\cdot \frac{\rank(y)^2}{k\cdot B}
\end{align*}
where the first inequality is by Lemma~\ref{lem:single layer}, the second by Lemma~\ref{lem:rank det bound},
and the last inequality uses $2^{H(y)} \leq 2^4\cdot \rank(y) / B$, which follows from the definition of $H(y)$.

We again apply Fact~\ref{fact:tailBound-subGaussian} to obtain
\begin{align*}
\Pr\left[ |\err'(y)| > \frac{\eps \rank(y)}{2} \right] 
&< 2 \exp\left( -\frac{\eps^2\cdot \rank(y)^2}{4\cdot 2\cdot 2^{H'(y)-H(y) + 6}\cdot \rank(y)^2/(k\cdot B)} \right) 
\\
&= 2 \exp\left( -\eps^2\cdot k\cdot B\cdot 2^{-H'(y)+H(y) - 9} \right)
\\
&\le 2 \exp\left( -2^{-H'(y)+H(y)} \right)
= 2 \exp\left( -\ln \frac{1}{\delta} \right)
= 2\delta\,,
\end{align*}
where the second inequality uses $k\cdot B\ge 2\cdot k^2\ge \eps^{-2}\cdot 2^9$.

Finally, we use deterministic bounds to analyze $\err''(y)$.
Note that $$\rank_{H(y)}(y) \leq 2^{-H(y)+2}\rank(y) \le B/2,$$ where the first inequality holds
because we have conditioned on the bounds of Lemma \ref{lem:rank det bound} holding, and the second inequality holds by the definition of $H(y)$. It follows that
there is no important step at level $H(y)$, and hence no error introduced at any level $h\ge H(y)$, i.e.,
$\err_h(y) = 0$ for $h\ge H(y)$. 
We thus have
\begin{align*}
\err''(y) & = \sum_{h=H'(y)}^{H(y)-1} 2^h\cdot \err_h(y)\\
& \le \sum_{h=H'(y)}^{H(y)-1} 2^h\cdot \frac{\rank_h(y)}{k}
\le \sum_{h=H'(y)}^{H(y)-1} 2^h\cdot \frac{2^{-h+2}\rank(y)}{k}
\le \sum_{h=H'(y)}^{H(y)-1} \frac{\eps \rank(y)}{2\cdot \lceil\log_2\ln\frac{1}{\delta}\rceil}
\le \frac{\eps \rank(y)}{2}\,,
\end{align*}
where the first inequality is by Lemma~\ref{lem:single layer}, the second by Lemma~\ref{lem:rank det bound},
the third inequality follows from the definition of $k$ in~\eqref{eqn:extremeDelta-setk},
and the last step uses that the sum is over $H(y) - H'(y) \le \lceil\log_2\ln\frac{1}{\delta}\rceil$ levels.
This concludes the analysis of $\err(y)$ and the calculation of the failure probability.

Regarding the space bound, there are at most $H\le \lceil\log_2(n/B)\rceil + 1 \le \log_2(\eps n)$ relative-compactors by
Observation~\ref{obs:H bound}, and each requires
$B = 2\cdot k\cdot \lceil\log_2(n/k)\rceil = O\left(\eps^{-1}\cdot \log\log(1/\delta)\cdot \log(\eps n)\right)$ memory words.
\end{proof}

The proof of Theorem~\ref{thm:extremeDelta} implies a deterministic sketch of size $O(\eps^{-1}\cdot \log^3(\eps n))$,
which matches the state-of-the-art result by Zhang and
Wang~\cite{zhangwang}.
Indeed, when $\log_2\ln (1/\delta)\ge \log_2(\eps n) \ge H$ (i.e.,
$\delta < \exp(-\eps n)$), we have $H'(y) = 0$, and in this case, inspecting the proofs of Lemma \ref{lem:rank det bound} 
and Theorem \ref{thm:extremeDelta} yields that the entire analysis holds with probability 1. In more detail, 
when $H'(y)=0$, 
the bounds in  Lemma \ref{lem:rank det bound} hold with probability 1, and 
 the quantity $\err'(y)$ in the proof of Theorem~\ref{thm:extremeDelta} is deterministically 0,
 while the bound on $\err''(y)$ in the proof of Theorem~\ref{thm:extremeDelta} holds with probability 1 as well.
 This is sufficient to conclude that the error guarantee holds for \emph{any} choice of the algorithm's internal randomness.
 The resulting algorithm is reminiscent of deterministic algorithms for the uniform
quantiles problem~\cite{manku1998approximate}.

\section{Discussion and Open Problems}
\label{s:conclusion}
For constant failure probability $\delta$, we have shown an $O(\eps^{-1}\cdot \log^{1.5}(\eps n))$ space upper bound for relative error  quantile approximation over data streams.
Our algorithm is provably more space-efficient than any deterministic comparison-based algorithm~\cite{cormode2019tight}, and is within an $\widetilde{O}\left(\sqrt{\log(\eps n)}\right)$ factor of the known lower bound for
 randomized algorithms (even non-streaming algorithms, see Appendix~\ref{appendix}).
Moreover, the sketch output by our algorithm is fully mergeable, with the same accuracy-space trade-off as in the streaming setting,
rendering it suitable for a parallel or distributed environment.
  The main open question is to close the aforementioned $\widetilde{O}(\sqrt{\log(\eps n)})$-factor gap.  

\paragraph{Acknowledgments.}
The authors wish to thank anonymous reviewers for many helpful suggestions.
The research is performed in close collaboration with DataSketches \url{https://datasketches.apache.org/}, the Apache open source project for streaming data analytics.

\bibliographystyle{plain}
\bibliography{relative_error_quantiles}

\newpage
\appendix

\allowdisplaybreaks
\section{A Lower Bound for Non-Comparison Based Algorithms}
\label{appendix}
Cormode and Vesel{\'{y}} \cite[Theorem~6.5]{cormode2019tight} proved an $\Omega(\eps^{-1} \cdot \log^2(\eps n))$ lower 
bound on the number of items stored by any deterministic comparison-based streaming algorithm for the relative-error quantiles problem.
Below, we provide a lower bound which also applies to offline, non-comparison-based randomized algorithms, but at the (necessary) cost of losing a $\log(\eps n)$ 
factor in the resulting space bound.
This result appears not to have been explicitly stated in the literature, though it follows from an argument similar to
\cite[Theorem 2]{Cormode:2005:ECB:1053724.1054027}.
We provide details in this appendix for completeness.

\label{app:lower}
\begin{theorem}
\label{t:lb} For any randomized algorithm that processes a data stream of items from universe $\mathcal{U}$
of size $|\mathcal{U}|\ge \Omega(\eps^{-1} \cdot \log(\eps n))$
and outputs a sketch that solves the all-quantiles approximation problem for multiplicative error $\eps$ with probability at least $2/3$
requires the sketch to have size $\Omega\left(\eps^{-1} \cdot \log(\eps n) \cdot \log(\eps |\mathcal{U}|)\right)$ bits of space. 
\end{theorem}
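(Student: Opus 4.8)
The plan is to run a standard encoding (counting) argument: exhibit a large family $\mathcal F$ of ``hard'' streams with the property that any sketch which is simultaneously accurate for \emph{all} rank queries pins down which member of $\mathcal F$ produced it. Then a sketch must be able to take at least roughly $|\mathcal F|$ distinct values, so its worst-case bit length is at least about $\log_2 |\mathcal F|$. All that remains is to design $\mathcal F$ so that $\log_2 |\mathcal F| = \Omega(\eps^{-1}\log(\eps n)\log(\eps|\calU|))$ while keeping the reconstruction step valid.

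\emph{The hard family.} Partition $\calU$ into $N := \Theta(\eps^{-1}\log(\eps n))$ consecutive intervals $V_1 < V_2 < \cdots < V_N$ of equal size $|\calU|/N = \Theta(\eps|\calU|/\log(\eps n))$, and group them into $L := \Theta(\log(\eps n))$ consecutive ``levels'' of $m := \Theta(1/\eps)$ intervals each, so interval $i$ lies in level $j := \floor{i/m}$. A configuration is a tuple $\bm e = (e_1,\dots,e_N)$ with $e_i \in V_i$; the corresponding stream $\sigma_{\bm e}$ consists, for each $i$, of $w_i := 2^{j}$ copies of $e_i$. The stream length is $\sum_i w_i = m\sum_{j=0}^{L-1}2^{j} = \Theta(2^{L}/\eps)$, which we make at most $n$ by choosing $2^{L} = \Theta(\eps n)$; thus $L = \Theta(\log(\eps n))$ as claimed and $|\mathcal F| = \prod_i |V_i| = (|\calU|/N)^{N}$.

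\emph{Reconstruction and counting.} The key observation is that for any query $y \in V_i$ we have $\rank(y;\sigma_{\bm e}) = R_i + w_i\cdot \mathbf{1}[e_i \le y]$, where $R_i := \sum_{i' < i} w_{i'}$ depends only on the (fixed) multiplicity pattern, since every $e_{i'}$ with $i' < i$ lies below $V_i$ and every $e_{i'}$ with $i' > i$ lies above it. As interval $i$ sits at level $j$, we get $R_i \le m\sum_{j'=0}^{j}2^{j'} < 2m\, w_i$, so for $y \in V_i$ the permitted error $\eps\cdot\rank(y;\sigma_{\bm e}) \le \eps(2m+1)w_i$ is at most $w_i/2$ once $m$ is taken to be a small enough multiple of $1/\eps$ (and $\eps$ below an absolute constant). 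Hence on $V_i$ the estimate $\hat\rank(\cdot)$ stays within $w_i/2$ of a two-valued step function with a jump of size $w_i$ located exactly at $e_i$, so a binary search over $V_i$ that compares $\hat\rank(y)$ with the known threshold $R_i + w_i/2$ recovers $e_i$ exactly; doing this for all $i$ recovers $\bm e$. Now, if the algorithm is correct on all queries with probability at least $2/3$ for every stream, then averaging over $\mathcal F$ gives a fixing $\rho^{*}$ of the internal randomness under which it is correct on all queries for at least a $2/3$ fraction of $\mathcal F$; by the reconstruction argument the map from such a configuration to its sketch under $\rho^{*}$ is injective, so the sketch takes at least $\tfrac23|\mathcal F|$ distinct values and therefore needs at least $\log_2(\tfrac23|\mathcal F|) = \Omega\big(N\log(|\calU|/N)\big) = \Omega\big(\eps^{-1}\log(\eps n)\cdot\log(\eps|\calU|/\log(\eps n))\big)$ bits, which under the stated hypothesis on $|\calU|$ (and absorbing lower-order terms) is $\Omega(\eps^{-1}\log(\eps n)\log(\eps|\calU|))$.

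\emph{Main obstacle.} The only real work is the ``leveling'': the multiplicities $w_i = 2^{\floor{i/m}}$ and the per-level count $m = \Theta(1/\eps)$ must be tuned so that, simultaneously, (i) the total stream length does not exceed $n$, (ii) at \emph{every} level the error budget $\eps\cdot\rank$ on an interval is a small enough fraction of that interval's multiplicity $w_i$ for the binary-search reconstruction to go through, and (iii) there are $\Omega(\log(\eps n))$ usable levels so that $N = \Omega(\eps^{-1}\log(\eps n))$. Making all three constants line up — together with checking that $\log(|\calU|/N)$ is $\Omega(\log(\eps|\calU|))$ in the relevant parameter range — is the crux; the rest is the routine encoding boilerplate sketched above.
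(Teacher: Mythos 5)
Your proof is essentially the same encoding argument as the paper's: construct a family of streams where items are grouped into $\Theta(\log(\eps n))$ ``levels'' with geometrically increasing multiplicities, so that a sketch accurate to multiplicative error on all queries determines the entire configuration, giving a $\log_2|\mathcal F| = \Omega(\eps^{-1}\log(\eps n)\log(\eps|\calU|))$ bit lower bound. The only cosmetic difference is that you pin each encoded element to its own fixed interval $V_i$ and recover it by binary search against the known threshold $R_i + w_i/2$, whereas the paper encodes an arbitrary $s$-subset of $\calU$ and recovers it by scanning for rank thresholds; both yield the same bound and the same construction of hard streams.
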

\begin{proof}
We show that any multiplicative-error sketch for all-quantiles approximation  can be used to 
losslessly encode an arbitrary subset $S$ of the data universe $\mathcal{U}$ of size $|S| = \Theta\left(\eps^{-1} \log(\eps n)\right)$. 
	This requires $\log_2{|\mathcal{U}| \choose |S|} = \Theta\left(\log((|\mathcal{U}|/|S|)^{|S|})\right) = \Theta\left(|S| \log\left(\eps |\mathcal{U}|\right)\right)$ bits of space. 
	The theorem follows.
	
	Let $\ell = 1/(8\eps)$ and $k=\log_2(\eps n)$; for simplicity, we assume that both $\ell$ and $k$ are integers.
	Let $S$ be a subset of $\mathcal{U}$ of size $s:=\ell \cdot k$. We will construct a stream $\sigma$ of length less than $\ell \cdot 2^k \leq n$
	such that a sketch solving the  all-quantiles approximation problem for $\sigma$ enables reconstruction of~$S$. To this end, let $\{y_1, \dots, y_s\}$ denote
	the elements of $S$ in increasing order. Consider the stream $\sigma$ where items $y_1, \dots, y_{\ell}$ each appear once,
	items $y_{\ell+1}, \dots, y_{2\ell}$ appear twice, and in general items $y_{i \ell + 1}, \dots, y_{(i+1) \ell}$ appear $2^i$ times, for $i=0, \dots, k-1$.
	Let us refer to all universe items in the interval $[y_{i \ell + 1}, y_{(i+1) \ell}]$ as ``phase-$i$'' items.
	
	The construction of $\sigma$ means that the multiplicative error $\eps$ in the estimated rank of any phase-$i$ item is at most $2^{i+1}/8 < 2^{i-1}$. This means 
	that for any phase $i \geq 0$ and integer $j \in [1, \ell]$, one can identify item $y_{i \ell + j}$ by finding the smallest universe item whose estimated rank 
	is strictly greater than $(2^{i}-1) \cdot \ell + 2^i\cdot j -  2^{i-1}$. Here, $(2^{i}-1) \cdot \ell$ is the number of stream updates
	corresponding to items in phases $0, \dots, i-1$, while $2^{i-1}$ is an upper bound on the error of the estimated rank of any phase-$i$ item. 
	Hence, from any sketch solving the all-quantiles approximation problem for $\sigma$ one can
	obtain the subset $S$, which concludes the lower bound.
\end{proof}

Theorem \ref{t:lb} is tight up to constant factors as an optimal summary consisting of $O(\eps^{-1} \cdot \log(\eps n))$ items can be constructed offline. 
For $\ell = \eps^{-1}$,
this summary stores all items of rank $1, \dots, 2\ell$ appearing in the stream and assigns them weight one, stores every other item of rank between $2\ell +1$ and $4 \ell$ and assigns them weight 2, 
stores every fourth item of rank between $4\ell+1$ and $8\ell$ and assigns them weight 4, and so forth.  This yields a weighted coreset $S$ for the
relative-error quantiles approximation, consisting of $|S| = \Theta\left( \ell \cdot \log({\eps n})\right)$ many items. Such a set $S$ can be
represented with $\log_2{|\mathcal{U}| \choose |S|} = \Theta\left(\eps^{-1} \cdot \log(\eps n) \cdot \log(\eps |\mathcal{U}|)\right)$ many bits. 

\section{Proof of Corollary \ref{corollaryintro}}
\label{app:corollaryintro}
Here we prove Corollary \ref{corollaryintro}, restated for the reader's convenience. 

\corintro*

\begin{proof}
Let $\Sstar$ be the offline optimal summary of the stream with multiplicative error $\eps/3$, i.e.,
a subset of items in the stream such that for any item $x$, there is $y\in \Sstar$ with
$|\rank(y) - \rank(x)| \le (\eps/3)\cdot \rank(x)$. Here, $y$ is simply the closest item to $x$ in the total order that is an element of $\Sstar$.
Observe that $\Sstar$ has $O(\eps^{-1} \cdot \log(\eps n))$ items; see the remark below Theorem~\ref{t:lb} in Appendix~\ref{app:lower} for a construction of $\Sstar$.

Thus, if our sketch with parameter $\eps' = \eps/3$ is able to compute for any $y\in \Sstar$ a rank estimate $\hat{\rank}(y)$
such that $|\hat{\rank}(y)-\rank(y)| \le (\eps/3)\cdot \rank(y)$,
then we can approximate $\rank(x)$ by $\hat{\rank}(y)$ using $y\in \Sstar$ with
$|\rank(y) - \rank(x)| \le (\eps/3)\cdot \rank(x)$ and the multiplicative guarantee for $x$ follows
from
\begin{align*}
	|\hat{\rank}(y) - \rank(x)|
& \le |\hat{\rank}(y)-\rank(y)| + |\rank(y) - \rank(x)|\\
&\le \frac{\eps}{3}\cdot \rank(y) + \frac{\eps}{3}\cdot \rank(x) \\
&\le \left(\frac{\eps}{3}\cdot (1 + \frac{\eps}{3}) + \frac{\eps}{3}\right)\cdot \rank(x)\\
&\le \eps\cdot \rank(x)\,.
\end{align*}

It remains to ensure that our algorithm provides a good-enough rank estimate for any $y\in \Sstar$.
We apply Theorem \ref{thm:mergeability-full-intro} with error parameter $\eps' = \eps/3$ and with failure probability set to
$\delta' = \delta / |\Sstar| = \Theta\left(\delta \cdot \eps/\log(\eps n)\right)$.
By the union bound, with probability at least $1-\delta$, 
the resulting sketch satisfies the $(1 \pm \eps/3)$-multiplicative error guarantee for any item in $\Sstar$.
In this event, the previous paragraph implies that the $(1\pm \eps)$-multiplicative guarantee holds for all $x \in \mathcal{U}$.
The space bound follows from Theorem \ref{thm:mergeability-full-intro} with $\eps'$ and $\delta'$ as above.
\end{proof}

\end{document}